\newtheorem*{theorem*}{Theorem}
\newtheorem*{lemma*}{Lemma}
\newtheorem*{corollary*}{Corollary}
\newcommand{\Tr}{\text{Tr}}
\newtheorem{theorem}{Theorem}
\newtheorem{lemma}{Lemma}
\newtheorem{corollary}{Corollary}
\begin{document}

\title{Hypergraph framework for irreducible noncontextuality inequalities from logical proofs of the Kochen-Specker theorem}
\author{Ravi Kunjwal}

\email{quaintum.research@gmail.com}
\affiliation{Perimeter Institute for Theoretical Physics,\\ 31 Caroline Street North, Waterloo, Ontario, Canada, N2L 2Y5,}
\affiliation{Centre for Quantum Information and Communication, \'Ecole polytechnique de Bruxelles, 
	CP 165, Universit\'e libre de Bruxelles, 1050 Brussels, Belgium.}
\orcid{0000-0002-3978-5971}

\date{\today}                                           

\begin{abstract}
Kochen-Specker (KS) theorem reveals the inconsistency between quantum theory and any putative underlying model of it satisfying the constraint of {\em KS-noncontextuality}.  A logical proof of the KS theorem is one that relies {\em only} on the compatibility relations amongst a set of projectors (a {\em KS set}) to witness this inconsistency. These compatibility relations can be represented by a hypergraph, referred to as a {\em contextuality scenario}. Here we consider contextuality scenarios that we term {\em KS-uncolourable}, e.g., those which appear in logical proofs of the KS theorem. We introduce a hypergraph framework to obtain noise-robust witnesses of contextuality from such scenarios. 

Our approach builds on the results of R.~Kunjwal and R.~W.~Spekkens, \href{https://doi.org/10.1103/PhysRevLett.115.110403}{Phys.~Rev.~Lett.~115, 110403 (2015)}, by providing new insights into the relationship between the structure of a contextuality scenario and the associated noise-robust noncontextuality inequalities that witness contextuality.  
The present work also forms a necessary counterpart to the framework presented in R.~Kunjwal, \href{https://doi.org/10.22331/q-2019-09-09-184}{Quantum 3, 184 (2019)}, which only applies to {\em KS-colourable} contextuality scenarios, i.e., those which do not admit logical proofs of the KS theorem but do admit statistical proofs. 

We rely on a single hypergraph invariant, defined in R.~Kunjwal, \href{https://doi.org/10.22331/q-2019-09-09-184}{Quantum 3, 184 (2019)}, that appears in our contextuality witnesses, namely, the {\em weighted max-predictability}. The present work can also be viewed as a study of this invariant. Significantly, unlike the case of R.~Kunjwal, \href{https://doi.org/10.22331/q-2019-09-09-184}{Quantum 3, 184 (2019)}, {\em none} of the graph invariants from the graph-theoretic framework for KS-contextuality due to Cabello, Severini, and Winter (the ``CSW framework",\href{https://doi.org/10.1103/PhysRevLett.112.040401}{~Phys.~Rev.~Lett.~112, 040401 (2014)}) are relevant for our noise-robust noncontextuality inequalities.
\end{abstract}

\pacs{03.65.Ta, 03.65.Ud}
\maketitle
\tableofcontents{}

\section{Introduction}
The Kochen-Specker (KS) theorem \cite{KS67} stands out as a fundamental insight into the nature of quantum measurements, formalizing the fact that these measurements cannot always be understood as merely revealing pre-existing values of physical quantities. However, the relevance of the KS theorem for real-world physics with finite-precision measurements has been a subject of intense controversy in the past \cite{meyer99,kent99,cliftonkent00,barrettkent04}. Recent work \cite{KunjSpek, exptlpaper, peresmerminksw, pusey, KunjSpek17, schmidWS} has taken the first steps towards turning the insight of the Kochen-Specker theorem into operational constraints --- or noncontextuality inequalities --- that are robust to noise and therefore experimentally testable. These inequalities do not presume the structure of quantum measurements --- in particular, that they are projective --- in their derivation, relying only on operational constraints that can be verified in an experiment and make sense, in particular, for nonprojective measurements in quantum theory.\footnote{Note that, strictly speaking, the traditional assumption of KS-noncontextuality \cite{KS67, kunjwalthesis} {\em can} be applied to arbitrary measurements if one doesn't care to justify outcome determinism from noncontextuality \`a la Ref.~\cite{Spe05}, but even so, it fails to make sense as a notion of classicality for nonprojective measurements in quantum theory, e.g., for trivial POVMs. See Section I and Appendix A of Ref.~\cite{robustcsw} for a discussion of this pathology of KS-noncontextuality as a notion of classicality. Also see Refs.~\cite{odum, finegen} for a more in-depth discussion of these issues, particularly the status of Fine's theorem \cite{fine} in the case of noncontextuality.} They are grounded in the generalized framework for contextuality proposed by Spekkens \cite{Spe05}. This framework is motivated by a methodological principle underlying the Kochen-Specker theorem: namely, noncontextuality as an application of the principle of ontological identity of operational indiscernables, or as Spekkens sometimes calls it, {\em Leibnizianity} \cite{Spekkenstalk,leibnizpaper}.

Recent developments in this operational approach to noncontextuality \`a la Spekkens have led to a plethora of noise-robust noncontextuality inequalities. We know examples of these for geometric KS constructions such as the 18 ray construction due to Cabello, Estebaranz, and Garc\'ia-Alcaine (CEGA)~\cite{KunjSpek,Cabello18ray} as well as algebraic constructions due to Peres and Mermin \cite{peresmerminksw,peres,mermin}. We also have novel examples that have no analogue in the traditional Kochen-Specker paradigm, such as the fair coin flip inequality \cite{exptlpaper}, the robust noncontextuality inequalities due to Pusey \cite{pusey} for the simplest nontrivial scenario admitting contextuality that is possible in the Spekkens approach, and the algorithmic approach to handling prepare-and-measure experiments with specified operational equivalences \cite{schmidWS}.

In this paper, which builds upon the conceptual ideas outlined in Ref.~\cite{KunjSpek}, we provide a hypergraph framework for obtaining noise-robust noncontextuality inequalities for any Kochen-Specker construction that is KS-uncolourable, i.e., {\em logical} proofs \cite{KunjSpek17} of the Kochen-Specker theorem in the style of Ref.~\cite{KS67}. The case of statistical proofs of the Kochen-Specker theorem --- such as the argument due to Klyachko et al.~\cite{KCBS} and the general graph-theoretic framework of Ref.~\cite{CSW} due to Cabello, Severini, and Winter (CSW) --- has already been formalized within the Spekkens framework \cite{Spe05} in two previous contributions: Ref.~\cite{KunjSpek17} provides the conceptual argument leading to this formalization while Ref.~\cite{robustcsw} provides a hypergraph-theoretic framework that obtains noise-robust noncontextuality inequalities from arbitrary statistical proofs of the KS theorem by going beyond the CSW framework \cite{CSW}. The approach we develop in this paper allows one to identify the physical quantities that one can expect to be constrained by the assumption of noncontextuality (and why) in KS-uncolourable contextuality scenarios. This is in contrast to the approach adopted in Refs.~\cite{KunjSpek,peresmerminksw,schmidWS} which require explicit enumeration of all the vertices of the polytope of probability assignments possible on the scenario and/or performing quantifier elimination. To circumvent this requirement, we rely on a characterization of extremal probabilistic models on contextuality scenarios proved in Ref.~\cite{AFLS} and a mapping from graphs to hypergraphs (for a family of contextuality scenarios) that we will introduce in this paper. 

The structure of the paper is as follows: In Section 2, we recall some definitions that will be used throughout the paper. In Section 3, we define a mapping from graphs to hypergraphs that can be used to generate KS-uncolourable hypergraphs; we obtain many known KS-uncolourable hypergraphs as examples. Section 4 defines a parameterization of contextuality scenarios that is used to obtain a characterization of KS-uncolourability for them. Section 5 recalls a theorem on extremal probabilistic models on contextuality scenarios, proved in Ref.~\cite{AFLS}, that will be crucial to obtaining our noncontextuality inequalities. In Section 6, we outline our general framework which is applicable to any KS-uncolourable contextuality scneario. In particular, Eq.~\eqref{ncineqschematic} provides the general form of noise-robust noncontextuality inequalities we obtain. We then use the results developed so far to obtain irreducible noncontextuality inequalities for some families of KS-uncolourable contextuality scenarios: in contrast to the inequality of Ref.~\cite{KunjSpek}, these inequalities cannot be reduced to any simpler ones. We prove a general theorem characterizing these inequalities for a whole family of KS-uncolourable contextuality scenarios. Table \ref{tabirrMISC} summarizes features of this family of contextuality scenarios that determine the structure of irreducible noncontextuality inequalities. Section 7 concludes with a summary of the framework and open questions that merit further research.

\section{Definitions}

\subsection{Operational theories}
We will be concerned with prepare-and-measure experiments in our tests of noncontextuality: that is, we imagine a preparation device as a source of a system that is subjected (following its preparation) to a measurement procedure carried out by a measurement device (see Fig.~\ref{fig1}).
\begin{figure}[htb!]
\centering
\includegraphics[scale=0.4]{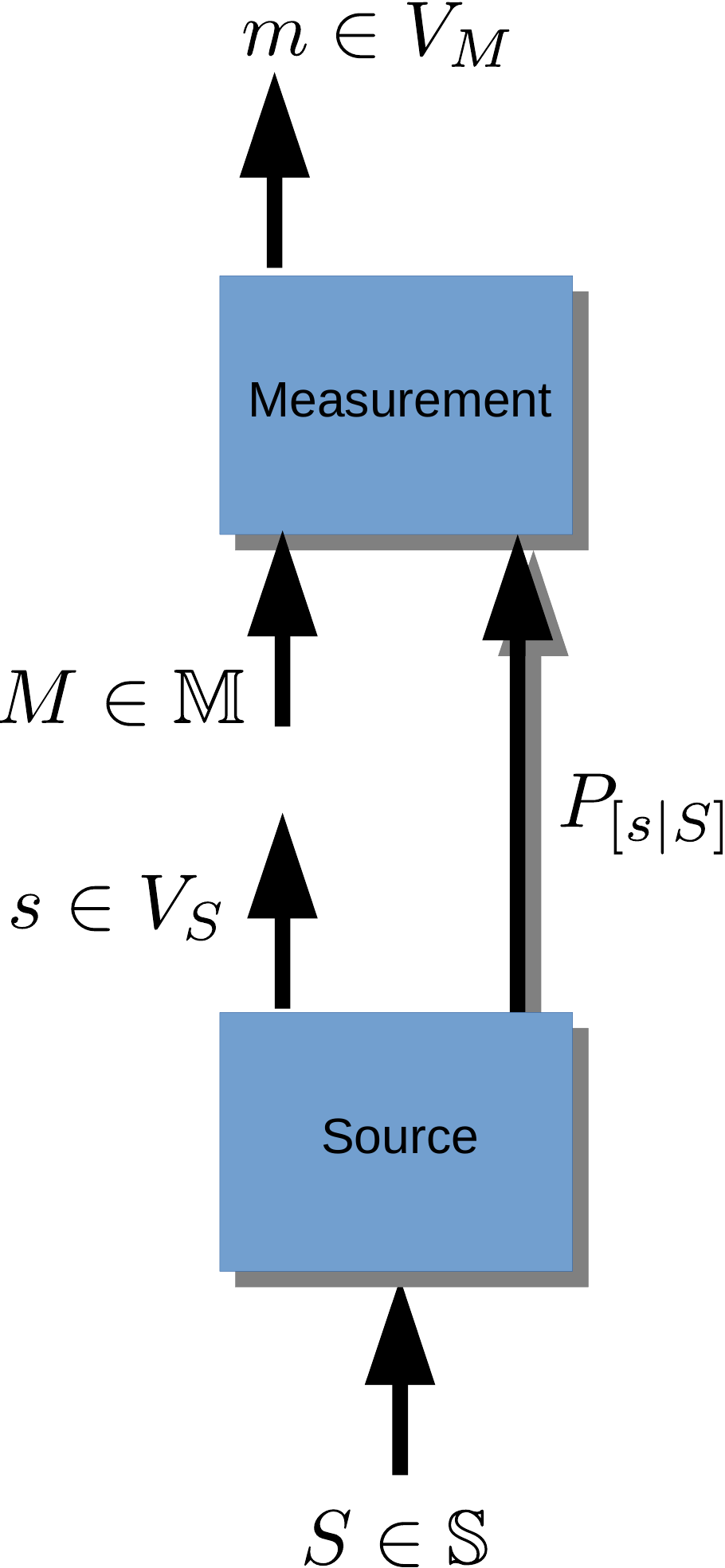}
\caption{A schematic of the prepare-and-measure experiment: the source setting $S\in\mathbb{S}$ produces a source outcome $s\in V_S$ and a system prepared according to preparation $P_{[s|S]}$ that is then subjected to the measurement device with measurement setting $M\in\mathbb{M}$ which then outputs a measurement outcome $m\in V_M$. The joint probability of source and measurement outcomes is given by $p(m,s|M,S)$. The source outcome $s$ occurs with probability $p(s|S)$ and setting $S$ prepares the ensemble $\{P_{[s|S]},p(s|S)\}_{s\in V_S}$. Time goes up: we assume that future settings/outcomes do not influence past settings/outcomes, so that $p(m,s|M,S)=p(m|M,S,s)p(s|M,S)=p(m|M,S,s)p(s|S)$.}
\label{fig1}
\end{figure}
The preparation device has many possible {\em source settings} $S\in\mathbb{S}$, each $S$ specifying a particular ensemble of preparation procedures labelled by (classical) {\em source outcomes} $s\in V_S$, where $V_S$ is the set of source outcomes for source setting $S$. We call $[s|S]$ a {\em source event}. Hence, each preparation procedure is denoted by $P_{[s|S]}$, corresponding to the source event $[s|S]$: that is, the system is prepared according to $P_{[s|S]}$ with probability $p(s|S)\in[0,1]$, where $\sum_{s\in V_S}p(s|S)=1$, for every choice of setting $S\in\mathbb{S}$.
The ensemble of preparation procedures associated with the source setting $S$ is then given by $\{P_{[s|S]},p(s|S)\}_{s\in V_S}$. 

Similarly, the measurement device has many possible {\em measurement settings} $M\in\mathbb{M}$, each $M$ specifying 
a particular measurement procedure with many possible {\em measurement outcomes} $m\in V_M$, where $V_M$ is the set of values the 
measurement device can output when the measurement setting is $M$ and a system prepared by some source is an input to the 
measurement device. 
We will use $[m|M]$ to denote the {\em measurement event} that the outcome $m$ was witnessed for measurement setting $M$.

The joint probability of a particular outcome $m$ for measurement setting $M$ and a particular outcome 
$s$ for source setting $S$ when the input to the measurement device is a system prepared according to procedure $P_{[s|S]}$ 
is given by $p(m,s|M,S)\in[0,1]$, where $\sum_{m,s}p(m,s|M,S)=1$. One can view this joint probability as composed of two pieces:
$$p(m,s|M,S)=p(m|M,S,s)p(s|S),$$
where $p(m|M,S,s)$ is the conditional probability of outcome $m$ for measurement $M$ when the system is prepared according to 
procedure $P_{[s|S]}$ and $p(s|S)$ is the conditional probability that the system is indeed prepared according to $P_{[s|S]}$ for the source (setting) $S$.

An operational theory is therefore a specification of the triple $(\mathbb{S},\mathbb{M},p)$, where $\mathbb{S}$ is the set of source settings
in the operational theory,
$\mathbb{M}$ is the set of measurement settings, and $p:V_M\times V_S\times\mathbb{M}\times\mathbb{S}\rightarrow[0,1]$ is a function that specifies the joint
probability $p(m,s|M,S)$ that a given source setting $S\in\mathbb{S}$ and measurement setting $M\in\mathbb{M}$ produce respective outcomes $s\in V_{S}$ and $m\in V_{M}$ when the system prepared by the preparation device is fed to the measurement device, 
and where $\sum_{m,s}p(m,s|M,S)=1$ for all $M\in\mathbb{M}$, $S\in\mathbb{S}$.

\subsection{Ontological models}
An ontological model of an operational theory seeks to provide an explanatory framework for its predictions, grounding them in 
intrinsic properties of physical systems. 
All such properties
of a system are presumed to be encoded in its {\em ontic state} $\lambda\in\Lambda$, where $\Lambda$ is the set of all possible ontic states 
of the system. A source event $[s|S]$ prepares the system in ontic state $\lambda$ with probability $\mu(\lambda|S,s)\in[0,1]$,
where $\sum_{\lambda}\mu(\lambda|S,s)=1$. 
On measuring the system in ontic state $\lambda$, the measurement $M$ produces outcome $m$ with
probability $\xi(m|M,\lambda)\in[0,1]$, where $\sum_m\xi(m|M,\lambda)=1$ for all $\lambda\in\Lambda$. We then have:
\begin{equation}\label{empadeq1}
 p(m|M,S,s)=\sum_{\lambda\in\Lambda}\xi(m|M,\lambda)\mu(\lambda|S,s).
\end{equation}
We can use Bayes' theorem to write $\mu(s|S,\lambda)=\frac{\mu(s,\lambda|S)}{\mu(\lambda|S)}$.
Noting that $\mu(s,\lambda|S)=\mu(\lambda|S,s)p(s|S)$, we have $$\mu(s|S,\lambda)=\frac{\mu(\lambda|S,s)p(s|S)}{\mu(\lambda|S)},\textrm{ or}$$
\begin{equation}
 \mu(\lambda|S,s)=\frac{\mu(s|S,\lambda)\mu(\lambda|S)}{p(s|S)}.
\end{equation}
Substituting this in Eq.~(\ref{empadeq1}), we have
\begin{equation}\label{empadeq}
 p(m,s|M,S)=\sum_{\lambda\in\Lambda}\xi(m|M,\lambda)\mu(s|S,\lambda)\mu(\lambda|S),
\end{equation}
which describes how the operational joint probabilities of the prepare-and-measure experiment must be reproduced by the ontological 
model. Here $\xi(m|M,\lambda)$ is the {\em predictive} probability that a particular outcome $m$ {\em will occur} for a given measurement setting $M$
when the input ontic state is $\lambda$ while $\mu(s|S,\lambda)$ is the {\em retrodictive} probability that a particular outcome $s$ {\em occurred}
for a given source setting $S$ which produced the ontic state $\lambda$. $\mu(\lambda|S)$ is the probability that $\lambda$ was sampled by the source setting $S$ at all, ignoring its outcomes $s\in V_S$.

\subsection{Operational equivalences and Noncontextuality}
\subsubsection{Operational equivalences}Two source events $[s|S]$ and $[s'|S']$ are said to be operationally equivalent,
denoted $[s|S]\simeq [s'|S']$, if:
\begin{align}
\forall [m|M]&: p(m,s|M,S)=p(m,s'|M,S'),\nonumber\\
&\textrm{ where } M\in\mathbb{M}, m\in V_M.
\end{align}
Two source settings $S$ and $S'$ are said to be operationally equivalent, denoted $[\top|S]\simeq [\top|S']$, if:
\begin{align}
 \forall [m|M]&:\sum_{s\in V_S}p(m,s|M,S)=\sum_{s'\in V_{S'}}p(m,s'|M,S'),\nonumber\\
 &\textrm{ where }M\in\mathbb{M}, m\in V_M.
\end{align}
The symbol ``$\top$" denotes coarse-graining over all outcomes, i.e., the $[\top|S]$ is the source event that at least one outcome in $V_S$ occurred for source setting $S$. In this paper, we will only make use of such coarse-grained operational equivalences between source settings, that is, ones where we sum over the classical outcomes of the sources.

Two measurement events $[m|M]$ and $[m'|M']$ are said to be operationally equivalent, denoted $[m|M]\simeq[m'|M']$, if:
\begin{align}
 \forall [s|S]&:p(m,s|M,S)=p(m',s|M',S),\nonumber\\
 &\textrm{ where }S\in\mathbb{S}, s\in V_S.
\end{align}

Note that, because of normalization, any two coarse-grained measurement settings $M$ and $M'$ are always operationally equivalent, i.e.,
$\sum_{m\in V_M}p(m,s|M,S)=\sum_{m'\in V_{M'}}p(m',s|M',S)=p(s|S)$ for all $[s|S]$. It's only in the case of sources that the operational equivalence after coarse-graining is nontrivial, i.e., it needs to be verified experimentally.

\subsubsection{Context}
Any distinction between operationally equivalent experimental procedures --- preparations or measurements --- is called a {\em context}.

\subsubsection{Noncontextuality}
We define the notion of noncontextuality following the proposal by Spekkens \cite{Spe05}, wherein
the assumption of noncontextuality requires operationally equivalent experimental procedures to be represented identically in the ontological
model. That is, differences of context between operationally equivalent experimental procedures should be as irrelevant in the ontological model
as they are in the operational theory. Indeed, this indifference to variations in context -- that is, {\em noncontextuality} -- in the ontological model is meant to account for the indifference to variations in context -- that is, {\em operational equivalence} -- that holds in the operational theory. Our goal is to put this hypothesis of noncontextuality to experimental test by figuring out operational constraints -- noncontextuality inequalities -- that it imposes on the operational statistics.

Thus, the assumption of preparation noncontextuality applied to operationally equivalent source events, $[s|S]\simeq [s'|S']$, reads
\begin{equation}
 \mu(\lambda|S,s)=\mu(\lambda|S',s')\quad\forall\lambda\in\Lambda.
\end{equation}
Applied to the operational equivalence $[\top|S]\simeq[\top|S']$, preparation noncontextuality reads
\begin{align}
 \forall\lambda\in\Lambda&:\sum_{s\in V_S}p(s|S)\mu(\lambda|S,s)\nonumber\\
 &=\sum_{s'\in V_{S'}}p(s'|S')\mu(\lambda|S',s'),\textrm{ or}
\end{align}
\begin{equation}
 \mu(\lambda|S)=\mu(\lambda|S')\quad\forall\lambda\in\Lambda,
\end{equation}
where $$\mu(\lambda|S)\equiv\sum_{s\in V_S}\mu(s,\lambda|S)=\sum_{s\in V_S}p(s|S)\mu(\lambda|S,s),$$ and similarly 
$$\mu(\lambda|S')\equiv\sum_{s'\in V_{S'}}\mu(s',\lambda|S')=\sum_{s'\in V_{S'}}p(s'|S')\mu(\lambda|S',s').$$
We will only make use of 
this type of preparation noncontextuality in this paper.

The assumption of measurement noncontextuality applied to operationally equivalent measurement events $[m|M]\simeq[m'|M']$ reads
\begin{equation}
 \xi(m|M,\lambda)=\xi(m'|M',\lambda)\quad\forall\lambda\in\Lambda.
\end{equation}

\subsection{Contextuality scenarios and probabilistic models on them}
In keeping with the definitions of Ref.~\cite{AFLS}, we introduce the following notions:
\begin{itemize}
 \item {\em Contextuality scenario}: A contextuality scenario is a hypergraph $H$ where the nodes of the hypergraph $w\in W(H)$ denote measurement outcomes and 
hyperedges denote measurements $f\in F(H)\subseteq 2^{W(H)}$ such that $\bigcup_{f\in F(H)}=W(H)$. We will assume the set of nodes $W(H)$ is finite
and, therefore, so is the set of hyperedges $F(H)$.

A node shared between multiple hyperedges represents a measurement outcome with multiple possible measurement contexts in which it can occur. This is the notion of a (measurement) context that is used in logical proofs of the Kochen-Specker theorem relying on KS-uncolourability \cite{KS67,Cabello18ray}. We will be concerned with this notion of measurement context in this paper.\footnote{Spekkens contextuality \cite{Spe05} encompasses the measurement contexts relevant in the Kochen-Specker paradigm but does not restrict itself to them (see, e.g., \cite{exptlpaper}). In particular, it allows for a notion of preparation contexts which has been previously used in Ref.~\cite{KunjSpek} -- the conceptual precursor to the present work -- and which we will use in this paper.}

\item {\em $n$-hypercycle}: An $n$-hypercycle is a collection of $n$ nodes, $\{w_i\}_{i=1}^n$, in a hypergraph $H\equiv(W,F)$ such that 
for all $i\in\{1,2,\dots,n\}$, $\{w_i,w_{i+1}\}\subseteq f$ for some $f\in F$, but
no other subsets of $\{w_i\}_{i=1}^n$ appear in a hyperedge of $H$. Note that we have assumed addition modulo $n$, i.e., $n+1=1$, while labelling the nodes.

Note that every $n$-hypercycle in a hypergraph is also an $n$-cycle, where by ``$n$-cycle" we refer to a collection of $n$ nodes, $\{w_i\}_{i=1}^n$, such that 
for all $i\in\{1,2,\dots,n\}$, $\{w_i,w_{i+1}\}\subseteq f$ for some $f\in F$. On the other hand, not every $n$-cycle in a hypergraph is an $n$-hypercycle.

When the hypergraph $H$ is a graph, i.e., every hyperedge $f\in F$ contains exactly two nodes from $W$, then every $n$-cycle of $H$ is also an $n$-hypercycle.\footnote{
Note that we are not necessarily imagining that $H$ itself is isomorphic to an $n$-cycle graph. $H$ can be any arbitrary hypergraph and the question is if it admits subhypergraphs that are $n$-hypercycles. The $n$-hypercycles that are contained in $H$ are distinct from (and, in general, fewer in number than) the $n$-cycles contained in it. For example, consider the $6$-vertex hypergraph $H=\{\{1,2,3\},\{3,4,5\},\{5,6,1\}\}$. This hypergraph contains the following $n$-cycles: $\{\{1,2\},\{2,3\},\{1,3\}\}$, $\{\{3,4\},\{4,5\},\{3,5\}\}$, $\{\{5,6\},\{6,1\},\{1,5\}\}$, and $\{\{1,3\},\{3,5\},\{1,5\}\}$. However, only one of these $n$-cycles is an $n$-hypercycle, namely, $\{\{1,3\},\{3,5\},\{1,5\}\}$, since its vertices are not all contained in a single hyperedge. When the hypergraph is a graph, e.g., $H'=\{\{1,3\},\{3,5\},\{1,5\}\}$, then all its $n$-cycles are also $n$-hypercycles.
}

\item {\em Probabilistic model}: A probabilistic model on a contextuality scenario is an assignment of probabilities to the nodes of 
 a hypergraph, $p:W(H)\rightarrow[0,1]$, such that the hyperedges are normalized, i.e., $\sum_{w\in f}p(w)=1$ for all $f\in F(H)$. We denote the set of such general probabilistic models on $H$ by $\mathcal{G}(H)$.
 
Viewed operationally, any probabilistic model on the contextuality scenario specifies the probabilities of measurement outcomes when the measurements are implemented on some preparation in an operational theory. A given operational theory may only allow a certain subset of all possible probabilistic models on a contextuality scenario when the measurements in the scenario are implemented on a preparation possible in the operational theory. Indeed, that is the premise of Ref.~\cite{AFLS}, where possible probabilistic models on a given contextuality scenario are classified as {\em classical}, {\em quantum}, or {\em general probabilistic} models. The full set of possible probabilistic models on the contextuality scenario, corresponding to a polytope, constitutes the set of general probabilistic models. 

We will be interested in the polytope of general probabilistic models in this paper. In particular, we do {\em not} seek to classify probabilistic models on a contextuality scenario \`a la Acin, Fritz, Leverrier, and Sainz (AFLS) \cite{AFLS}. Instead, we will be interested in properties of these probabilistic models that become crucial {\em only} in the operational approach \`a la Spekkens \cite{Spe05}, having no analogue in the AFLS framework. This is to be expected since the AFLS framework is a formalization of the Kochen-Specker paradigm and we seek to ask questions that necessitate a reformulation and extension of this paradigm \`a la Spekkens. For the case of {\em statistical} proofs of the Kochen-Specker theorem, this has been achieved in Refs.~\cite{KunjSpek17, robustcsw}. This paper seeks to achieve this for {\em logical} proofs of the Kochen-Specker theorem, based on the ideas conceptualized in Ref.~\cite{KunjSpek}. Our goal here is to provide technical tools concerning the sorts of hypergraph properties -- distinct from the ones discussed in, for example, Refs.~\cite{CSW, AFLS} -- that are relevant for the noise-robust noncontextuality inequalities we derive in the operational approach \`a la Spekkens. These hypergraph properties are easily captured in a new hypergraph invariant that we defined in Ref.~\cite{robustcsw} --- the {\em weighted max-predictability} $\beta(\Gamma,q)$ for a contextuality scenario $\Gamma$ with hyperedges weighted by probabilities according to the distribution $q$ ---  and which we will define in due course in this paper. We will use $\beta(\Gamma,q)$ to obtain noise-robust noncontextuality inequalities arising from any contextuality scenario $\Gamma$ yielding a logical proof of the KS theorem.

Note that the probability assigned to a measurement outcome (or node) by {\em any} probabilistic model on the hypergraph does not vary with the measurement context (or hyperedge) that the measurement outcome may be considered a part of: operationally, this means that the operational theories that lead to various probabilistic models on contextuality scenarios exhibit nontrivial operational equivalences between measurement outcomes of different measurements. These operational equivalences take the form of the {\em same} node being shared between two (or more) hyperedges and are represented in their entirety by the structure of the hypergraph denoting the contextuality scenario. The assumption of measurement noncontextuality -- as we have defined it -- will be applied to these operational equivalences implicit in the contextuality scenario.

 \item {\em Kochen-Specker (KS) uncolourable scenario}: A KS-uncolourable scenario is a contextuality scenario which does not admit a probabilistic model that is deterministic, i.e., $p:W(H)\rightarrow\{0,1\}$, even though it may admit indeterministic probabilistic models.\footnote{For readers familiar with the notion of ``strong contextuality" in the sheaf-theoretic approach of Abramsky and Brandenburger \cite{AB}, we note that KS-uncolourability is a property of the contextuality scenario itself rather than (as in the case of ``strong contextuality") of a particular probabilistic model on it. In this sense, KS-uncolourability is distinct from {\em strong contextuality}. However, the two notions {\em are} related in the sense that KS-uncolourability of a contextuality scenario implies strong contextuality for {\em all} probabilistic models on it. Strong contextuality is the property of a probabilistic model: namely, that the probabilistic model does not admit a convex decomposition that has in its support a deterministic model(s). Extremal probabilistic models on a KS-colourable contextuality scenario that are indeterministic are, by definition, strongly contextual. On the other hand, {\em every} probabilistic model on a KS-uncolourable contextuality scenario (extremal or not) is strongly contextual.}
 
 A contextuality scenario which does admit deterministic probabilistic models is called {\em KS-colourable}.
 
 \item {\em Kochen-Specker (KS) set}: A KS set is a set of rank 1 projectors, $\{\Pi_w\}_{w\in W(H)}$, 
 on some Hilbert space $\mathcal{H}$ that can be associated with the nodes $W(H)$ of a KS-uncolourable scenario such that 
 $\sum_{w\in f}\Pi_w=\mathbb{I}$ for all $f\in F(H)$ and $\Pi_w\Pi_{w'}=\delta_{w,w'}\Pi_w$ for any $w,w'\in f$. 
 Here $\mathbb{I}$ is the identity operator on $\mathcal{H}$. 
 
 Each KS set corresponds to an infinity of possible probabilistic models on a KS-uncolourable contextuality scenario, each given by $p(w)=\Tr\rho\Pi_w$
 for all $w\in W(H)$, for some density operator $\rho$ on $\mathcal{H}$.

\item {\em (Induced) Subscenario:} Given a contextuality scenario $H$ with nodes $W(H)$ and contexts $F(H)$, the subscenario $H_S$ induced 
by a subset of nodes $S\subseteq W(H)$ is given by: $W(H_S)\equiv S$ and $F(H_S)\equiv \{f\cap S| f\in F(H)\}$.

That is, $H_S$ is constructed by dropping all the nodes in $W(H)\backslash S$ and restricting the hyperedges in $F(H)$ to their 
intersection with the nodes in $S$. 

{\em Remark on probabilistic models on $H_S$: }
Note that an induced subscenario $H_S$ admits a valid probabilistic model only if 
$f\cap S\neq \varnothing$ for all contexts $f\in F(H)$ because otherwise the set of hyperedges in $H_S$ will include 
empty sets which cannot be normalized, rendering a probabilistic model impossible on $H_S$ (that is, $\mathcal{G}(H_S)=\varnothing$).
In the language of hypergraph theory, $S$ must be a {\em transversal} (or ``hitting set'') of $H$.

\item {\em Extension of a probabilistic model on an induced subscenario ($H_S$) to the parent contextuality scenario ($H$)}: Every probabilistic model on $H_S$, say $p_S$, can be {\em extended} to a probabilistic model
$p$ on $H$ as follows: $p(w)=p_S(w)\quad\forall w\in S$ and $p(w)=0$ otherwise. $p$ is said to be an {\em extension} of $p_S$
from $H_S$ to $H$.\footnote{Note that the definition of an induced subscenario ($H_S$) and the extension of a probabilistic model from $H_S$ to $H$ is adopted from Ref.~\cite{AFLS}, specifically Definition 2.5.1 in Ref.~\cite{AFLS}.}
\end{itemize}

We now recall Theorem 2.5.3 of Ref.~\cite{AFLS}, a characterization of extremal probabilistic models on a contextuality scenario, that we will use:
\begin{theorem}\label{extremals}
(Theorem 2.5.3 in \cite{AFLS})\\
$p\in\mathcal{G}(H)$ is extremal if and only if it is the extension of a unique probabilistic model 
$p_S$ on an induced subscenario $H_S$ (that is, $\mathcal{G}(H_S)=\{p_S\}$) to the probabilistic model $p$ on $H$.
\end{theorem}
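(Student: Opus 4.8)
The plan is to treat $\mathcal{G}(H)$ as what it manifestly is: a polytope sitting inside $\mathbb{R}^{W(H)}$, cut out by the finitely many linear equalities $\sum_{w\in f}p(w)=1$ (one per hyperedge $f\in F(H)$) together with the inequalities $p(w)\ge 0$. Here ``extremal'' means ``vertex'', i.e.\ $p$ cannot be written as $\tfrac12(p_1+p_2)$ with $p_1,p_2\in\mathcal{G}(H)$ and $p_1\ne p_2$, and the statement becomes the standard vertex characterisation specialised to this polytope: a vertex is pinned down by the equalities together with the inequalities that are tight at it, and ``tight inequality'' here means precisely ``$p(w)=0$'', i.e.\ $w$ lies outside the support $S\equiv\{w\in W(H):p(w)>0\}$. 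Dropping the nodes outside $S$ and restricting the hyperedges to their intersection with $S$ is exactly the passage to the induced subscenario $H_S$ recalled above, so the content of the theorem is: $p$ is a vertex iff $H_S$ admits a unique probabilistic model.

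For the ``if'' direction, suppose $S\subseteq W(H)$ is such that $\mathcal{G}(H_S)=\{p_S\}$ and $p$ is the extension of $p_S$ (so $p$ vanishes off $S$). If $p=\tfrac12(p_1+p_2)$ with $p_1,p_2\in\mathcal{G}(H)$, then nonnegativity forces $p_1(w)=p_2(w)=0$ for every $w\notin S$; hence the restrictions $p_1|_S$ and $p_2|_S$ are well-defined on $W(H_S)=S$, and for each $f\in F(H)$ one has $\sum_{w\in f\cap S}p_i|_S(w)=\sum_{w\in f}p_i(w)=1$, so $p_1|_S,p_2|_S\in\mathcal{G}(H_S)$. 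Uniqueness gives $p_1|_S=p_2|_S=p_S$, hence $p_1=p_2=p$, so $p$ is extremal. Along the way one uses that every hyperedge of $H_S$ is nonempty; this is automatic since $\sum_{w\in f\cap S}p_S(w)=1$ already shows $f\cap S\ne\varnothing$, i.e.\ $S$ is a transversal of $H$.

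For the ``only if'' direction, let $p\in\mathcal{G}(H)$ be extremal and set $S\equiv\{w:p(w)>0\}$. As above, $p|_S\in\mathcal{G}(H_S)$ and $p$ is its extension; it remains to show $\mathcal{G}(H_S)=\{p|_S\}$. Suppose some $q\in\mathcal{G}(H_S)$ is distinct from $p|_S$, and let $\tilde q\in\mathcal{G}(H)$ be its extension, so $\tilde q\ne p$ and $\tilde q$ is supported on $S$. The difference $d\equiv\tilde q-p$ is then supported on $S$, is nonzero, and satisfies $\sum_{w\in f}d(w)=0$ for all $f\in F(H)$. Because $p(w)>0$ for every $w\in S\supseteq\mathrm{supp}(d)$, there is $\epsilon>0$ with $p\pm\epsilon d\ge 0$ coordinatewise; these perturbations still obey every normalisation equality (their $f$-sums are $1\pm\epsilon\cdot 0$) and vanish off $S$, so $p\pm\epsilon d\in\mathcal{G}(H)$. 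But then $p=\tfrac12\big[(p+\epsilon d)+(p-\epsilon d)\big]$ is a nontrivial convex decomposition, contradicting extremality. Hence $p|_S$ is the unique element of $\mathcal{G}(H_S)$, and $p$ is the extension of a unique probabilistic model on the induced subscenario $H_S$.

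The argument is essentially bookkeeping once the polytope picture is in place; the only points needing care are (a) checking that the restriction and extension operations really land in $\mathcal{G}(H_S)$ and $\mathcal{G}(H)$ respectively — in particular that no hyperedge of $H_S$ becomes empty, i.e.\ that the relevant $S$ is a transversal — and (b) exploiting the strict positivity of $p$ on its own support $S$ to make room for the two-sided perturbation $p\pm\epsilon d$. I expect (b), i.e.\ producing the explicit splitting direction in the ``only if'' direction, to be the main step; everything else follows immediately from the definitions of induced subscenario and extension recalled just above.
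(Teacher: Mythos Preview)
Your proof is correct. It is precisely the standard vertex characterisation for a polytope of the form $\{x\ge 0:\ Ax=b\}$, spelled out for the particular case of $\mathcal{G}(H)$: the active inequalities at $p$ are exactly the coordinates outside $S=\{w:p(w)>0\}$, and the linear system restricted to $S$ having a unique solution is equivalent to $\mathcal{G}(H_S)=\{p_S\}$. Both directions are handled cleanly; the one point worth noting explicitly (which you do) is that the upper bound $p(w)\le 1$ need not be checked separately in the perturbation step, since $\bigcup_{f\in F(H)}f=W(H)$ together with normalisation and nonnegativity already forces it.

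As for comparison with the paper: the paper does \emph{not} supply its own proof of this statement. It is quoted verbatim as Theorem~2.5.3 of Ref.~\cite{AFLS} and used as a black box throughout (e.g.\ in the proof of Theorem~\ref{halfinteger}). So there is no in-paper argument to compare against; your write-up is a self-contained proof of the cited result, and in fact the proof in Ref.~\cite{AFLS} proceeds along essentially the same lines you take here.
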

Hence, there is a one-to-one correspondence between extremal probabilistic models on $H$ and induced subscenarios of $H$ 
with unique probabilistic models. Indeed, as noted in \cite{AFLS}, this means that each extremal probabilistic model $p\in\mathcal{G}(H)$ is in one-to-one 
correspondence with the set of vertices assigned nonzero probability by the extremal probabilistic model, i.e. $S_p\equiv\{w\in W(H)|p(w)\neq 0\}$.\footnote{This fact seems to be closely related to the later work of Abramsky {\em et al.}\cite{nspolytopes}, where the combinatorial structure of no-signalling polytopes is characterized in entirely possibilistic terms. In particular, characterization of the face lattice of the polytope $\mathcal{G}(H)$ of probabilistic models on $H$ (as done in Ref.~\cite{nspolytopes}, albeit within a different formalism) implies a characterization of the vertices of this polytope (as done in Ref.~\cite{AFLS}). The proof of Theorem 2.5.3 in Ref.~\cite{AFLS} provides direct evidence of this connection.}

\section{A family of KS-uncolourable scenarios: the mapping 2Reg($\cdot$)}
We now consider a particular mapping, which we call 2Reg($\cdot$), that often converts a graph to a contextuality scenario that is KS-uncolourable.\footnote{``2Reg'' refers to the fact that contextuality scenarios obtained under this mapping are such that every node of the 
scenario has degree 2, i.e., it appears in two hyperedges or contexts, hence the scenario is 2-regular.}
We will see that many known examples of KS-uncolourable scenarios arise in this way. The mapping 2Reg($\cdot$) is defined in the following
manner:

\begin{itemize}
 \item Input Graph: $G=(V,E)$, where $V=\{v_1,v_2,\dots,v_{|V|}\}$ and $E=\{e_1,e_2,\dots,e_{|E|}\}\subseteq \{\{v_i,v_j\}|i,j\in\{1,\dots,n\}, i\neq j\}$.
 \item Output Hypergraph: 2Reg$(G)\equiv H=(W,F)$, where 
 \begin{itemize}
  \item  each node $w_k\in W$ ($k=1,\dots,|W|$) is defined by a pair of edges in $E$ that share a vertex, i.e.,
 for every pair $\{e_i,e_j\}\subseteq E$ (where $i\neq j$) such that $e_i\cap e_j\neq\varnothing$, there is a corresponding node
$w_k\in W$. The cardinality of $W$, $|W|$, is therefore equal to the number of distinct pairs of edges in $E$ such that 
each pair shares a vertex (in $V$).
\item For every edge $e_i\in E$, we define a corresponding hyperedge $f_i\in F$ such that the nodes in $f_i$ correspond 
precisely to the pairs of edges $\{\{e_i,e_j\}|e_i\cap e_j\neq\varnothing, j\neq i\text{ and } j\in\{1,\dots,n\}\}.$
We have $|F|=|E|$.
 \end{itemize}
\end{itemize}
A key property of such a hypergraph, $H$, generated via 2Reg$(G)$
is that each node $w_k$ (corresponding to a pair $\{e_i,e_j\}$, say) in $W$ appears in exactly two hyperedges ($f_i,f_j\in F$) 
of the hypergraph $H$. That is, 2Reg$(G)$ is a 2-regular hypergraph for any graph $G$. This is simply because every node in $H$ is 
essentially {\em defined} by the intersection of two hyperedges in $H$.\footnote{Note that if $G$ contains (at least) an edge disjoint from the rest of $G$, the mapping 2Reg($\cdot$) results in a set of hyperedges where (at least) one of them (corresponding to the disjoint edge) is empty. Strictly speaking, this is not a hypergraph since it has an empty hyperedge and thus 2Reg($\cdot$) will not result in a contextuality scenario. Therefore, such graphs $G$ will not be of interest to us. Henceforth, we will only consider $G$ for which 2Reg($G$) is a contextuality scenario.}

To see how this mapping works consider an example: the complete bipartite graph $K_{3,3}$ with vertices $V=\{1,2,3,\bar{1},\bar{2},\bar{3}\}$
and edges $E=\{(1\bar{1}),(1\bar{2}),(1\bar{3}),(2\bar{1}),(2\bar{2}),(2\bar{3}),(3\bar{1}),(3\bar{2}),(3\bar{3})\}$ transforms under 2Reg($\cdot$)
to the CEGA hypergraph \cite{Cabello18ray} which, given a realization with $18$ rays in $\mathbb{R}^4$, provides a proof of the KS theorem in $4$ dimensions. See Fig.~\ref{fig2}.
We will see how this mapping works in a forthcoming section on complete bipartite graphs under 2Reg($\cdot$).

\begin{figure}
\includegraphics[scale=0.38]{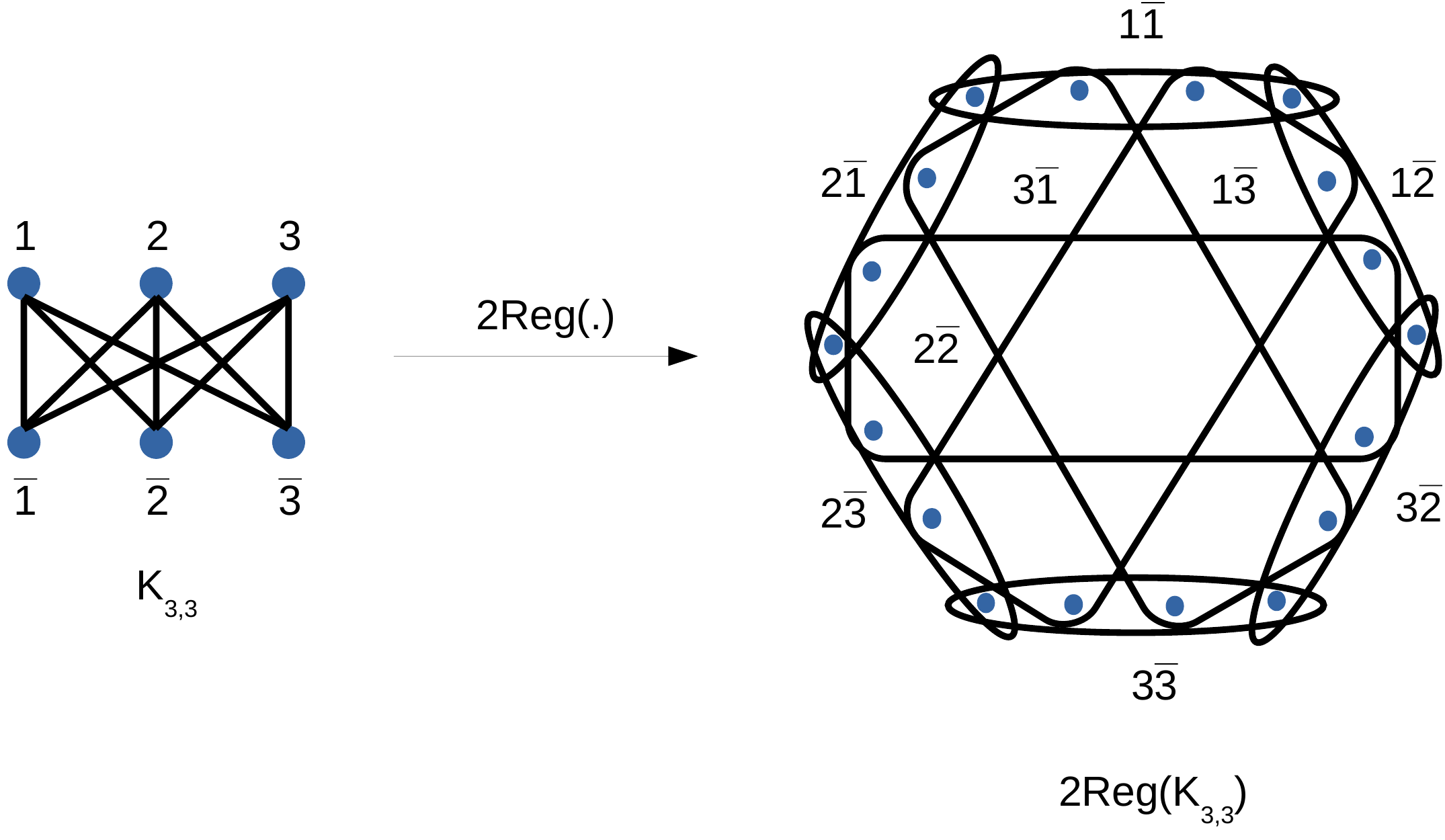}
\caption{The bipartite graph $K_{3,3}$ under the mapping 2Reg($\cdot$).}
\label{fig2}
\end{figure}

\subsection{KS-uncolourability under 2Reg($\cdot$)}
\begin{theorem}\label{thmunc}
Given a graph $G=(V,E)$, the contextuality scenario 2Reg$(G)$ is KS-uncolourable if and only if $|E|$ is odd.
\end{theorem}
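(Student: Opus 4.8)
The plan is to recast a Kochen--Specker colouring of $2\mathrm{Reg}(G)$ as a standard object associated with the line graph $L(G)$ of $G$. By the construction of the mapping, the nodes of $H=2\mathrm{Reg}(G)$ are in bijection with the pairs of edges of $G$ sharing a vertex --- i.e.\ with the edges of $L(G)$ --- while the hyperedges $f_i\in F(H)$ are in bijection with the edges $e_i\in E$ (the vertices of $L(G)$), with $f_i$ being exactly the set of edges of $L(G)$ incident to the vertex $e_i$. Hence a deterministic probabilistic model $p\colon W(H)\to\{0,1\}$ is precisely a set of edges of $L(G)$ meeting every vertex of $L(G)$ exactly once, i.e.\ a perfect matching of $L(G)$; equivalently, it is a way of partitioning $E$ into unordered pairs of edges that share a vertex, i.e.\ a decomposition of $G$ into paths of length two. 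So the content of Theorem~\ref{thmunc} is that $L(G)$ has a perfect matching if and only if $|E|$ is even. (We take $G$ connected here; for disconnected $G$ one uses that $2\mathrm{Reg}(G)$ splits as a disjoint union over the components of $G$ and argues componentwise.)

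For the ``if'' direction ($|E|$ odd $\Rightarrow$ KS-uncolourable) a parity count suffices and no connectivity is needed. Given a deterministic model, define $\sigma\colon E\to E$ by letting $\sigma(e_i)$ be the edge $e_j$ for which the node $\{e_i,e_j\}$ is the (unique) node of $f_i$ carrying value $1$. That node also belongs to $f_j$, where it is likewise the unique value-$1$ node, so $\sigma(e_j)=e_i$; thus $\sigma$ is a fixed-point-free involution of $E$ --- fixed-point-free because $\{e_i,e_i\}$ is not a node --- which forces $|E|$ to be even. Contrapositively, $|E|$ odd admits no deterministic model, i.e.\ $2\mathrm{Reg}(G)$ is KS-uncolourable.

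The real content is the converse ($|E|$ even $\Rightarrow$ KS-colourable), where one must actually exhibit a perfect matching of $L(G)$. The cleanest route I would take is to note that line graphs are claw-free and that, for connected $G$ with at least two edges, $L(G)$ is connected with exactly $|E|$ vertices; when $|E|$ is even, Sumner's theorem --- every connected claw-free graph of even order has a perfect matching --- then provides the required matching, hence the colouring. A more self-contained alternative is a direct induction on $|E|$: take a depth-first spanning tree of $G$ rooted at any vertex, pick a leaf $v$ of maximum depth (so every edge of $G$ at $v$ joins $v$ to an ancestor of $v$), and strip off an appropriate family of paths of length two incident to $v$ --- pairing up all of $v$'s edges when $\deg_G(v)$ is even, and otherwise pairing $v$'s back edges among themselves and disposing of the single leftover tree edge by pairing it with an edge further up the tree --- so as to leave a connected graph with fewer, but still an even number of, edges, to which the inductive hypothesis applies.

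I expect this converse to be the main obstacle. Necessity of the evenness of $|E|$ is the easy parity argument above, but its sufficiency is not a formality: it rests on a genuinely non-trivial matching statement (Sumner's theorem, or equivalently the classical fact that a connected graph decomposes into paths of length two precisely when it has an even number of edges). If one wants a self-contained proof rather than citing that result, the delicate point in the induction is to keep the residual graph simultaneously connected and of even size once the paths at $v$ are removed; this is automatic when $v$ has even degree, but when $\deg_G(v)$ is odd it requires a small case analysis, in particular the case where the subtree hanging below $v$'s parent meets the rest of $G$ in a single bridge, which must be dealt with on its own.
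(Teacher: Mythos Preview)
Your fixed-point-free involution argument for the ``only if'' direction is the paper's parity argument in different clothing: the paper sums the $|E|$ normalization equations, notes that each node contributes to exactly two of them, and obtains an even left-hand side against an odd right-hand side. Both are the same observation that a deterministic model pairs up the hyperedges.

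For the converse, your route is genuinely different---and in fact more careful---than the paper's. The paper does not pass through matchings or Sumner's theorem at all: it simply asserts that the $|E|$ contexts can be cyclically labelled $C_1,\ldots,C_{|E|}$ so that consecutive $C_i,C_{i+1}$ share a node, and then alternates $1,0,1,0,\ldots$ along this even hypercycle. In your language this is the claim that $L(G)$ is Hamiltonian, with the perfect matching obtained by taking every other edge of the Hamiltonian cycle. That assumption is not always warranted (take $G$ to be $K_{1,3}$ with a pendant edge hung off each leaf: the three pendant edges become degree-one vertices of $L(G)$, so no Hamiltonian cycle exists), so your appeal to Sumner's theorem is the more robust justification. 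What the paper's argument buys is brevity and an explicit colouring when the cycle does exist; what yours buys is an argument that covers all connected $G$.

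One small point: your parenthetical ``argue componentwise'' for disconnected $G$ does not quite go through, since $|E|$ even does not force each component to have an even number of edges (two disjoint triangles give $|E|=6$ but two KS-uncolourable $3$-hypercycles). The paper does not address this either, and in practice only connected $G$ are ever used; but as literally stated the biconditional needs a connectivity hypothesis, which you should flag rather than sweep into a parenthesis.
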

\begin{proof}
$|E|$ is odd $\Rightarrow$ 2Reg$(G)$ is KS-uncolourable:
We have $|E|$ normalization equations for any KS-noncontextual assignment of $\{0,1\}$ probabilities to the nodes of the hypergraph 
2Reg$(G)$, since 2Reg$(G)$ has $|E|$ contexts and the $\{0,1\}$-valued assignments to the nodes in each context should
add up to 1; on adding all the normalization equations together,
 we note that since each $\{0,1\}$-valued assignment to a node appears in two different equations we have an even number on the 
 left-hand-side (LHS) of the resulting equation and an odd number ($=|E|$)
 on the right-hand-side (RHS), hence there is no $\{0,1\}$-valued solution to the normalization equations and 2Reg$(G)$
 is KS-uncolourable. 

To prove the converse, we show that
$|E|$ is even $\Rightarrow$ 2Reg$(G)$ is KS-colourable: Note that 2Reg$(G)$
 consists of a set of $|E|$ contexts such that every pair of them with a non-empty intersection shares exactly one node. Let's call these contexts 
 $C_1$, $C_2$, $C_3$,\dots,$C_{|E|}$, labelled such that $C_i$ and $C_{i+1}$ (addition modulo $|E|$, so $|E|+1=1$)
 share a node for all $i\in\{1,2,\dots,|E|\}$. Consider now the even hypercycle of size $|E|$ given by the contexts 
 $$C_1 - C_2 - C_3 - \dots - C_{|E|} - C_1$$ and assign the probability 1 to node defined by the intersection of $C_1$ and $C_2$ (denoted $C_1 - C_2$), probability 0 to 
 node defined by the intersection of $C_2$ and $C_3$ (denoted $C_2 - C_3$), 1 to node defined by intersection of $C_3$ and $C_4$ (denoted  $C_3 - C_4$),\dots, and so on, alternating assignments of 1 and 0,
 up to assigning probability 0 to the node denoted $C_{|E|} - C_1$.\footnote{Note that ``$C_i-C_{i+1}$'' denotes the node that 
 appears in both contexts, $C_i$ and $C_{i+1}$.} The induced subscenario consisting of singleton hyperedges (that is, hyperedges with a 
 single node each)
 $$\{\{C_1-C_2\},\{C_3-C_4\},\dots,\{C_{|E|-1}-C_{|E|}\}\}$$
 then admits a unique probabilistic model which extends to a deterministic extremal probabilistic model on 2Reg$(G)$.
 This is easy to see because the induced subscenario assigns probability 1 to $\frac{|E|}{2}$ nodes and each of those nodes 
 appears in two contexts, thus ensuring that all the contexts are properly normalized in the
 extension of the unique probabilistic model to 2Reg$(G)$. Hence, 2Reg$(G)$ is KS-colourable whenever $|E|$ is even.

\end{proof}

\subsection{From graphs to hypergraphs under 2Reg($\cdot$)}
We will now prove some facts about the behaviour of some special classes of graphs under 2Reg($\cdot$).
\begin{lemma}\label{ncycle}
All $n$-cycle ($n\geq3$) graphs are invariant under 2Reg($\cdot$).
\end{lemma}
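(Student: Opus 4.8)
The plan is to compute $2\mathrm{Reg}(C_n)$ explicitly and recognize the output, up to isomorphism, as $C_n$ itself. First I would fix notation: write the $n$-cycle graph as $C_n=(V,E)$ with $V=\{v_1,\dots,v_n\}$ and $E=\{e_1,\dots,e_n\}$, where $e_i=\{v_i,v_{i+1}\}$ and all vertex indices are taken modulo $n$; in particular $|E|=n$.

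Next I would identify the nodes of $H\equiv 2\mathrm{Reg}(C_n)$. Two edges $e_i,e_j$ of $C_n$ intersect precisely when $j\equiv i\pm 1\pmod n$, and since $n\geq 3$ the three indices $i-1,i,i+1$ are pairwise distinct, so $e_i$ meets exactly the two edges $e_{i-1}$ and $e_{i+1}$, sharing the single vertex $v_i$ with $e_{i-1}$ and the single vertex $v_{i+1}$ with $e_{i+1}$. Hence the unordered pairs of edges sharing a vertex are exactly $\{e_i,e_{i+1}\}$, $i=1,\dots,n$, giving $n$ nodes; I will let $w_i$ denote the node associated with the pair $\{e_i,e_{i+1}\}$, so that $W(H)=\{w_1,\dots,w_n\}$.

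Then I would read off the hyperedges. By the definition of 2Reg$(\cdot)$, the hyperedge $f_i\in F(H)$ corresponding to the edge $e_i\in E$ consists of the nodes for the pairs $\{e_i,e_j\}$ with $e_i\cap e_j\neq\varnothing$, which are $\{e_{i-1},e_i\}$ and $\{e_i,e_{i+1}\}$; thus $f_i=\{w_{i-1},w_i\}$ and $F(H)=\{\{w_{i-1},w_i\}\mid i=1,\dots,n\}$. Since each $f_i$ has exactly two nodes, $H$ is a graph; it is connected (the consecutive nodes $w_i,w_{i+1}$ both lie in $f_{i+1}$) and $2$-regular (each $w_i$ lies in $f_i$ and $f_{i+1}$, and in no other hyperedge). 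A connected $2$-regular graph on $n$ vertices is the $n$-cycle, so $H\cong C_n$; equivalently, the relabelling $w_i\mapsto v_i$ maps $F(H)$ onto $\{\{v_{i-1},v_i\}\mid i=1,\dots,n\}=E(C_n)$. This establishes the claimed invariance.

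The only point that needs care is the small-$n$ behaviour, and this is really the main (if minor) obstacle: one must check that for $n\geq 3$ the indices $i-1,i,i+1$ are genuinely distinct, so that every edge has two distinct neighbouring edges, no hyperedge of $2\mathrm{Reg}(C_n)$ degenerates to a singleton or coincides with another, and no hyperedge is empty (the last being automatic since $C_n$ is connected with each edge adjacent to another, so $2\mathrm{Reg}(C_n)$ is a bona fide contextuality scenario). This is precisely where the hypothesis $n\geq 3$ enters.
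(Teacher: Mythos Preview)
Your proof is correct and follows essentially the same approach as the paper: both explicitly compute $2\mathrm{Reg}(C_n)$ by identifying its nodes with the intersecting edge-pairs $\{e_i,e_{i+1}\}$ and its hyperedges $f_i=\{w_{i-1},w_i\}$ as two-element sets, then recognize the result as the $n$-cycle. You add a bit more rigor by explicitly invoking the characterization ``connected $2$-regular graph on $n$ vertices $\Rightarrow$ $C_n$'' and spelling out where the hypothesis $n\geq 3$ is used, whereas the paper simply writes out the hyperedges and asserts the isomorphism.
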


\begin{proof}
Given the $n$-cycle graph 
 $$\{e_{12}\equiv\{v_1,v_2\},e_{23}\equiv\{v_2,v_3\},\dots,e_{n1}\equiv\{v_n,v_1\}\},$$ the hypergraph under 2Reg($\cdot$) is 
given by $$\{E_{12}\equiv\{w_1\equiv (e_{12},e_{n1}),w_2\equiv (e_{12},e_{23})\},$$$$E_{23}\equiv\{w_2\equiv (e_{12},e_{23}),w_3\equiv (e_{23},e_{34})\},\dots,
$$$$E_{n1}\equiv\{w_n\equiv (e_{n-1,n},e_{n1}),w_1\equiv (e_{12},e_{n1})\}\},$$
which is the $n$-hypercycle isomorphic to the starting $n$-cycle graph. From Theorem \ref{thmunc}, an $n$-hypercycle is 
KS-uncolourable if and only if $n$ is odd. Note, in particular, that a triangle (or 3-cycle) graph goes to a $3$-hypercycle. See Fig.~\ref{fig3}.
\end{proof}

\begin{figure}
\centering
\includegraphics[scale=0.5]{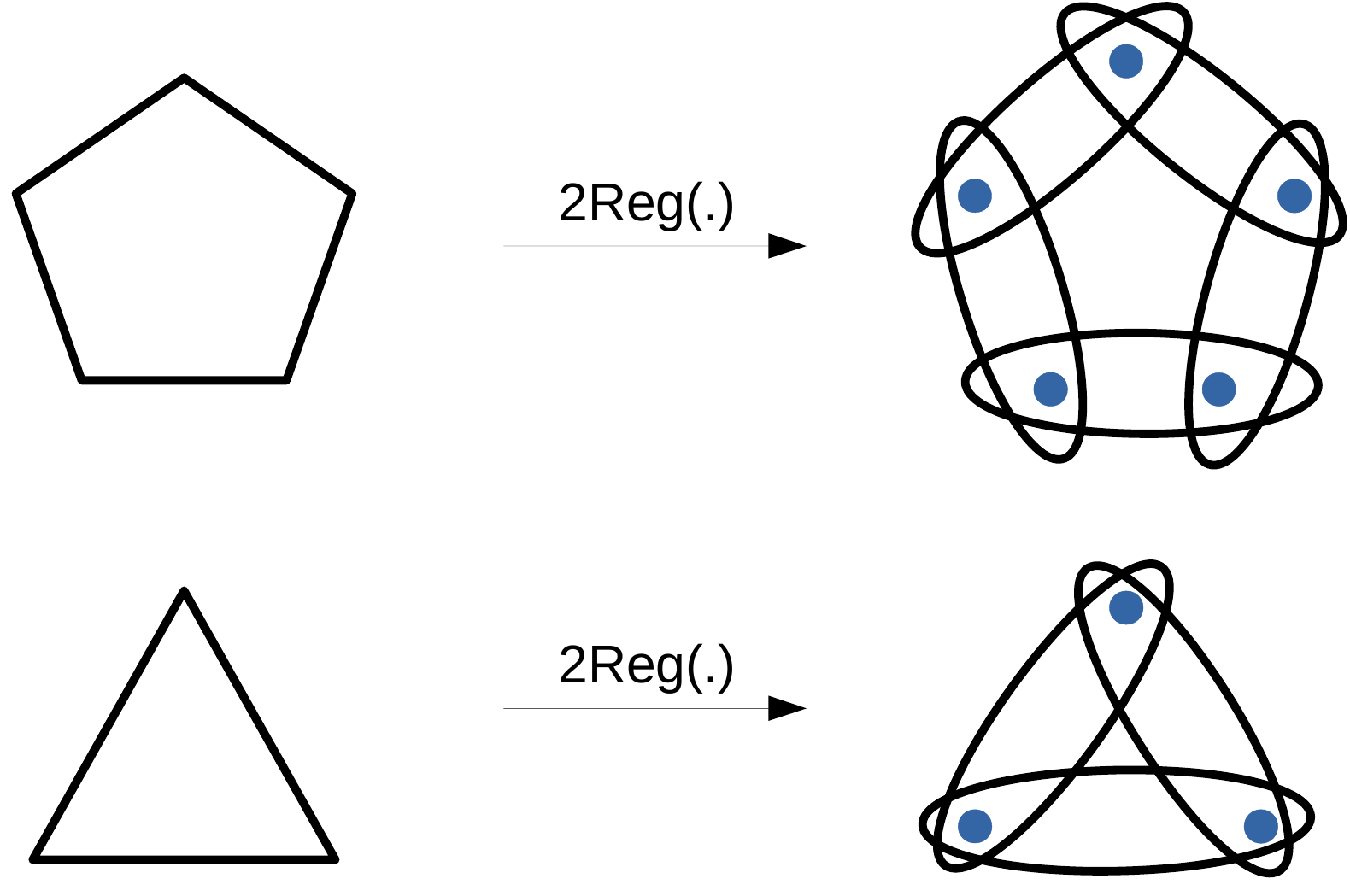}
\caption{$n$-cycles map to $n$-hypercycles and triangle maps to a 3-hypercycle under 2Reg($\cdot$).}
\label{fig3}
\end{figure}

\begin{lemma}
Any $n$-vertex complete graph $K_n$ is mapped under 2Reg($\cdot$) to a hypergraph with $^nC_2=\frac{n(n-1)}{2}$ hyperedges with $2(n-2)$
nodes per hyperedge and $^nC_2(n-2)=\frac{n(n-1)(n-2)}{2}$ nodes in all. 

\end{lemma}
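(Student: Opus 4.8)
The plan is to verify the three claimed quantities---the number of hyperedges, the number of nodes per hyperedge, and the total number of nodes---directly from the definition of 2Reg$(\cdot)$ by elementary counting in $K_n$. The only structural facts I will use about $K_n$ are that it has $\binom{n}{2}$ edges, that each vertex is incident to exactly $n-1$ of them, and---crucially for avoiding double counting---that two distinct edges of a simple graph meet in at most one vertex and that no edge other than $e_i$ itself is incident to both endpoints of $e_i$.

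First I would handle the number of hyperedges: by the very definition of 2Reg$(\cdot)$, there is exactly one hyperedge $f_i\in F$ per edge $e_i\in E$, so $|F|=|E|=\binom{n}{2}$ for $G=K_n$. Next, fix an edge $e_i=\{v_a,v_b\}$ and count the nodes of the associated hyperedge $f_i$: by definition these are in bijection with the edges $e_j\neq e_i$ satisfying $e_i\cap e_j\neq\varnothing$, i.e.\ the edges incident to $v_a$ or to $v_b$. In $K_n$ there are $n-1$ edges through each vertex; discarding $e_i$ leaves $n-2$ further edges through $v_a$ and $n-2$ further edges through $v_b$, and these two families are disjoint because $e_i$ is the only edge incident to both $v_a$ and $v_b$. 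Hence $|f_i|=2(n-2)$, independently of $i$.

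For the total number of nodes I would count the unordered pairs $\{e_i,e_j\}$ of distinct edges with $e_i\cap e_j\neq\varnothing$, since these are exactly the nodes of 2Reg$(K_n)$. Partitioning such pairs by their (unique) common vertex $v$ is unambiguous because distinct edges share at most one vertex; for each of the $n$ vertices there are $n-1$ incident edges, hence $\binom{n-1}{2}$ pairs among them, giving $|W|=n\binom{n-1}{2}=\frac{n(n-1)(n-2)}{2}=\binom{n}{2}(n-2)$. As a consistency check I would note that 2Reg$(K_n)$ is $2$-regular, so $\sum_i |f_i|=2|W|$, and indeed $\binom{n}{2}\cdot 2(n-2)=n(n-1)(n-2)=2\cdot\frac{n(n-1)(n-2)}{2}$.

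There is no real obstacle here; the argument is pure bookkeeping. The single point deserving care---and the closest thing to a pitfall---is the no-double-counting step: one must invoke that in a simple graph two distinct edges intersect in at most one vertex (so each node-defining pair is assigned to exactly one common vertex) and that $e_i$ is the unique edge containing both of its endpoints (so the two families of neighbours of $e_i$ in $f_i$ do not overlap). Both are immediate for $K_n$, so the lemma follows from the three counts above.
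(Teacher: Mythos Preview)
Your proof is correct and follows essentially the same approach as the paper: both argue directly from the definition of 2Reg$(\cdot)$ by elementary counting on $K_n$, explicitly identifying the hyperedge $f_{ij}$ as the set of pairs $(e_{ij},e_{ik})$ and $(e_{ij},e_{kj})$ for $k\neq i,j$. The only minor difference is that the paper obtains the total node count from $2$-regularity (dividing $\binom{n}{2}\cdot 2(n-2)$ by $2$), whereas you count pairs of incident edges by partitioning over their common vertex and then use $2$-regularity as a consistency check; your version is somewhat more explicit about why no double counting occurs, but the arguments are otherwise the same.
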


\begin{proof}
$$K_n\equiv\{e_{12},\dots,e_{1n},e_{23},\dots,e_{2n},\dots,e_{n-1,n}\},$$ where $e_{ij}\equiv\{v_i,v_j\}$ denotes the edge connecting vertices
$v_i$ and $v_j$, where $i,j\in\{1,2,\dots,n\}$ and $i\neq j$. The hyperedges of 2Reg$(K_n)$ are now defined by 
$E_{ij}\equiv\{\{(e_{ij},e_{ik})|k\neq i,j\}_{k=1}^n,\{(e_{ij},e_{kj})|k\neq i,j\}_{k=1}^n\}$, where 
$i,j\in\{1,\dots,n\}$ and $i\neq j$. Hence, 2Reg$(K_n)$ is a hypergraph with $\frac{n(n-1)}{2}$ hyperedges with $2(n-2)$ nodes per hyperedge,
each node (e.g., $(e_{ij},e_{ik})$) appearing in two hyperedges (e.g., $E_{ij}$ and $E_{ik}$). The total number of nodes in the hypergraph is $\frac{n(n-1)(n-2)}{2}$.
From Theorem \ref{thmunc}, 2Reg$(K_n)$ is KS-uncolourable if and only if $^nC_2$ is odd.
\end{proof}

\subsubsection*{Complete bipartite graphs under 2Reg($\cdot$)}
Having looked at the behaviour of $n$-cycle graphs and complete graphs $K_n$ under 2Reg($\cdot$), 
let us now see how the family of complete bipartite graphs, $K_{m,n}$, can lead to KS-uncolorable scenarios 
and how this particular representation in terms of complete bipartite graphs helps us better understand the common 
ideas underlying constructions of many KS-uncolorable scenarios.
A contextuality scenario 2Reg$(K_{m,n})$ can be obtained from any complete bipartite graph $K_{m,n}$ in the following 
steps:
\begin{enumerate}
 \item $K_{m,n}$ is given by a set of vertices $V=\{v_1,v_2,\dots,v_m,t_1,t_2,\dots,t_n\}$ and edges 
 $E=\{e_{ij}\equiv\{v_i,t_j\}|i\in\{1,2,\dots,m\},j\in\{1,2,\dots,n\}\}$.
 \item Edges of $K_{m,n}$ become edges of 2Reg$(K_{m,n})$, so the contextuality scenario we construct will have $mn$ hyperedges,
 $\{f_{11},f_{12},\dots,f_{1n},\dots,f_{m1},\dots,f_{mn}\}$.
 \item Every pair of edges that share a node in $K_{m,n}$ defines a node in 2Reg$(K_{m,n})$, e.g., 
$$\{(e_{11},e_{12}),\dots,(e_{11},e_{1m}),(e_{11},e_{21}),\dots,(e_{11},e_{n1})\}$$ are all the $(m-1)+(n-1)$ nodes 
 in the hyperedge $f_{11}$ of 2Reg$(K_{m,n})$. Hence, every hyperedge of 2Reg$(K_{m,n})$ has $(m-1)+(n-1)$ nodes.
 \item Since every node appears in two hyperedges of 2Reg$(K_{m,n})$,
 2Reg$(K_{m,n})$ has $\frac{mn((m-1)+(n-1))}{2}$ nodes. 
 \item Hence: 2Reg$(K_{m,n})$ is a contextuality scenario with $\frac{mn((m-1)+(n-1))}{2}$ nodes carved up into 
 $mn$ hyperedges with $(m-1)+(n-1)$ nodes per hyperedge and each node appearing in two hyperedges.
 \end{enumerate}

\begin{lemma}
2Reg$(K_{m,n})$ is a KS-uncolourable contextuality scenario if and only if $mn$ $(>1)$ is odd.
\end{lemma}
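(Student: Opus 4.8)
The plan is to derive this as an essentially immediate corollary of Theorem~\ref{thmunc}, which already characterizes KS-uncolourability of 2Reg$(G)$ in terms of the parity of $|E|$. First I would recall from the construction of 2Reg$(K_{m,n})$ given above that the edges of $K_{m,n}$ are in one-to-one correspondence with the hyperedges of 2Reg$(K_{m,n})$, so that $K_{m,n}$ has $|E|=mn$ edges and hence 2Reg$(K_{m,n})$ has $mn$ hyperedges. Then Theorem~\ref{thmunc} says 2Reg$(K_{m,n})$ is KS-uncolourable if and only if $mn$ is odd, which is precisely the claimed statement.

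The one point requiring a little care --- and the reason the hypothesis $mn>1$ appears --- is verifying that 2Reg$(K_{m,n})$ is actually a genuine contextuality scenario, i.e.\ that it has no empty hyperedge, so that Theorem~\ref{thmunc} is applicable. I would argue as follows: the hyperedge $f_{ij}$ of 2Reg$(K_{m,n})$ corresponding to the edge $e_{ij}=\{v_i,t_j\}$ is empty exactly when $e_{ij}$ shares a vertex with no other edge of $K_{m,n}$; by the construction above each $f_{ij}$ contains $(m-1)+(n-1)$ nodes, so this fails to happen precisely when $(m-1)+(n-1)\geq1$, i.e.\ when $m\geq2$ or $n\geq2$, which is equivalent to $mn>1$. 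Thus for $mn>1$ every hyperedge of 2Reg$(K_{m,n})$ is nonempty, the output of 2Reg$(\cdot)$ on $K_{m,n}$ is a legitimate contextuality scenario, and the hypothesis of Theorem~\ref{thmunc} is met.

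Combining the two observations completes the proof: for $mn>1$, 2Reg$(K_{m,n})$ is a contextuality scenario whose number of hyperedges equals $|E|=mn$, so by Theorem~\ref{thmunc} it is KS-uncolourable if and only if $mn$ is odd. I do not expect any serious obstacle; the only thing worth double-checking is that the well-definedness condition on 2Reg$(\cdot)$ (that $G$ contains no edge disjoint from the rest of $G$) coincides, for $G=K_{m,n}$, with $mn>1$ --- which is immediate since in $K_{m,n}$ every edge meets another edge at a vertex as soon as $m\geq2$ or $n\geq2$.
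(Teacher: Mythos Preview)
Your proposal is correct and follows essentially the same approach as the paper: invoke Theorem~\ref{thmunc} after noting that 2Reg$(K_{m,n})$ has $|E|=mn$ contexts. The paper's proof is a one-liner to this effect; your additional verification that $mn>1$ guarantees no empty hyperedges (so that 2Reg$(K_{m,n})$ is a genuine contextuality scenario) is a welcome bit of extra care that the paper leaves implicit.
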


\begin{proof}
This follows from Theorem \ref{thmunc} since $mn$ is the number of contexts in 2Reg$(K_{m,n})$.
\end{proof}

Examples of known KS-uncolourable contextuality scenarios that are of the type 2Reg$(K_{m,n})$:

\begin{enumerate}
 \item The 3-hypercycle (or ``triangle") contextuality scenario from $K_{1,3}$: 2Reg$(K_{1,3})$. This is the simplest KS-uncolourable scenario.
 It does not admit a KS set. See Fig.~\ref{fig4}
 \begin{figure}
 \centering
 \includegraphics[scale=0.4]{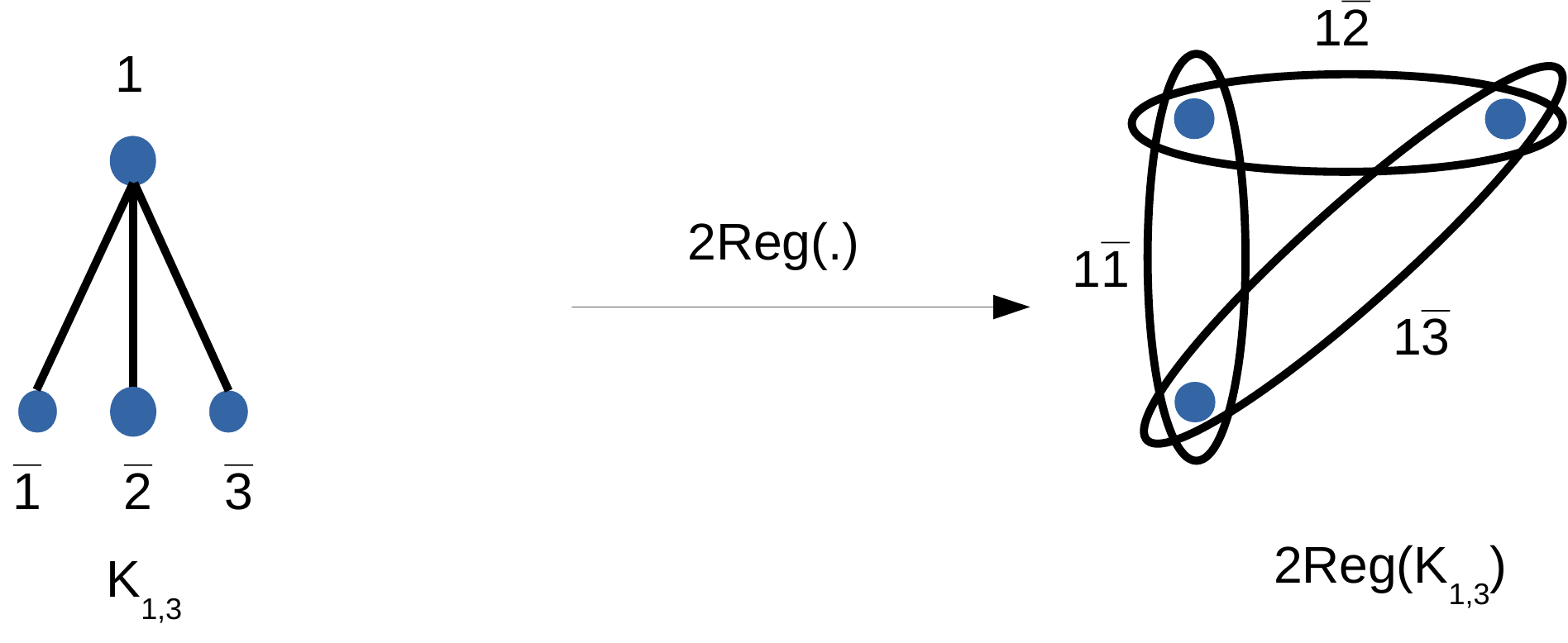}
 \caption{$K_{1,3}$ under 2Reg($\cdot$).}
 \label{fig4}
 \end{figure}

 \item 2Reg$(K_{1,5})$ in Ref.~\cite{CabelloK15}. No KS set has been found for this scenario.
 The assignments in 
 Ref.~\cite{CabelloK15} are subnormalized (that is, the projectors in a basis do not add up to identity) 
 and do not satisfy the definition of a KS set. See Fig.~\ref{fig5}.

 \begin{figure}
 \centering
 \includegraphics[scale=0.36]{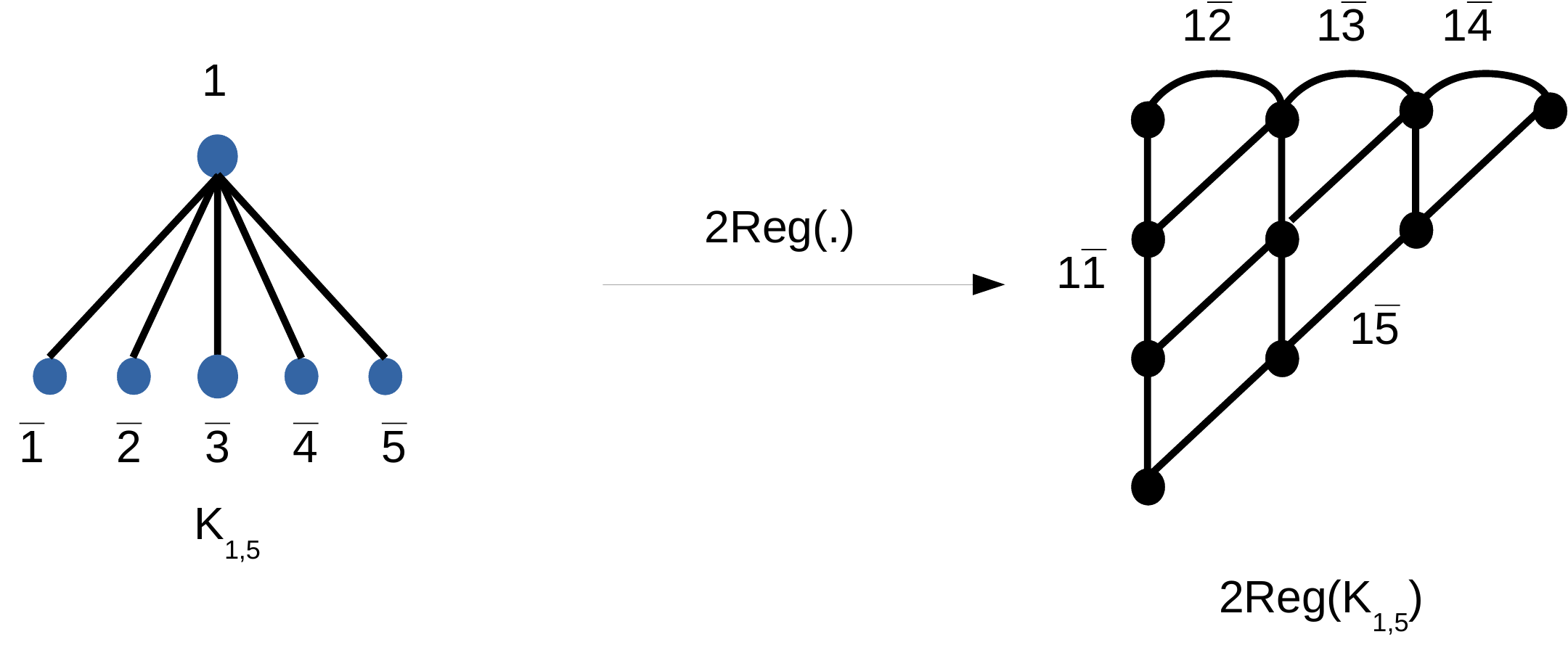}
 \caption{$K_{1,5}$ under 2Reg($\cdot$). Here we denote the hyperedges by lines rather than loops for simplicity. Each hyperedge contains 4 nodes and we have 5 hyperedges.}
 \label{fig5}
 \end{figure}

\item The seven-context KS construction (21 rays in 6 dimensions): 2Reg$(K_{1,7})$ \cite{CabelloK17}. See Fig.~\ref{fig6}.

\item The CEGA contextuality scenario (18 rays in 4 dimensions): 2Reg$(K_{3,3})$ \cite{Cabello18ray}. See Fig.~\ref{fig2}.
 \end{enumerate}

\begin{theorem}\label{2regclawtriangle}
2Reg($G$) is a $3$-hypercycle if and only if $G$ is
\begin{enumerate}
\item the claw graph, i.e., bipartite graph $K_{1,3}$, or
\item the 3-cycle or triangle graph, i.e. a graph isomorphic to $\{\{v_1,v_2\},\{v_2,v_3\},\{v_3,v_1\}\}$.
\end{enumerate}
\end{theorem}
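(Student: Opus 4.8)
The plan is to translate the statement ``2Reg$(G)$ is a $3$-hypercycle'' into an explicit condition on $G$, using only the bookkeeping built into the definition of 2Reg$(\cdot)$. Recall that 2Reg$(G)$ has one hyperedge $f_i$ for each edge $e_i\in E$, so $|F(2\mathrm{Reg}(G))|=|E|$; that $|f_i|$ equals the number of edges of $G$ sharing a vertex with $e_i$; that the nodes of 2Reg$(G)$ biject with unordered pairs of edges of $G$ sharing a vertex; and that 2Reg$(G)$ is automatically $2$-regular. So suppose 2Reg$(G)$ is a $3$-hypercycle: it then has three nodes, every pair of them lies in some common hyperedge, and no hyperedge contains all three. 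A hyperedge is therefore a proper nonempty subset of the three-node set, and a pair of nodes can only lie in a hyperedge that equals that pair; since all three pairs must occur this way, all three $2$-element subsets are hyperedges, contributing $6$ to $\sum_f|f|$, which by $2$-regularity already equals $6$ --- so there are no further hyperedges. Hence 2Reg$(G)=\{\{w_1,w_2\},\{w_2,w_3\},\{w_3,w_1\}\}$ exactly, and reading this back through the mapping shows: 2Reg$(G)$ is a $3$-hypercycle if and only if $|E|=3$ and every pair of edges of $G$ shares a vertex.

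Given this characterization, the ``if'' direction is immediate: the claw $K_{1,3}$ has three edges all through its centre, and the triangle has three edges meeting pairwise at its three vertices, so both satisfy $|E|=3$ with all edge-pairs adjacent (for the triangle this is already recorded in Lemma \ref{ncycle}). For the ``only if'' direction it remains to classify graphs with exactly three pairwise-adjacent edges, up to isomorphism and ignoring isolated vertices (harmless for 2Reg$(\cdot)$; recall also the standing proviso that 2Reg$(G)$ be a genuine contextuality scenario, excluding isolated edges, which is automatic here). Case-split on whether $e_1,e_2,e_3$ share a common vertex. If they do, call it $v$ and write $e_i=\{v,x_i\}$; the $x_i$ are pairwise distinct since $G$ is simple, so $G\cong K_{1,3}$. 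If they do not, pick $a\in e_1\cap e_2$, $b\in e_2\cap e_3$, $c\in e_1\cap e_3$; since no vertex lies in all three edges one checks $a\notin e_3$, $b\notin e_1$, $c\notin e_2$, forcing $a,b,c$ pairwise distinct, whence $e_1=\{a,c\}$, $e_2=\{a,b\}$, $e_3=\{b,c\}$ and $G$ is the triangle on $\{a,b,c\}$.

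The main obstacle is this last classification and the attendant care with degenerate configurations: coincidences among the intersection vertices $a,b,c$, the distinction between harmless isolated vertices and forbidden isolated edges, and the use of simplicity of $G$ to exclude the multi-edge that would otherwise masquerade as (a subgraph of) $K_{1,3}$. Once this combinatorial fact and the incidence-counting reduction of the first paragraph are in hand, both directions of the equivalence follow with no further computation.
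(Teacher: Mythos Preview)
Your proposal is correct and follows essentially the same approach as the paper: both reduce to the characterization ``$|E|=3$ and every pair of edges of $G$ is adjacent,'' then classify such $G$. Your reduction is slightly more explicit (the $2$-regularity counting argument pinning down the hyperedge structure), and your final case-split is organized on ``do all three edges share a common vertex?'' rather than the paper's ``pick a vertex of degree at least two and see where the third edge goes,'' but these are minor reorganizations of the same argument.
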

\begin{proof}
The proof is just by explicitly exhausting all possible 3-edge graphs and verifying whether they lead to a 3-hypercycle. 

For 2Reg($G$) to be a 3-hypercycle, $G$ must have three edges, say $\{e_1,e_2,e_3\}$. Further, since each node 2Reg($G$) is defined by a pair of edges in $\{ e_1,e_2,e_3\}$, and since there are only three distinct pairs of edges in $\{ e_1,e_2,e_3\}$, it must be the case that the intersection of each pair of edges corresponds to a vertex in $G$. However, not every vertex in $G$ needs to be in the intersection of two edges. 

A vertex in $G$ can be of degree 1, 2, or 3. The fact that every pair of edges in $G$ must have a non-empty intersection means that there exists at least one vertex of degree 2 in $G$ given by $e_1\cap e_2$ or $e_2\cap e_3$ or $e_3\cap e_1$. Let us denote the edges sharing this vertex by $\{e_i,e_j\}$ ($i\neq j\in\{1,2,3\}$), so that the vertex is $e_i\cap e_j$. We thus know that $G$ has at least 3 vertices. The only option for the remaining edge, denoted $e_k$ ($k\neq i,j\in\{1,2,3\}$), in $G$ -- for 2Reg($G$) to be a 3-hypercyle -- is then to attach to the vertex  $e_i\cap e_j$, thus producing a claw graph $K_{1,3}$ with 4 vertices and 3 edges, or to attach to the two degree 1 vertices of edges $e_i, e_j$, thus forming a 3-cycle or triangle graph.
\end{proof}

\begin{corollary}\label{2regnoninvertible}
The mapping 2Reg($\cdot$) is not invertible.
\end{corollary}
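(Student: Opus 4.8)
The plan is to derive the corollary directly from Theorem \ref{2regclawtriangle}. That theorem exhibits two graphs, the claw $K_{1,3}$ and the $3$-cycle (triangle) graph, both of which are sent by 2Reg($\cdot$) to the very same contextuality scenario, namely the $3$-hypercycle. Since the claw has $4$ vertices whereas the triangle has only $3$, these two graphs are not isomorphic, so 2Reg($\cdot$) maps genuinely distinct inputs to the same output. A map that is not injective cannot admit an inverse, and this is exactly what ``not invertible'' means here.

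Concretely, I would first write down the two preimages explicitly: $G_1 = \{\{v_1,v_2\},\{v_1,v_3\},\{v_1,v_4\}\}$, which is the claw $K_{1,3}$, and $G_2 = \{\{v_1,v_2\},\{v_2,v_3\},\{v_3,v_1\}\}$, which is the triangle. By Theorem \ref{2regclawtriangle}, or equivalently by unwinding the definition of 2Reg($\cdot$) on the three edge-pairs in each case, both graphs yield the $3$-hypercycle $\{\{w_1,w_2\},\{w_2,w_3\},\{w_3,w_1\}\}$. I would then point out that any candidate inverse of 2Reg($\cdot$) would be forced to return a single graph when fed this $3$-hypercycle, which is impossible since $G_1$ and $G_2$ are non-isomorphic; hence no such inverse exists.

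There is essentially no obstacle: the corollary is an immediate consequence of the preceding theorem. The only point worth stating carefully is the reading of ``invertible'' -- the failure is one of injectivity, since two distinct graphs share an image, not some subtlety concerning the codomain of the mapping. As an optional remark, I would note that the loss of information is already visible from crude counting, since 2Reg($\cdot$) cannot distinguish a pendant (degree-$1$) vertex, which contributes nothing to any node of the hypergraph, from a vertex that closes a cycle; the claw-versus-triangle example is simply the smallest witness of this phenomenon.
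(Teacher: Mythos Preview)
Your proposal is correct and takes essentially the same approach as the paper: both derive the corollary directly from Theorem~\ref{2regclawtriangle} by observing that the claw $K_{1,3}$ and the triangle are distinct graphs sent to the same $3$-hypercycle, so 2Reg($\cdot$) is many-to-one and hence not invertible. Your write-up is in fact more explicit than the paper's, which simply states that the example of Theorem~\ref{2regclawtriangle} illustrates the many-to-one nature of the map.
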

\begin{proof}
This follows from Theorem \ref{2regclawtriangle}. There does not exist an inverse mapping that, when applied to 2Reg($G$) would yield $G$: in general, 2Reg($\cdot$) is a many-to-one mapping, hence non-invertible. The example of Theorem \ref{2regclawtriangle} illustrates this.
\end{proof}

\begin{theorem}\label{2regncycle}
2Reg($G$) is a $k$-hypercycle, $k\geq 4$, if and only if $G$ is a $k$-cycle.\footnote{Plus superfluous vertices of degree 0, but these can be safely ignored.}
\end{theorem}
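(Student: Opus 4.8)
The plan is to prove both implications; the ``only if'' direction carries all the content.

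For the ``if'' direction I would simply invoke Lemma~\ref{ncycle}: if $G$ is a $k$-cycle with $k\geq 4$ (possibly decorated with isolated vertices, which 2Reg$(\cdot)$ ignores, as they lie in no edge and hence in no pair of adjacent edges), then 2Reg$(G)$ is the $k$-hypercycle isomorphic to $G$.

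For the ``only if'' direction, suppose 2Reg$(G)$ is a $k$-hypercycle with $k\geq 4$, i.e.\ it has $k$ nodes and $k$ hyperedges, each of cardinality $2$, forming a single cycle. Since hyperedges of 2Reg$(G)$ correspond to edges of $G$, the graph $G$ has exactly $k$ edges; and since the hyperedge $f_i$ attached to an edge $e_i$ has one node for each edge of $G$ meeting $e_i$, the condition $|f_i|=2$ says that every edge of $G$ is adjacent to exactly two other edges. For an edge $\{u,v\}$ those neighbours are precisely the other $d(u)-1$ edges at $u$ together with the other $d(v)-1$ edges at $v$, the two families being disjoint because $G$ is simple. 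Hence $d(u)+d(v)=4$ for every edge $\{u,v\}$ of $G$. I would then pin down the degrees: no vertex can have degree $\geq 4$; if a vertex had degree $3$, each of its three edges is already adjacent to the other two and so to none else, forcing the three other endpoints to have degree $1$, i.e.\ the component of that vertex is a claw $K_{1,3}$, whose image is a $3$-hypercycle; since 2Reg$(\cdot)$ sends disjoint unions to disjoint unions and a $k$-hypercycle is connected, this would force $G=K_{1,3}$ and $k=3$, against $k\geq 4$. A degree-$1$ vertex would likewise force a degree-$3$ neighbour, already excluded. So every non-isolated vertex of $G$ has degree $2$; $G$ is then a disjoint union of cycles (plus isolated vertices), its image is the disjoint union of the corresponding hypercycles, and connectedness of 2Reg$(G)$ forces a single cycle, necessarily with $k$ edges, i.e.\ $G=C_k$.

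The fiddly part is the bookkeeping that isolates the $k=3$ exception (already recorded in Theorem~\ref{2regclawtriangle}): one must be careful that a claw component cannot survive once 2Reg$(G)$ is a single $k$-hypercycle with $k\geq 4$, which is where connectedness and the disjoint-union behaviour of 2Reg$(\cdot)$ do the work. A slicker but less self-contained alternative worth noting is that ``every edge is adjacent to exactly two others'' says the line graph $L(G)$ is $2$-regular, one checks 2Reg$(G)\cong L(G)$ in that case (a disjoint union of cycles is self-dual under swapping vertices and edges), and Whitney's theorem on line graphs then gives $G\cong C_k$ for $k\geq 4$, the exceptional pair $\{K_3,K_{1,3}\}$ accounting for $k=3$; but I would favour the elementary degree argument, which matches the style of the neighbouring lemmas.
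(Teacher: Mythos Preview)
Your proof is correct and reaches the same conclusion as the paper, but the route to the key step differs. The paper rules out vertices of degree $\geq 3$ by invoking Theorem~\ref{2regclawtriangle} directly: a vertex of degree at least three supports a claw, hence a $3$-hypercycle sits inside 2Reg$(G)$, which cannot happen if 2Reg$(G)$ is a $k$-hypercycle with $k\geq 4$. You instead exploit the hyperedge-size condition $|f_i|=2$ to obtain the degree-sum constraint $d(u)+d(v)=4$ on every edge and then eliminate degrees $1$, $3$, and $\geq 4$ by case analysis. Your argument is slightly more self-contained (it does not lean on Theorem~\ref{2regclawtriangle}) and is more explicit about the connectedness/disjoint-union step that pins $G$ down to a single cycle; the paper's argument is terser but relies on the earlier claw/triangle classification. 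Your remark on the line-graph reformulation and Whitney's theorem is a nice aside the paper does not make, though the elementary degree argument is indeed the better stylistic fit here.
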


\begin{proof}
Since 2Reg($G$) is a $k$-hypercycle, $G$ must have $k$ edges, say $\{e_1,e_2,\dots,e_k\}$. Since the nodes of 2Reg($G$) are defined by pairs of edges in $G$ with non-empty intersection, 
there must be $k$ such pairs in $\{e_1,e_2,\dots,e_k\}$. The intersection of each such pair is a vertex of $G$, hence $G$ has at least $k$ vertices. The degree of any vertex of $G$ cannot be more than 2: for any vertex of degree 3 or more in $G$ one would have the presence of a $3$-hypercycle in 2Reg($G$) (which could not then be a $k$-hypercycle, $k\geq 4$) from Theorem \ref{2regclawtriangle}. Hence, $G$ is a graph with $k$ edges and at least $k$ vertices such that $k$ pairs of edges have a non-empty intersection and each vertex of $G$ is of degree no more than 2. These constraints fix $G$ to be a $k$-cycle: any change to the $k$-cycle can only be done by adding vertices of degree 0, but no edges can be added. We can safely ignore such superfluous vertices (of degree 0) since they do not change anything about 2Reg($G$).

Combining the above with Lemma \ref{ncycle}, we have our result.
\end{proof}

The contextuality scenario 2Reg$(G)$ obtained from a graph $G$ can be viewed as a matching scenario of another graph ${\rm L}(G)$, the line graph of $G$, in the sense of Ref.~\cite{AFLS}. We comment on this connection in Appendix \ref{matchingscenarios}. In the next section we define some parameters to characterize arbitrary contextuality scenarios in a systematic manner before 
we present our general approach to obtaining noise-robust noncontextuality inequalities from KS-uncolourable scenarios.

 \begin{figure}
 \centering
 \includegraphics[scale=0.355]{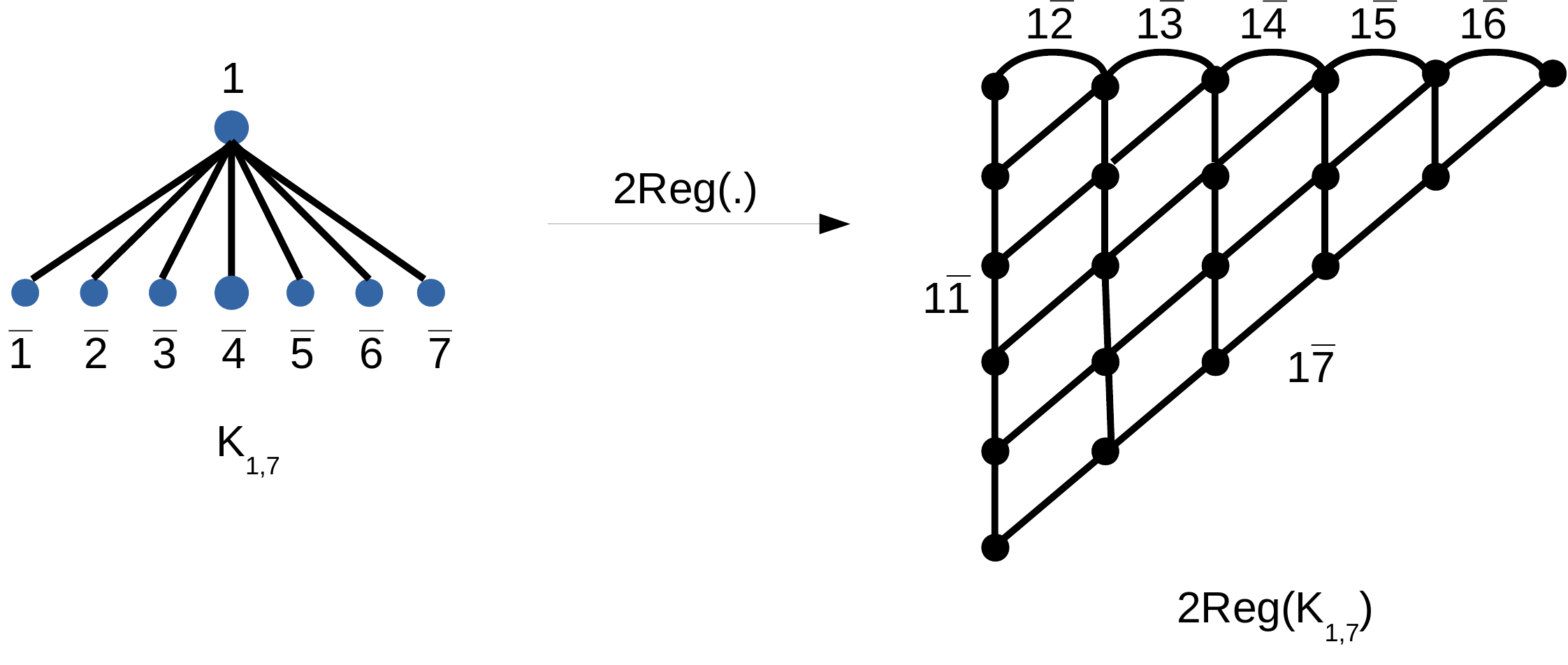}
 \caption{$K_{1,7}$ under 2Reg($\cdot$).}
 \label{fig6}
 \end{figure}

\section{Uniform contextuality scenarios, their parameters, and KS-uncolourability}

Before we can talk about noise-robust noncontextuality inequalities obtained from KS-uncolourable contextuality scenarios, we need a way to detect KS-uncolourability of a given scenario. 
We now define some parameters that we will use in our discussion of KS-uncolourability and then provide some (sufficient) conditions for KS-uncolourability of a contextuality scenario. In the process, we also answer some open questions that were posed in Ref.~\cite{MMP}. The conditions we obtain are enough for the contextuality scenarios we consider in this paper but the question of identifying conditions that are both necessary {\em and} sufficient for KS-uncolourability of a contextuality scenario remains open, despite some previous research hinting at ways in which this might be done \cite{mansthesis, caru}.

Consider a contextuality scenario, $H=(W,F)$, where $W$ is its set of nodes and $F$ its set of hyperedges. 
Further, the scenario is such that there are $d$ nodes in each hyperedge (that is, the hypergraph is $d$-uniform), $|F|$ hyperedges, and $|W|$ nodes in all.
Let $m$ be the number of nodes that appear in more than one hyperedge. We then have the following relations:
\begin{equation}
 m\leq |W| \leq d|F|.
\end{equation}
More precisely,
\begin{eqnarray}
&&d|F|=\sum_{k=1}^Dkn_k,\nonumber\\
&&|W|=\sum_{k=1}^Dn_k,\nonumber\\
&&m=\sum_{k=2}^Dn_k=|W|-n_1,
\end{eqnarray}
where $n_k$ is the number of nodes which each appear in $k$ distinct hyperedges (i.e., there exist $n_k$ nodes of degree $k$) and $D$ is the largest 
number of distinct hyperedges any node can appear in (i.e., there exists a node which appears in $D$ distinct hyperedges but no node that 
appears in $D+1$ distinct hyperedges or, equivalently, $D$ is the degree of a node of maximum degree in the hypergraph). Note that 
\begin{equation}
 \frac{d|F|}{2}=\frac{n_1}{2}+n_2+\sum_{k=3}^D\frac{kn_k}{2}.
\end{equation}
We therefore have
\begin{equation}\label{ans1}
 m\leq \frac{d|F|}{2}
\end{equation}
for any contextuality scenario. 

In Ref.~\cite{MMP}, an algorithmic way to enumerate KS-uncolourable hypergraphs which admit KS sets was presented. The observations noted
in Ref.~\cite{MMP} as a result of this algorithmic enumeration led to some conjectures, some of which we now answer by simply noting 
constraints between the parameters defined above. 

An open question in Sec.~5(i) of Ref.~\cite{MMP} was: Is it true for arbitrary $d$ that $m\leq\frac{d|F|}{2}$
(for KS-uncolourable hypergraphs which admit KS sets)? We have just answered this question in the affirmative (see Eq.~\eqref{ans1}) 
for all $d$-uniform contextuality scenarios, not just those which are KS-uncolourable and admit KS sets.
Further,
\begin{equation}
n_1=0\Leftrightarrow m=|W| \Rightarrow |W|\leq \frac{d|F|}{2}, 
\end{equation}
that is, if every node in a contextuality scenario appears in at least $2$ distinct hyperedges, then $m=|W|\leq \frac{d|F|}{2}$.
More precisely,
\begin{eqnarray}
|W|\leq \frac{d|F|}{2}\Leftrightarrow n_1&\leq& \sum_{k=3}^D(k-2)n_k\nonumber\\\nonumber
&=&n_3+2n_4+3n_5+\dots+(D-2)n_D.
\end{eqnarray}

This characterizes all the contextuality scenarios for which $|W|\leq \frac{d|F|}{2}$. 
Hence, for any $n_1>0$, we must have enough nodes with degree greater than $2$ for the relation $|W|\leq \frac{d|F|}{2}$
to hold. An open problem in Sec.~5(ii) of Ref.~\cite{MMP} asks: Given a fixed $d$, at what value of $|W|$ does the inequality 
$|W|\leq \frac{d|F|}{2}$ cease to hold? We have shown that $|W|>\frac{d|F|}{2}$ iff $n_1>\sum_{k=3}^D(k-2)n_k$. That is, roughly
speaking, the inequality $|W|\leq \frac{d|F|}{2}$ ceases to hold when there are way too many nodes of degree 1 than there are nodes of 
degree 3 or higher. 

Indeed, the only known exception to $|W|\leq\frac{d|F|}{2}$ that Ref.~\cite{MMP} finds is the construction 
due to Kochen and Specker \cite{KS67} (henceforth called the ``KS67 construction'') with 
$|W|=192$, $d=3$, $|F|=118$, so that $\frac{d|F|}{2}=177$ and we have $|W|>\frac{d|F|}{2}$. However, it is still the case that $m=117<\frac{d|F|}{2}=177$ 
(strict inequality because some of the nodes appear in more than 2 hyperedges). Note that in this case $n_1=75$ which is way greater than
$n_3+7n_9=45$.

\subsection{Conditions on the parameters of a contextuality scenario for its KS-uncolourability}

We have $d|F|=\sum_{k=1}^Dkn_k$, $|W|=\sum_{i=1}^Dn_k$, and $m=|W|-n_1$ for any $d$-uniform contextuality scenario. We want to find 
conditions that rule out the existence of a $\{0,1\}$-valued solution for 
the system of $|F|$ normalization equations in $|W|$ variables with $d$ variables in each equation. 

\subsubsection{$2$-regular contextuality scenarios}
From Theorem \ref{thmunc} we know that any $2$-regular contextuality scenario is KS-uncolourable if and only if it has an odd number of 
contexts.

\subsubsection{$d$-uniform and $2$-regular contextuality scenarios of type 2Reg$(G)$}
For $d$-uniform and 2-regular contextuality scenarios (every node of degree 2) we have $m=|W|=\frac{d|F|}{2}$. There are $|F|$ normalization 
equations in $\frac{d|F|}{2}$ variables, each variable appearing in two equations. 

For any graph $G$, 2Reg$(G)$ is a 2-regular contextuality scenario and 2Reg$(G)$ is KS-uncolourable 
if and only if $|F|$ is odd (Theorem \ref{thmunc}). If we further require 2Reg$(G)$ to be a $d$-uniform contextuality scenario,
then $d$ must be even for any 2Reg$(G)$ if $|F|$ 
is odd.\footnote{For $|W|=\frac{d|F|}{2}$ to be an integer.} 
\begin{framed}
Hence, KS-uncolourability holds for those (and only those) $d$-uniform
2Reg$(G)$ contextuality scenarios which have even $d$ and odd $|F|$: any KS set satisfying such KS-uncolourability can therefore
only be constructed on an even-dimensional Hilbert space.
\end{framed}
All proofs of the Kochen-Specker theorem relying on $d$-uniform contextuality scenarios of type 2Reg$(G)$ must therefore 
require an even-dimensional Hilbert space. In other words, it is impossible to realize KS sets for these scenarios in odd dimensions.

\subsubsection{$d$-uniform contextuality scenarios}
For more general $d$-uniform hypergraphs --- that is, not necessarily restricted to those of the type 2Reg$(G)$ --- we have: $|F|$ equations with $|W|$ variables, $n_k$ of them appearing in exactly $k$ equations (for all $k\in\{1,2,\dots,D\}$),
and each equation a sum of $d$ variables adding up to 1.
Such a hypergraph is said to be KS-uncolourable when these equations do not admit a $\{0,1\}$-valued solution. We now give some sufficient
conditions for KS-uncolourability of these hypergraphs. 

Let us denote by $W_k$ the set of nodes such that each node in the set
appears in $k$ equations (contexts), i.e., $W_k=\{w^{(k)}_j|j\in\{1,\dots,n_k\}\}$, where $k\in\{1,\dots,D\}$. 
The cardinality of the set $W_k$
is $n_k$. On adding up the $|F|$ equations, we have:
 \begin{equation}
  \sum_{k=1}^D\left(k\sum_{j=1}^{n_k}p(w^{(k)}_j)\right)=|F|,
 \end{equation}
where $p(w^{(k)}_j)\in\{0,1\}$ denotes the value assigned to node $w^{(k)}_j$. 

\begin{lemma}
$\sum_{k=1}^D\left(k\sum_{j=1}^{n_k}p(w^{(k)}_j)\right)$, where $p(w^{(k)}_j)\in\{0,1\}$, is an even number if and only if
an even number of nodes of each odd degree $k$ (such that $n_k>0$) are assigned the value 1, i.e.,
$$\sum_{j=1}^{n_k}p(w^{(k)}_j)$$ is an even number for all odd $k$ with $n_k>0$.
\end{lemma}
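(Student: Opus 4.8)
The plan is to decide the parity of the integer $S\equiv\sum_{k=1}^{D}k\left(\sum_{j=1}^{n_k}p(w^{(k)}_j)\right)$ by one computation modulo $2$. For each $k$, abbreviate $a_k\equiv\sum_{j=1}^{n_k}p(w^{(k)}_j)$, the number of degree-$k$ nodes that are assigned the value $1$; since every $p(w^{(k)}_j)\in\{0,1\}$, this $a_k$ is a nonnegative integer with $0\le a_k\le n_k$, and $S=\sum_{k=1}^{D}k\,a_k$.

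I would first observe that whenever $k$ is even the summand $k\,a_k$ is an even integer, so it contributes nothing to the parity of $S$; only the terms with $k$ odd (and $n_k>0$) can matter modulo $2$. In each surviving term $k\equiv1\pmod 2$, so the term is congruent to $a_k$, which gives the key congruence
\begin{equation*}
S\ \equiv\ \sum_{k\ \mathrm{odd},\ n_k>0}a_k\pmod 2 ,
\end{equation*}
i.e.\ the parity of $S$ equals the parity of the total number of odd-degree nodes assigned the value $1$.

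Granting this, the ``if'' direction is immediate: if $a_k$ is even for every odd $k$ with $n_k>0$, the right-hand side is a sum of even integers, hence even, so $S$ is even. For the ``only if'' direction I would argue contrapositively --- assume some $a_{k_0}$ with $k_0$ odd and $n_{k_0}>0$ is odd, and conclude that the right-hand side of the key congruence, and hence $S$, is odd. I expect this to be the step that requires the most care, because the key congruence only controls the \emph{aggregate} $\sum_{k\ \mathrm{odd},\ n_k>0}a_k$, so a priori an odd $a_{k_0}$ could be offset by another odd $a_{k_1}$ in a different odd-degree class. It becomes clean in exactly the settings relevant here: when every node of the scenario has even degree the hypothesis is vacuous and the statement reduces to ``$S$ is always even'' (which, together with $|F|$ odd, is precisely the sufficient condition for KS-uncolourability we are building towards, in the spirit of Theorem~\ref{thmunc}), while in the $2$-regular case the only odd degree that can occur is $k=1$, so there is a single class to track and the contrapositive goes through directly. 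Apart from that one point of care, the proof is just the modular bookkeeping above.
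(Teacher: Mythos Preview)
Your approach is the same as the paper's: split $S$ by the parity of $k$, note the even-$k$ part is automatically even, and reduce to the parity of $\sum_{k\ \mathrm{odd}}a_k$. The paper then simply asserts that this odd-$k$ sum is even if and only if each individual $a_k$ is even, without further argument.

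Your hesitation about the ``only if'' direction is well-founded, and in fact the lemma as stated is not correct in general. The congruence you derived, $S\equiv\sum_{k\ \mathrm{odd}}a_k\pmod 2$, is exactly right, and it shows that $S$ is even precisely when an even number of the $a_k$ (over odd $k$) are odd --- not when \emph{each} of them is even. A concrete counterexample: take $n_1=n_3=1$, assign value $1$ to both the degree-$1$ node and the degree-$3$ node, so $a_1=a_3=1$; then $S=1\cdot1+3\cdot1=4$ is even, yet neither $a_1$ nor $a_3$ is even. The paper's proof glosses over exactly the cancellation you flagged. Your special-case remarks (all degrees even, or only one odd-degree class present) are the situations in which the stated biconditional does hold, and those are the cases that actually get used downstream; but as a proof of the lemma in the generality claimed, neither your attempt nor the paper's argument closes the gap, because the gap cannot be closed.
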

\begin{proof}
 This should be clear from noting that 
\begin{eqnarray}
&&\sum_{k=1}^D\left(k\sum_{j=1}^{n_k}p(w^{(k)}_j)\right)\\\nonumber
&=&\sum_{k\hspace{5pt}{\rm even}}\left(k\sum_{j=1}^{n_k}p(w^{(k)}_j)\right)+\sum_{k\hspace{5pt}{\rm odd}}\left(k\sum_{j=1}^{n_k}p(w^{(k)}_j)\right).
\end{eqnarray}
The even $k$ part of the sum is even simply because all the terms in the sum over $k$ are even. The odd $k$ part of the sum over $k$ 
is even if and only if in each term $\sum_{j=1}^{n_k}p(w^{(k)}_j)$ in the sum is even, i.e.,
$$\sum_{j=1}^{n_k}p(w^{(k)}_j)$$ is an even number for all odd $k$ with $n_k>0$.
\end{proof}

\begin{framed}
\begin{lemma}\label{parityks}
A KS contradiction arises if one of the following holds:
\begin{enumerate}
 \item $|F|$ is odd and $\sum_{j=1}^{n_k}p(w^{(k)}_j)$ is an even number for all odd $k$ with $n_k>0$.
 \item $|F|$ is even and there exists an odd $k$ with $n_k>0$ such that $\sum_{j=1}^{n_k}p(w^{(k)}_j)$ is an odd number.
\end{enumerate}
\end{lemma}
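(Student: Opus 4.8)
The plan is to derive Lemma~\ref{parityks} directly from the summed normalization equation $\sum_{k=1}^D\bigl(k\sum_{j=1}^{n_k}p(w^{(k)}_j)\bigr)=|F|$ together with the parity characterization established in the immediately preceding Lemma. The whole argument is a parity (mod 2) bookkeeping argument, so the ``hard part'' is essentially nonexistent once the right lemma is invoked; the only thing to be careful about is stating the contrapositive correctly and not conflating necessary with sufficient conditions.

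First I would recall that the preceding Lemma says: the left-hand side $L \equiv \sum_{k=1}^D\bigl(k\sum_{j=1}^{n_k}p(w^{(k)}_j)\bigr)$ is even \emph{if and only if} $\sum_{j=1}^{n_k}p(w^{(k)}_j)$ is even for every odd $k$ with $n_k>0$. Equivalently, $L$ is odd if and only if there exists some odd $k$ with $n_k>0$ for which $\sum_{j=1}^{n_k}p(w^{(k)}_j)$ is odd. Now, a KS contradiction is precisely the statement that there is \emph{no} $\{0,1\}$-valued assignment $p$ satisfying all $|F|$ normalization equations; since summing those equations forces $L=|F|$, it suffices to exhibit a parity mismatch: if the hypothesized combinatorial condition on the $p(w^{(k)}_j)$'s forces $L$ to have a parity opposite to that of $|F|$, then no such assignment can exist.

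For case~1, suppose $|F|$ is odd. If there were a valid $\{0,1\}$-assignment, then by hypothesis $\sum_{j=1}^{n_k}p(w^{(k)}_j)$ would be even for all odd $k$ with $n_k>0$; by the preceding Lemma this makes $L$ even, contradicting $L=|F|$ odd. Hence no valid assignment exists, i.e.\ a KS contradiction arises. For case~2, suppose $|F|$ is even and there is an odd $k$ with $n_k>0$ such that $\sum_{j=1}^{n_k}p(w^{(k)}_j)$ is odd; then by the preceding Lemma $L$ is odd, again contradicting $L=|F|$ even, so again no valid assignment exists.

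I should note one subtlety in how the statement of Lemma~\ref{parityks} is phrased: it says ``a KS contradiction arises if one of the following holds,'' where the ``following'' are conditions that mix a property of $|F|$ with a property that every purported $\{0,1\}$-solution would have to satisfy. The cleanest reading — and the one the proof should make explicit — is that in each case the second clause is a property that \emph{every} $\{0,1\}$-valued solution of the normalization equations must possess (it is a consequence of the equations, not an extra assumption), and the first clause (parity of $|F|$) then contradicts it via the summed equation. Making this parsing explicit is the only genuinely delicate step; everything else is routine mod-2 arithmetic, so I would present the two cases in a couple of lines each and close the proof.
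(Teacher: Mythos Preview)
Your proof is correct and follows essentially the same parity-mismatch argument as the paper, which dispatches the lemma in a single line (``an even number $\neq$ an odd number'') by implicitly invoking the preceding Lemma. Your more explicit unpacking---especially the clarification that the second clause in each case is to be read as a property \emph{any} putative $\{0,1\}$-solution must have, so that the parity of $|F|$ then contradicts the summed normalization identity---is a helpful elaboration but not a different method.
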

\end{framed}
\begin{proof}
 This is trivially the case because an even number $\neq$ an odd number.
\end{proof}

\subsection{The Kochen-Specker (1967) construction}
In terms of the parameters we have defined, the KS-uncolourability of the KS67 construction can be 
seen from the following:

\begin{itemize}
 \item $d=3$, $n_1=75 (=192-117)$, $n_2=90$, $n_3=24$, $n_9=3$, and $n_k=0$ for all other $k$.
 \item $|W|=n_1+n_2+n_3+n_9=192$.
 \item $m=n_2+n_3+n_9=90+24+3=117$.
 \item $d|F|=n_1+2n_2+3n_3+9n_9=75+180+72+27=354$, hence $|F|=354/3=118$.
 \item $m=|W|-n_1=117<|W|=192$, $m<d|F|/2=177$ but $|W|>d|F|/2=177$.
 \item $d=3$, $|F|=118$, $|W|=192$, $m=117$.
\end{itemize}

On adding up the 118 normalization constraints on KS67, we have:
\begin{eqnarray}
&&\sum_{j=1}^{75}p(w^{(1)}_j)+2\sum_{j=1}^{90}p(w^{(2)}_j)\nonumber\\
&+&3\sum_{j=1}^{24}p(w^{(3)}_j)+9\sum_{j=1}^3p(w^{(9)}_j)\nonumber\\
&=&118.
\end{eqnarray}
We know that one of the normalization equations is $$\sum_{j=1}^3p(w^{(9)}_j)=1,$$ hence there exists an odd $k=9$ 
(with $n_k=n_9=3>0$) such that $\sum_{j=1}^3p(w^{(9)}_j)=1,$ an odd number. Since $|F|=118$ is even, we have a 
KS contradiction for this hypergraph from Lemma \ref{parityks}.

\section{Contextuality scenarios and extremal probabilistic models on them}

Once we know that a contextuality scenario is KS-uncolourable, what can we say about the structure of extremal probabilistic models on it? If the extremal probabilistic models on a contextuality scenario admit a ``nice" characterization, then this can be used in obtaining noise-robust noncontextuality inequalities. In this section, we consider contextuality scenarios of type 2Reg($\cdot$) and extremal probabilistic models on them. Later we will use this characterization for obtaining noise-robust noncontextuality inequalities.

Note that since 2Reg$(G)$ are matching scenarios in the sense of Ref.~\cite{AFLS}, our characterization of extremal probabilistic models presented below is already known from graph-theoretic methods used in Ref.~\cite{AFLS}. All we have done below is to provide an alternative
self-contained proof that proceeds directly from Theorem \ref{extremals} instead of relying on known results from graph theory.
This is partly motivated by a need to explore in more generality (than done in Ref.~\cite{AFLS}) the consequences of Theorem \ref{extremals}, many of which we will later use in obtaining noise-robust
noncontextuality inequalities.

\begin{theorem}\label{halfinteger}
 Each extremal probabilistic model on 2Reg$(G)$ for any graph $G$ is an extension of the unique probabilistic model 
 on an induced subscenario consisting of some set of disjoint odd $k$-hypercycles, $k\in\{3,5,\dots\}$, and/or some set of 
 singleton contexts. Hence, each extremal probabilistic model on 2Reg$(G)$ assigns probabilities from $\{0,\frac{1}{2},1\}$ to the nodes 
 of 2Reg$(G)$.
\end{theorem}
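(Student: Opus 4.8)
The plan is to invoke Theorem~\ref{extremals}: every extremal probabilistic model $p\in\mathcal{G}(\text{2Reg}(G))$ is the extension of the \emph{unique} probabilistic model $p_S$ on the induced subscenario $H_S$, where $S=S_p=\{w\mid p(w)\neq 0\}$. So the task reduces to understanding which induced subscenarios of a $2$-regular hypergraph can possibly carry a unique probabilistic model, and what that model looks like. First I would record the key structural fact about $H=\text{2Reg}(G)$: every node has degree exactly $2$, and any two hyperedges meeting in a node meet in exactly one node. When we pass to an induced subscenario $H_S$, each hyperedge $f\in F(H)$ becomes $f\cap S$, and each node in $S$ still lies in (at most) the same two hyperedge-restrictions. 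So $H_S$ is itself a sub-hypergraph of a $2$-regular hypergraph; its hyperedges have sizes between $0$ and $d$. For $\mathcal{G}(H_S)\neq\varnothing$ we need every restricted hyperedge to be nonempty (i.e., $S$ is a transversal of $H$).

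Next I would analyze when such an $H_S$ has a \emph{unique} probabilistic model. The plan is to think of the normalization equations $\sum_{w\in f\cap S}p(w)=1$ as a linear system; uniqueness means the affine solution set is a single point. I would argue that the ``generic'' building blocks forcing uniqueness are (i) singleton hyperedges $\{w\}$, which pin $p(w)=1$, and (ii) cycles of even/odd length in the intersection pattern. Concretely: contract the picture to the ``line-graph-like'' structure where hyperedges are vertices and shared nodes are edges; because $H$ is $2$-regular, $H_S$ decomposes into connected components each of which is a path or a cycle of hyperedges. On a path of hyperedges with a free end (a hyperedge containing a node of degree $1$ in $H_S$, or more subtly a hyperedge with $\geq 2$ nodes not shared) the solution is generally \emph{not} unique — there is a free parameter — unless the path has actually collapsed to singletons. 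On a cycle of hyperedges $C_1 - C_2 - \dots - C_\ell - C_1$ where each $C_i$ has exactly two nodes (both shared), the normalization equations become $x_i + x_{i+1} = 1$ cyclically, which has a unique solution iff $\ell$ is odd (then $x_i = 1/2$ for all $i$), and a one-parameter family if $\ell$ is even. A cycle of hyperedges in which some $C_i$ has a third, unshared node again introduces slack. Hence the only components of $H_S$ compatible with uniqueness are: singleton hyperedges (value $1$ on that node), and odd cycles in which every hyperedge has been reduced to exactly two nodes — i.e., odd $k$-hypercycles — forcing value $1/2$ on each of their nodes. I would also need to check the interface: a node of value $1$ sitting in a hyperedge forces the \emph{other} nodes of that hyperedge to value $0$ and hence out of $S$, which is consistent with $H_S$ being a disjoint union of these blocks, and guarantees that extending by $0$ off $S$ normalizes every hyperedge of the full $H$ (each degree-$2$ node carrying value $1$ normalizes its two hyperedges; each pair of ``opposite'' hyperedges of a $1/2$-hypercycle gets $1/2+1/2$; singleton-hyperedge nodes normalize their two hyperedges).

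Assembling these: the induced subscenario underlying any extremal $p$ must be a disjoint union of odd $k$-hypercycles ($k\in\{3,5,\dots\}$) and singleton contexts, with $p_S$ the forced $\{1/2\}$-on-hypercycles / $\{1\}$-on-singletons assignment; extending by $0$ gives $p$ taking values in $\{0,\tfrac12,1\}$. Conversely (though the statement only asserts one direction) each such union that is actually realized as an induced subscenario does give an extremal model. I expect the main obstacle to be the careful case analysis showing that \emph{any} other component type — a path of hyperedges, an even cycle, or a cycle/path with an unshared extra node — fails uniqueness, i.e., exhibiting the free parameter explicitly and checking it stays within $[0,1]$ so that genuinely distinct probabilistic models exist (one must make sure the perturbation direction keeps all node-values in $[0,1]$, which is where the $2$-regularity and "share exactly one node" properties do the work). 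Once uniqueness is pinned to exactly the odd-hypercycle-plus-singleton configurations, the half-integrality conclusion is immediate.
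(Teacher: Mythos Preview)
Your overall strategy --- invoke Theorem~\ref{extremals} and then classify which induced subscenarios of a $2$-regular hypergraph carry a unique probabilistic model --- is exactly the paper's. But your decomposition step has a gap: the claim that ``because $H$ is $2$-regular, $H_S$ decomposes into connected components each of which is a path or a cycle of hyperedges'' does not follow. Two-regularity of $H$ only says that each \emph{node} of $H_S$ lies in at most two hyperedges (so each edge of your dual picture joins two vertices); it does not bound the number of nodes in a \emph{hyperedge} (the degree of a vertex in the dual). A hyperedge of $H_S$ containing three or more nodes, all of degree $2$ in $H_S$, is a branching point in the dual --- neither a path nor a cycle --- and your case list (paths, even/odd cycles, cycles with an extra unshared node) omits this configuration. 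The slogan ``introduces slack'' is not automatic here either, since a perturbation at such a hyperedge must propagate consistently along every adjacent branch, which is a global rather than local constraint.

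The paper sidesteps the case analysis with a counting argument. It splits $S_p$ by node-degree in $H_{S_p}$: the degree-$1$ nodes (each of which, by the ``two hyperedges of $2{\rm Reg}(G)$ share exactly one node'' property, occupies a singleton hyperedge of $H_{S_p}$ and is forced to value $1$) and the degree-$2$ nodes, which form a subhypergraph $H_2$ carrying a strictly positive unique model. On $H_2$: (i) uniqueness of a strictly positive solution forces $|F(H_2)|\geq|W(H_2)|$ (at least as many linear constraints as unknowns); (ii) no hyperedge of $H_2$ is a singleton (else a forced $1$ would push some other positive value to $0$), so $d(f)\geq 2$ for all $f$; (iii) $\sum_{f} d(f)=2|W(H_2)|$ by $2$-regularity. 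Combining, $2|F(H_2)|\leq \sum_f d(f)=2|W(H_2)|\leq 2|F(H_2)|$, whence $d(f)=2$ for every $f$ and $H_2$ is $2$-uniform and $2$-regular, i.e., a disjoint union of hypercycles; uniqueness then excludes the even ones. This counting step is precisely what legitimizes your path/cycle picture, and is the missing ingredient in your plan.
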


\begin{proof}
We know that each extremal probabilistic model $p$ on $H\equiv$ 2Reg$(G)$ is in one-to-one correspondence with the set of nodes to which 
it assigns nonzero probability: $S_p\equiv\{w\in W(H)|p(w)\neq 0\}$.

For any graph $G$, every node in $H(=$ 2Reg$(G))$ appears in two contexts. Hence, in any induced subscenario $H_{S_p}$ defined 
by a subset of nodes of $H$ given by $S_p$ (for extremal probabilistic model $p$ on $H$), 
none of the nodes can appear in more than two contexts and none of them can be assigned probability zero.
This leaves two possibilities for nodes in $H_{S_p}$: either a node appears in one context or it appears in two contexts.

Let us denote by $q_{S_p}$ the restriction of 
extremal probabilistic model $p$ on $H$ to the unique probabilistic model $q_{S_p}$ on $H_{S_p}$: $q_{S_p}(w)=p(w)>0$ for all $w\in S_p$ (and $p(w)=0$ for all $w\in W\backslash S_p$).

Consider the set of nodes $S_p^{(1)}\subseteq S_p$ such that each of these nodes appears in exactly one context in $H_{S_p}$ and the subhypergraph $H_1$ 
obtained from $H_{S_p}$ by deleting all nodes in $S_p\backslash S_p^{(1)}$ and all hyperedges $f\in F(H_{S_p})$ such that $f\cap S_p^{(1)}=\varnothing$.
$H_1$ is then a hypergraph consisting of $|S_p^{(1)}|$ nodes, each contained in its own singleton hyperedge, hence $H_1$ admits the unique probabilistic model $q_{S_p^{(1)}}$
assigning probability 1 to each node: $q_{S_p^{(1)}}(w)=p(w)=1$ for all $w\in S_p^{(1)}$. That is, $H_1$ is a union of (disjoint) singleton contexts.

Now consider the remaining set of nodes $S_p^{(2)}\subseteq S_p$ (where $S_p^{(2)}\equiv S_p\backslash S_p^{(1)}$) such that each node appears in exactly two contexts in $H_{S_p}$ and the subhypergraph 
$H_2$ obtained from $H_{S_p}$ by deleting all nodes in $S_p^{(1)}$ and all hyperedges $f\in F(H_{S_p})$ such that $f\cap S_p^{(2)}=\varnothing$.

We will now show that $H_2$ is a union of disjoint odd $k$-hypercycles, $k\geq 3$: for $H_2$, the number of nodes of degree 2 is $n_2=|S_p^{(2)}|$,
and the restriction of extremal probabilistic model $p>0$ on $H$ to nodes in $H_2$, defined by $q_{S_p^{(2)}}(w)=p(w)$ for all $w\in S_p^{(2)}$, is also extremal on $H_2$ (otherwise $p>0$ on 
$H$ can't be extremal). Now, the existence of a probabilistic model $q_{S_p^{(2)}}>0$ on $H_2$ that is extremal $\Leftrightarrow$ $q_{S_p^{(2)}}>0$ on $H_2$ is unique (from Theorem \ref{extremals})
$\Rightarrow$ $|F(H_2)|\geq|W(H_2)|$ (where $W(H_2)=S_p^{(2)}$). 
From $n_2(H_2)=|W(H_2)|$, it follows that
$\sum_{f\in F(H_2)}d(f)=2|W(H_2)|$, where $d(f)>1$ for all $f\in F(H_2)$ (because $d(f)=1$ for any $f\in F(H_2)$
would be in conflict with the requirement that $n_2(H_2)=|W(H_2)|$ and $q_{S_p^{(2)}}>0$ on $H_2$). 
Now: $$\min_{d(f):f\in F(H_2)}\left\{\sum_{f\in F(H_2)}d(f)\right\}=2|F(H_2)|,$$
achieved at $d(f)=2$ for all $f\in F(H_2)$. Hence, $$\sum_{f\in F(H_2)}d(f)\geq 2|F(H_2)|.$$
But since $\sum_{f\in F(H_2)}d(f)=2|W(H_2)|$ and $|W(H_2)|\leq|F(H_2)|$,
we have that $$\sum_{f\in F(H_2)}d(f)\leq2|F(H_2)|.$$
Overall, $$2|F(H_2)|\leq\sum_{f\in F(H_2)}d(f)\leq2|F(H_2)|,$$
which means that $\sum_{f\in F(H_2)}d(f)=2|F(H_2)|$. 
We therefore have: $|F(H_2)|=|W(H_2)|$ and $d(f)=2$ for all $f\in F(H_2)$.
This gives us the following characterization of $H_2$:
\begin{center}
$n_2(H_2)=|W(H_2)|$, $|F(H_2)|=|W(H_2)|$, and $d(f)=2$ for all $f\in F(H_2)$\\ $\Leftrightarrow$ \\ $H_2$ is a union of disjoint $k$-hypercycles $(k\geq 3)$.
\end{center}

This follows from noting that, firstly, $H_2$ is a $2$-uniform hypergraph (that is, $d(f)=2$ for all $f\in F(H_2)$), hence really a graph.
Secondly, a $k$-cycle ($k\geq3$) is defined as a connected graph where each vertex is of degree 2 and the number of edges is equal to the number of vertices.
Hence, a graph where each vertex is of degree 2 and the number of edges is equal to the number of vertices is a union of disjoint $k$-cycles.
$H_2$ is just such a (hyper)graph.

Together with the fact that $H_2$ admits the unique probabilistic model $q_{S_p^{(2)}}>0$, this means that the $k$-hypercycles in the disjoint union have,
in fact, odd $k\geq 3$. This is because even $k$-hypercycles admit non-unique 
deterministic extremal probabilistic models which allow for assignment of probability $0$ to some nodes.
That is,

\begin{center}
$n_2(H_2)=|W(H_2)|$, $|F(H_2)|=|W(H_2)|$, $d(f)=2$ for all $f\in F(H_2)$, and $H_2$ admits a unique probabilistic model 
\\ $\Leftrightarrow$ \\ $H_2$ is a union of disjoint odd $k$-hypercycles $(k\geq 3)$.
\end{center}

This in turn implies that the unique probabilistic model $q_{S_p^{(2)}}>0$ is in fact given by $q_{S_p^{(2)}}(w)=\frac{1}{2}$
for all $w\in S_p^{(2)}$.

Hence: every extremal probabilistic model $p$ on $H =$ 2Reg$(G)$ is given by the induced subscenario $H_{S_p}$ consisting of a disjoint union of $H_1$
and $H_2$, the former a set of singleton contexts and the latter a union of disjoint odd $k$-hypercycles. Overall, $p(w)=1$ for all $w\in S_p^{(1)}$,
$p(w)=\frac{1}{2}$ for all $w\in S_p^{(2)}$, and $p(w)=0$ for all $w\in W\backslash S_p$ (where $S_p=S_p^{(1)}\sqcup S_p^{(2)}$).
 
\end{proof}

Combining Theorems \ref{2regclawtriangle}, \ref{2regncycle}, and \ref{halfinteger}, we have that we only need to look for the presence of $K_{1,3}$ and $k$-cycles ($k\geq 3$) in any graph $G$ in order to ascertain all the extremal probabilistic models on the scenario 2Reg($G$).

\begin{corollary}\label{2regextremals}
	$G$ contains a subgraph $K_{1,3}$ or an odd $k$-cycle ($k\geq 3$) if and only if the indeterministic extremal probabilistic models on 2Reg($G$) correspond to $k$-hypercycle extremal probabilistic models, $k\geq3$.
\end{corollary}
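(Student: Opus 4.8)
The plan is to peel off the probabilistic layer with Theorem~\ref{halfinteger} and then prove a purely combinatorial equivalence in the graph/hypergraph correspondence. By Theorem~\ref{halfinteger}, every extremal probabilistic model on 2Reg$(G)$ is the extension of the unique model on an induced subscenario that is a disjoint union of odd $k$-hypercycles ($k\geq3$) and singleton contexts, and it assigns $\tfrac12$ precisely to the hypercycle nodes; such a model is indeterministic iff at least one odd $k$-hypercycle occurs in the subscenario inducing it. So the indeterministic extremal models of 2Reg$(G)$ ``correspond to $k$-hypercycle extremal probabilistic models'' exactly when 2Reg$(G)$ contains an odd $k$-hypercycle. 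It therefore suffices to show: \emph{2Reg$(G)$ contains an odd $k$-hypercycle ($k\geq3$) if and only if $G$ contains a subgraph $K_{1,3}$ (giving $k=3$) or an odd $k$-cycle ($k\geq3$)} --- the content implicitly summarised by the sentence preceding the corollary, now to be made precise using Theorems~\ref{2regclawtriangle} and~\ref{2regncycle} and Lemma~\ref{ncycle}.

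For the ``only if'' direction I would take an odd $k$-hypercycle $\{w_1,\dots,w_k\}$ in 2Reg$(G)$. Each $w_i$ is, by definition of 2Reg$(\cdot)$, an unordered pair of edges of $G$ meeting at a vertex, and it lies in exactly the two hyperedges of 2Reg$(G)$ indexed by its two constituent edges. Since $w_i,w_{i+1}$ share a hyperedge they share an edge $g_i$ of $G$, so $w_i=\{g_{i-1},g_i\}$ (indices mod $k$). The first key step is to verify the $g_i$ are pairwise distinct: $g_i=g_{i+1}$ would collapse $w_{i+1}$ to a singleton, while $g_i=g_j$ with $j\neq i,i\pm1$ would put the four distinct nodes $w_i,w_{i+1},w_j,w_{j+1}$ --- in particular the non-consecutive pair $\{w_i,w_j\}$ --- into the single hyperedge $f_{g_i}$, contradicting the hypercycle definition. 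Letting $v_i$ be the vertex at which $g_{i-1}$ and $g_i$ meet, I would then case-split: if $v_i=v_j$ for some $i\neq j$ then, since $\{g_{i-1},g_i\}\neq\{g_{j-1},g_j\}$, at least three distinct edges pass through that vertex, so $G\supseteq K_{1,3}$; otherwise $v_1,\dots,v_k$ are distinct, $g_i=\{v_i,v_{i+1}\}$, and $v_1g_1v_2\cdots v_kg_kv_1$ is a $k$-cycle in $G$, odd because the hypercycle was odd.

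For the ``if'' direction: if $G$ has a vertex of degree $\geq3$ with incident edges $e_1,e_2,e_3$, I would check directly that $\{e_1,e_2\},\{e_2,e_3\},\{e_1,e_3\}$ is a $3$-hypercycle in 2Reg$(G)$ --- consecutive pairs share $f_{e_2},f_{e_3},f_{e_1}$ respectively, and no hyperedge contains all three since $\{e_1,e_2\}\cap\{e_2,e_3\}\cap\{e_1,e_3\}=\varnothing$. If $G$ has an odd $k$-cycle with edge set $\{g_1,\dots,g_k\}$, the nodes $w_i=\{g_{i-1},g_i\}$ of 2Reg$(G)$ form a $k$-hypercycle: $w_i,w_{i+1}$ share $f_{g_i}$, each $f_{g_j}$ meets $\{w_1,\dots,w_k\}$ in exactly the consecutive pair $\{w_j,w_{j+1}\}$, and hyperedges indexed by non-cycle edges meet it trivially --- this is Lemma~\ref{ncycle} read off for the embedded cycle. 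Both constructions produce odd hypercycles, and Theorem~\ref{halfinteger} then certifies the associated indeterministic extremal models.

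I expect the ``only if'' direction to be the main obstacle: the distinctness of the $g_i$ and the careful tracking of which hyperedges of 2Reg$(G)$ a set of hypercycle nodes can occupy is where the ``no other subsets appear in a hyperedge'' clause of the hypercycle definition must be genuinely invoked, and the hypergraph-to-graph translation has to be done so that an apparent high-degree vertex really does yield a $K_{1,3}$ and an apparent short circuit really is a cycle. A secondary point to handle with care is that the odd hypercycles produced by Theorem~\ref{halfinteger} are genuine sub-hypercycles of 2Reg$(G)$ --- they arise as connected components of the degree-$2$ part of an induced subscenario whose hyperedges are restrictions of those of 2Reg$(G)$, so the ``no other subset in a hyperedge'' property is inherited --- which is what makes the combinatorial equivalence above applicable to them rather than to a weaker notion of containment.
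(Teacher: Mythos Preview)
The paper states this as a corollary with no proof beyond the preceding sentence saying it follows from ``combining Theorems~\ref{2regclawtriangle}, \ref{2regncycle}, and~\ref{halfinteger}''. Your proposal is a correct and careful working-out of what that combination actually entails.

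One point worth noting: Theorems~\ref{2regclawtriangle} and~\ref{2regncycle} characterise when 2Reg$(G)$ \emph{is} a $k$-hypercycle, not when it \emph{contains} one, so the ``combination'' the paper alludes to is not entirely mechanical. Your direct combinatorial argument in the ``only if'' direction --- tracking the edges $g_i$, proving their distinctness via the hypercycle clause, and case-splitting on whether the meeting vertices $v_i$ coincide --- is precisely the work needed to bridge that gap, and it is done correctly. In effect you are showing that any odd $k$-hypercycle sitting inside 2Reg$(G)$ is itself 2Reg$(G')$ for a subgraph $G'\subseteq G$, after which the cited theorems apply; the paper takes this step for granted. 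Your observation at the end, that the odd hypercycles arising from Theorem~\ref{halfinteger} as components of an induced subscenario are genuine hypercycles of 2Reg$(G)$ (because the ``no other subset in a hyperedge'' condition only becomes easier when hyperedges shrink), is also a detail the paper does not spell out but which is needed for the argument to go through.
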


\begin{theorem}\label{khypercyclemodels} Every scenario 2Reg($K_{m,n}$) constructed from bipartite graph $K_{m,n}$ (with $mn>1$ odd) containing a bipartite subgraph $K_{1,k}$ or $K_{k,1}$, where $k (\leq m,n)$ is odd, admits extremal probabilistic model(s) corresponding to induced subscenarios obtained from a union of the $k$-hypercycle with some singletons.
\end{theorem}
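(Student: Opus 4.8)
\emph{Proof proposal.} The plan is to fix a copy of $K_{1,k}$ inside $K_{m,n}$, exhibit a specific $k$-hypercycle among the nodes of 2Reg$(K_{m,n})$ built from the edges of that claw, and then show the rest of the scenario can be ``filled in'' by singleton contexts so that the union is an induced subscenario carrying a unique probabilistic model; extremality of the resulting extension to 2Reg$(K_{m,n})$ then follows from Theorem \ref{extremals}.

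First I would fix notation. Since 2Reg$(\cdot)$ does not distinguish the two sides of the bipartition of $K_{m,n}$, it suffices to treat the case of a $K_{1,k}$ subgraph, say centred at $v_1$ (so $k\le n$); the $K_{k,1}$ case follows by exchanging the two parts. Write $e_{ij}\equiv\{v_i,t_j\}$ for the edges of $K_{m,n}$ and $e_j\equiv e_{1j}$, $j=1,\dots,k$, for the claw edges. Among the nodes of 2Reg$(K_{m,n})$ consider only the ``cyclically consecutive'' pairs $c_j\equiv(e_j,e_{j+1})$, $j=1,\dots,k$ (indices modulo $k$, so $c_k=(e_k,e_1)$). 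A direct check shows $c_j$ lies in exactly the two hyperedges $f_{1,j},f_{1,j+1}$, that $f_{1,j}$ contains precisely $c_{j-1}$ and $c_j$ among the $c$'s, and hence that $\{c_1,\dots,c_k\}$ induces a $k$-hypercycle in 2Reg$(K_{m,n})$; since $k$ is odd, its unique probabilistic model assigns $\frac{1}{2}$ to every $c_j$ (see Theorem \ref{halfinteger}).

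The substantive step is turning this into an extremal model of the whole scenario. Setting every node outside $\{c_1,\dots,c_k\}$ to $0$ already normalizes $f_{1,1},\dots,f_{1,k}$ (each receiving $\frac{1}{2}+\frac{1}{2}$) but leaves the other $mn-k$ hyperedges $f_{ij}$ unnormalized; each of these must be rescued by a single node assigned value $1$, and consistency --- such a node lies in two hyperedges and must normalize both with everything else $0$ --- forces the chosen nodes to correspond to pairwise edge-disjoint pairs of adjacent edges of $K_{m,n}$ none of which is a claw edge. In other words, I need to partition $E(K_{m,n})\setminus\{e_1,\dots,e_k\}$ into pairs of edges sharing a common vertex. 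This is the step I expect to be the main obstacle, since the pairing must be globally consistent; but parity is favourable --- $m$, $n$, $k$ are all odd, so $mn-k$ is even --- and an explicit pairing works. Within each row $i\ge 2$ pair the edges $e_{i,1},\dots,e_{i,n-1}$ consecutively two at a time, leaving one surplus edge $e_{i,n}$ per row; pair up these $m-1$ surplus edges (note $m-1$ is even) along the column $t_n$; and in row $1$ pair the surviving edges $e_{1,k+1},\dots,e_{1,n}$ consecutively ($n-k$ of them, an even number, none if $k=n$). Every non-claw edge is used exactly once, each pair shares a vertex of $K_{m,n}$, and no claw edge is touched.

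Finally I would assemble the conclusion. Let $T$ be the set of nodes of 2Reg$(K_{m,n})$ corresponding to the pairs just constructed, and set $S\equiv\{c_1,\dots,c_k\}\cup T$. Then $T$ is disjoint from $\{c_1,\dots,c_k\}$, no $T$-node lies in any $f_{1,j}$ with $j\le k$, and each of the remaining $mn-k$ hyperedges $f_{ij}$ meets $S$ in exactly one $T$-node; hence the induced subscenario $H_S$ is precisely the disjoint union of the $k$-hypercycle on $\{c_1,\dots,c_k\}$ with singleton contexts on the nodes of $T$. Its probabilistic model is unique --- the singletons force each $T$-node to $1$ and the odd hypercycle forces each $c_j$ to $\frac{1}{2}$ --- and its extension ($\frac{1}{2}$ on each $c_j$, $1$ on each $T$-node, $0$ elsewhere) is a genuine probabilistic model on 2Reg$(K_{m,n})$, every hyperedge being normalized as checked above, with support exactly $S$. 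By Theorem \ref{extremals} this extension is an extremal probabilistic model of 2Reg$(K_{m,n})$ of the asserted form, and applying the same construction to any $K_{k,1}$ subgraph (after swapping the two parts) settles the remaining case.
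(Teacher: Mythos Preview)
Your proof is correct and follows essentially the same strategy as the paper's: isolate the $k$-hypercycle coming from the claw, then cover the remaining $mn-k$ hyperedges with singleton nodes, and appeal to Theorem~\ref{extremals}. The only difference is that the paper outsources the existence of the singleton cover to Theorem~\ref{thmunc} (observing that $K_{m,n}\setminus K_{1,k}$ has an even number of edges, so 2Reg of it is KS-colourable and hence admits a deterministic extremal model), whereas you construct the adjacent-edge pairing of $E(K_{m,n})\setminus\{e_1,\dots,e_k\}$ by hand---this makes your argument more self-contained but otherwise parallel.
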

\begin{proof}
Taking the complement of $K_{m,n}$, once subgraph $K_{1,k}$ or $K_{k,1}$ is removed, we denote the resulting graph as $G(K_{m,n}\backslash K_{1,k})$ or $G(K_{m,n}\backslash K_{k,1})$, respectively. Since both $G(K_{m,n}\backslash K_{1,k})$ and  $G(K_{m,n}\backslash K_{k,1})$ have $mn-k$ (an even number) of edges, from Theorem \ref{thmunc} we have that 2Reg($G(K_{m,n}\backslash K_{1,k})$) and 2Reg($G(K_{m,n}\backslash K_{k,1})$) are KS-colourable and therefore admit extremal probabilistic models induced entirely by singletons.\footnote{Note that when $k=n$, $G(K_{m,n}\backslash K_{1,k=n})$=$K_{m-1,n}$, and when $k=m$, $G(K_{m,n}\backslash K_{k=m,1})$=$K_{m,n-1}$. That is, the subgraphs are complete bipartite graphs in these cases, but not otherwise.} Now consider an induced subscenario of 2Reg($G(K_{m,n}\backslash K_{1,k})$) or 2Reg($G(K_{m,n}\backslash K_{k,1})$) that consists entirely of singletons so that the number of hyperedges in 2Reg($G(K_{m,n}\backslash K_{1,k})$) or 2Reg($G(K_{m,n}\backslash K_{k,1})$) is twice the number of singletons in this induced subscenario.\footnote{This factor of two arises because the contextuality scenarios are 2-regular, i.e., every vertex appears in two hyperedges.}
Extending this induced subscenario by adding a $k$-hypercycle (disjoint from the subscenario) leads to an induced subscenario of 2Reg($K_{m,n}$), namely, one that is a union of a $k$-hypercycle with the singletons. (See Fig.~\ref{inducedsubscenario} for an illustration in the case of $K_{3,3}$.)
\end{proof}

\section{Noise-robust noncontextuality inequalities from a hypergraph invariant}
We are finally in a position to use the understanding developed so far to obtain noise-robust noncontextuality inequalities for KS-uncolourable contextuality scenarios. The noise-robust noncontextuality inequalities reported in Ref.~\cite{KunjSpek} and the more fine-grained ones for the case of the 18 ray scenario \cite{Cabello18ray} reported in Ref.~\cite{anithesis} will be seen to be special cases of our inequalities.  We begin with an outline of the general framework within which our noise-robust noncontextuality inequalities will be obtained. In particular, we will introduce a hypergraph invariant and make precise the role that it plays in our inequalities. This framework is applicable to any KS-uncolourable contextuality scenario and we will study some well-known examples of such scenarios. This is in contrast to the framework of Ref.~\cite{robustcsw} which is only applicable to contextuality scenarios that are KS-colourable and also satisfy the property that all probabilistic models on them obey consistent exclusivity \`a la AFLS \cite{AFLS}.

\begin{figure}[htb!]\centering
	\includegraphics[scale=0.31]{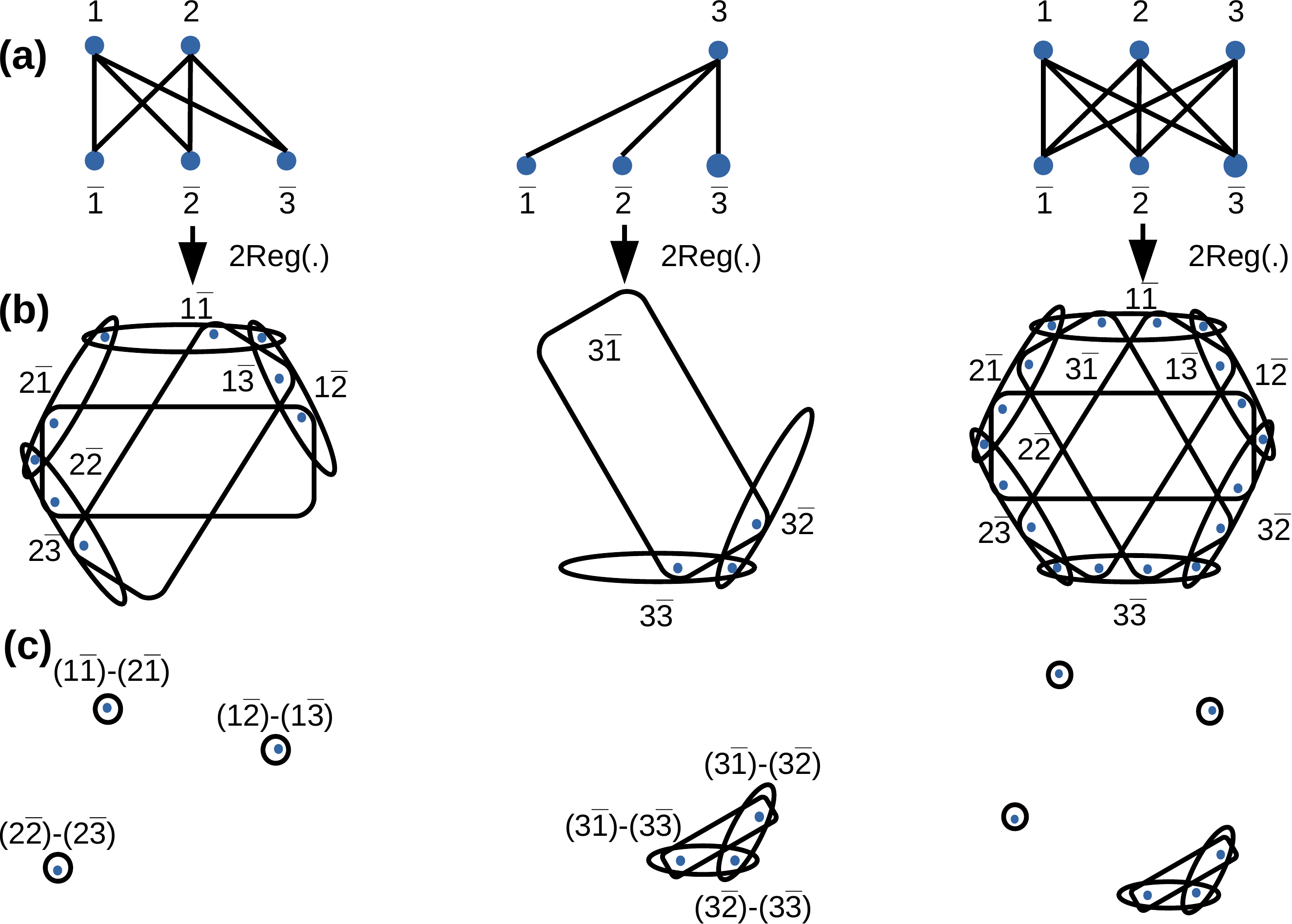}
	\caption{(a) $K_{3,3}$ and its two subgraphs, $K_{2,3}$ and $K_{1,3}$, (b) the contextuality scenarios obtained under the mapping 2Reg($\cdot$), and (c) the induced subscenarios corresponding to extremal probabilistic models on the respective contextuality scenarios, 2Reg($K_{2,3}$), 2Reg($K_{1,3}$), and 2Reg($K_{3,3}$). The induced subscenario corresponding to an extremal probabilistic model on 2Reg($K_{3,3}$) is a disjoint union of induced subscenarios corresponding to 2Reg($K_{2,3}$) and  2Reg($K_{1,3}$).}
	\label{inducedsubscenario}
\end{figure}
\twocolumn
\subsection{Operational equivalences}
In keeping with the treatment in Ref.~\cite{KunjSpek}, we associate with each KS-uncolourable scenario two kinds of hypergraphs: one corresponding to the operational equivalences presumed between measurement events and another corresponding to operational equivalences presumed between source events.\footnote{We are assuming that in any experimental test of noncontextuality, these operational equivalences have been established \cite{exptlpaper}.}

Assuming the contextuality scenario consists of $n$ (measurement) contexts with $d$ nodes each, we consider $n$ measurement settings $M_i$, $i\in\{1,2,\dots,n\}\equiv[n]$, each with $d$ possible outcomes, $m_i\in\{1,2,\dots,d\}\equiv[d]$. We assume operational equivalences between the measurement outcomes $[m_i|M_i]$ that are reflected in the contextuality scenario. These are of the type: $[m_i|M_i]\simeq[m_j|M_j]$, for some pairs $\{i,j\}\subset\{1,2,\dots,d\}$. The assumption of measurement noncontextuality then says: $\xi(m_i|M_i,\lambda)=\xi(m_j|M_j,\lambda)\quad\forall \lambda\in\Lambda$.

We also consider source settings $S_i$, $i\in[n]$, each with $d$ possible outcomes $s_i\in[d]$, 
such that the following operational equivalences hold among the sources:
\begin{align}
\forall [m|M]:&\sum_{s_i=1}^dp(m,s_i|M,S_i)=\sum_{s_j=1}^dp(m,s_j|M,S_j),\nonumber\\
&\textrm{ for all }i,j\in[n].
\end{align}
That is, 
\begin{equation}
[\top|S_1]\simeq [\top|S_2]\simeq\dots\simeq [\top|S_n].
\end{equation}

Given that $$p(m,s|M,S)=\sum_{\lambda\in\Lambda}\xi(m|M,\lambda)\mu(s|S,\lambda)\mu(\lambda|S),$$ the assumption of 
preparation noncontextuality then says:
\begin{equation}
\mu(\lambda|S_1)=\mu(\lambda|S_2)=\dots=\mu(\lambda|S_n)\equiv\nu(\lambda)\quad\forall\lambda\in\Lambda.
\end{equation}
Here, 
\begin{equation}
 \mu(\lambda|S_i)\equiv\sum_{s_i=1}^d\mu(\lambda|S_i,s_i)p(s_i|S_i),\textrm{for all } i\in[n].
\end{equation}

We recall the measurement events and source events hypergraphs for the example of Ref.~\cite{KunjSpek} in Fig.~\ref{fig7}, where $n=9$ and $d=4$. 
 \begin{figure}
 \centering
 \includegraphics[scale=0.4]{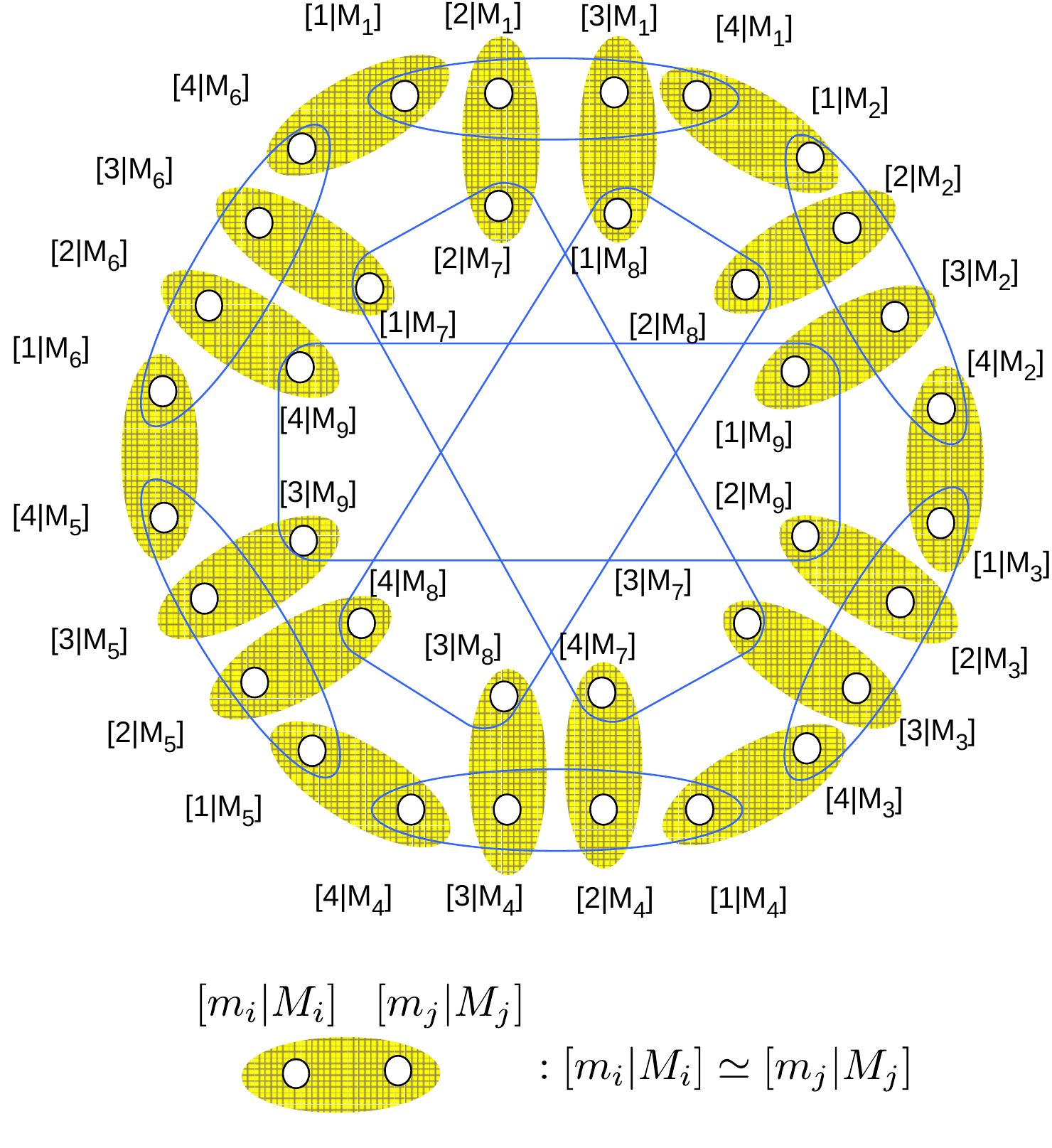}
 \includegraphics[scale=0.4]{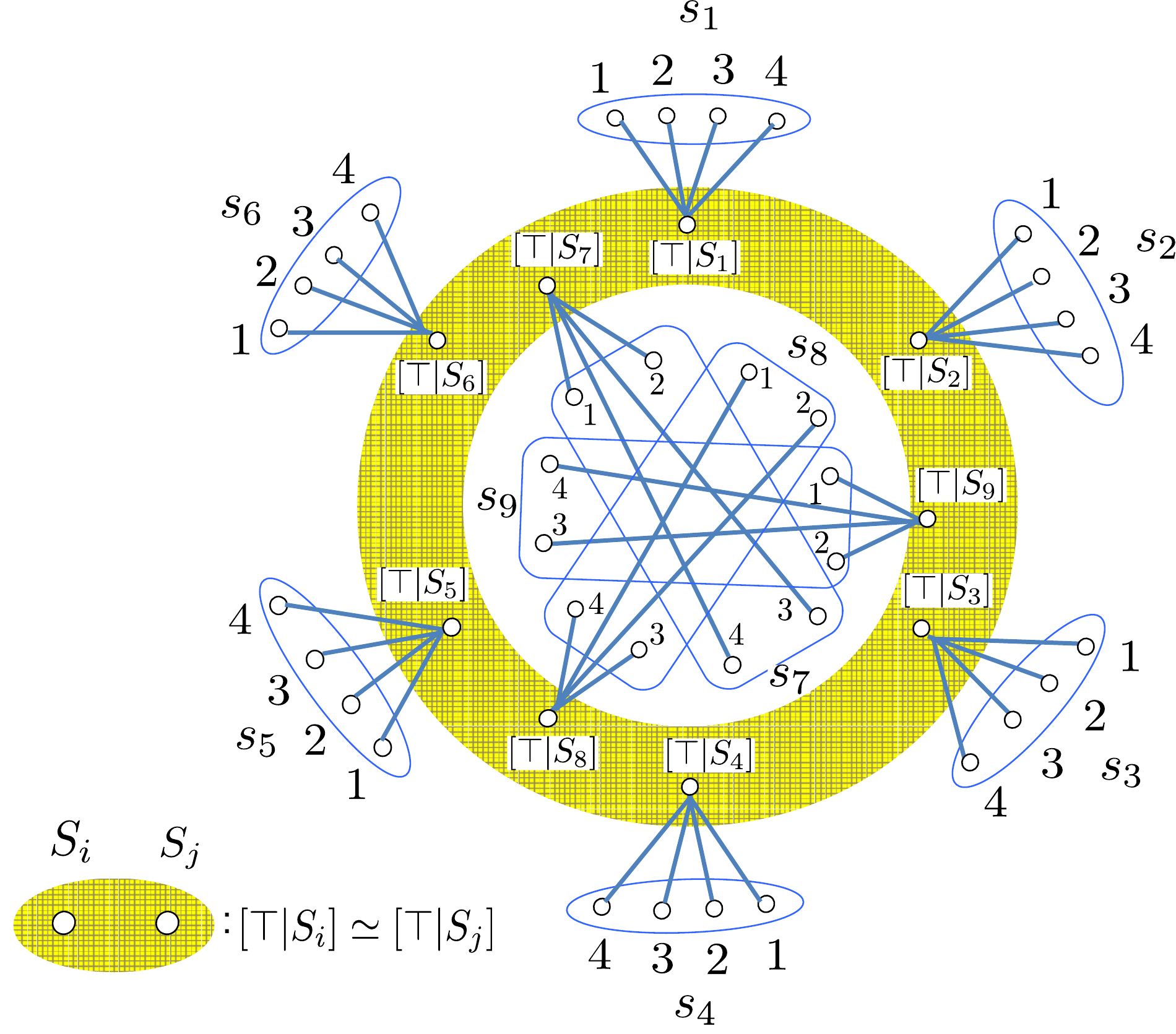}
 \caption{Operational equivalences between measurements (top) and operational equivalences between sources (bottom).}
 \label{fig7}
 \end{figure}
The noncontextuality inequality of Ref.~\cite{KunjSpek} that follows from the operational equivalences for sources and measurements then reads:
\begin{align}
 A&\equiv\frac{1}{9}\sum_{i=1}^9\frac{1}{4}\sum_{m_i=1}^4p(m_i|M_i,S_i,s_i=m_i)\nonumber\\
 &\leq\max_{\lambda\in\Lambda}\frac{1}{9}\sum_{i=1}^9\zeta(M_i,\lambda)=\frac{5}{6},
\end{align}
where $\zeta(M_i,\lambda)\equiv\max_{m_i}\xi(m_i|M_i,\lambda)$ for all $i\in\{1,2,\dots,9\}$, so that $\frac{1}{9}\sum_{i=1}^9\zeta(M_i,\lambda)$ is the average max-probability for a 
given $\lambda\in\Lambda$. Hence,
the noncontextuality inequality bounds $A$ by the maximum average max-probability of the measurements $M_i$ possible in {\em any} ontological model. Noting that $p(s_i=m_i|S_i)=\frac{1}{4}$ for all $m_i\in[4]$ for the scenario of Ref.~\cite{KunjSpek}, we can rewrite the quantity $A$ as:
\begin{align}
A&=\frac{1}{9}\sum_{i=1}^9\sum_{s_i=1}^4p(m_i=s_i,s_i|M_i,S_i)\nonumber\\
&\equiv\frac{1}{9}\sum_{i=1}^9\sum_{x=1}^4p(m_i=x,s_i=x|M_i,S_i).
\end{align} 
In the general case of $n$ measurement procedures with $d$ outcomes each, the expression for $A$ reads
\begin{equation}
 A\equiv\frac{1}{n}\sum_{i=1}^n\sum_{x=1}^dp(m_i=x,s_i=x|M_i,S_i).
\end{equation}
Furthermore, we need not even restrict ourselves to a uniform average of the source-measurement correlation over the source and measurement settings and allow instead a weighted average given by some probability distribution $q\equiv\{q_i\}_{i=1}^{n}$, where $q_i\geq0$ for all $i$ and $\sum_iq_i=1$. The quantity $A$ then becomes the source-measurement correlation quantity ${\rm Corr}$ that was previously defined in Ref.~\cite{robustcsw} and can be upper bounded as follows (following Ref.~\cite{KunjSpek}):
\begin{align}\label{ncineqschematic}
&{\rm Corr}\nonumber\\
&=\sum_{i=1}^nq_i\sum_{x=1}^dp(m_i=x,s_i=x|M_i,S_i),\nonumber\\
&=\sum_{i=1}^nq_i\sum_{x=1}^d\sum_{\lambda\in\Lambda}\xi(m_i=x|M_i,\lambda)\mu(s_i=x|S_i,\lambda)\mu(\lambda|S_i),\nonumber\\
&\leq\sum_{i=1}^nq_i\sum_{s_i=1}^d\sum_{\lambda\in\Lambda}\max_{m_i\in[d]}\xi(m_i|M_i,\lambda)\mu(s_i|S_i,\lambda)\mu(\lambda|S_i),\nonumber\\
&=\sum_{\lambda\in\Lambda}\sum_{i=1}^nq_i\zeta(M_i,\lambda)\sum_{s_i=1}^d\mu(s_i|S_i,\lambda)\mu(\lambda|S_i),\nonumber\\
&=\sum_{\lambda\in\Lambda}\left(\sum_{i=1}^nq_i\zeta(M_i,\lambda)\mu(\lambda|S_i)\right),\nonumber\\
&=\sum_{\lambda\in\Lambda}\left(\sum_{i=1}^nq_i\zeta(M_i,\lambda)\right)\nu(\lambda),\nonumber\\
&(\textrm{using preparation noncontextuality})\nonumber\\
&\leq\max_{\lambda\in\Lambda}\sum_{i=1}^nq_i\zeta(M_i,\lambda)\nonumber\\
&(\textrm{using convexity})\nonumber\\
&\equiv\beta(\Gamma,q)<1 \textrm{ (for some choices of } q) \nonumber\\
&(\textrm{using measurement noncontextuality and}\nonumber\\
&\textrm{KS-uncolourability}).
\end{align}
Here, $\beta(\Gamma,q)$ is the {\em weighted max-predictability} for a contextuality scenario $\Gamma$, first defined in Ref.~\cite{robustcsw} as
\begin{equation}
\beta(\Gamma, q)\equiv\max_{\lambda\in\Lambda_{\rm ind}}\sum_{i=1}^nq_i\zeta(M_i,\lambda),
\end{equation}
where $\Lambda_{\rm ind}$ is the set of ontic states which all assign indeterministic extremal probabilistic models on $\Gamma$, i.e., extremal probabilistic models with probabilities valued in the interval $(0,1)$ for {\em at least} one measurement context in the contextuality scenario.
In this paper, the contextuality scenario $\Gamma$ is KS-uncolourable, hence the qualifier that the maximization in the definition of $\beta(\Gamma,q)$ is taken over only the ontic states that assign indeterministic extremal probabilistic models to measurement events in $\Gamma$ (that is, $\Lambda_{\rm ind}$) is unnecessary: {\em all} ontic states assigning probabilities to the measurement events of a KS-uncolourable $\Gamma$ must necessarily correspond to indeterministic extremal probabilistic models (i.e., $\Lambda=\Lambda_{\rm ind}$). Note also that $\beta(\Gamma,q)$ need not always be strictly less than 1 for KS-uncolourable $\Gamma$: this can happen, for example, if $q$ is supported only on those contexts (if they exist) which are all assigned $\{0,1\}$-valued probabilities (i.e., they are deterministic) by {\em some} extremal probabilistic model on $\Gamma$. On the other hand, we know that there {\em always} exists a choice of $q$ such that $\beta(\Gamma,q)<1$ simply because of the KS-uncolourability of $\Gamma$, e.g., any choice where $q$ is supported on {\em all} the contexts of $\Gamma$ --- such as $q$ being a uniform probability distribution over the measurement contexts --- so that there is no extremal probabilistic model that makes all the contexts deterministic.

In the following sections, we will show how bounds on ${\rm Corr}$ when $q$ is supported on certain (sub)sets of contexts (which we will call minimally indeterministic sets of contexts, or MISCs, below) can be obtained from conceptual arguments instead of a brute-force computational approach. We will show that 
for such MISCs (instead of {\em all} the contexts in a contextuality scenario), we can obtain noncontextuality inequalities of 
the type:

\begin{align}
{\rm Corr}_q&\equiv\sum_{i=1}^c q_{r_i}\sum_{x=1}^dp(m_{r_i}=x,s_{r_i}=x|M_{r_i},S_{r_i})\nonumber\\
&\leq\beta(\Gamma,q)<1,
\end{align}
where $c$ $(<n)$ is the number of contexts in a MISC, each context denoting a measurement setting $M_{r_i}$, where $q_{r_i}>0$ for all $i\in\{1,2,\dots,c\}$, $\sum_{i=1}^cq_{r_i}=1$, and $r_i\in\{1,2,\dots,n\}$ are all distinct.\footnote{Note that this inequality follows from the assumptions of preparation and measurement noncontextuality, as in Eq.~\eqref{ncineqschematic}, hence it is a noise-robust ``noncontextuality inequality". Unlike the noise-robust noncontextuality inequalities of Ref.~\cite{robustcsw}, however, the noncontextuality inequalities in this paper are {\em not} (in any sense) ``generalizations" of KS-noncontextuality inequalities \`a la CSW \cite{CSW}. The set of probabilistic models on a KS-uncolourable contextuality scenario that satisfy KS-noncontextuality is empty, i.e., the set of ``classical models" in the terminology of Ref.~\cite{AFLS} is empy for such scenarios. Hence, any probabilistic model on such a scenario is KS-contextual and no meaningful constraint from  KS-noncontextuality on probabilistic models can be written down. This is also reflected in the fact that our inequalities here do not invoke any of the traditional graph invariants used in Ref.~\cite{CSW}.}

If the contextuality scenario $\Gamma$ were KS-colourable, then we would have  $\max_{\lambda\in\Lambda}\sum_{i=1}^nq_i\zeta(M_i,\lambda)=1$ for any choice of $q$ (corresponding to any set of $c$ contexts); however, since $\Gamma$ is KS-uncolourable, there necessarily exist one or more sets of $c$ contexts (for some $c$) such that $\max_{\lambda\in\Lambda}\sum_{i=1}^nq_i\zeta(M_i,\lambda)<1$ when $q$ is supported on such sets. Note that while in the former case $\beta(\Gamma,q)$ is undefined, in the latter case we have $\beta(\Gamma,q)=\max_{\lambda\in\Lambda}\sum_{i=1}^nq_i\zeta(M_i,\lambda)$. The MISCs we define below are examples of such sets of $c$ contexts for which $\beta(\Gamma,q)<1$. Finding a MISC and computing its $\beta(\Gamma,q)$ value yields a noise-robust noncontextuality inequality in our framework. 

We will be interested in finding all the {\em irreducible} MISCs (or as we define them later, ``irrMISCs") in a KS-uncolourable contextuality scenario: finding them and evaluating their $\beta(\Gamma,q)$ values amounts to identifying a minimal set of independent noncontextuality inequalities for that scenario; from these, all the other MISC inequalities can be obtained by coarse-graining.

\subsection{Minimally Indeterministic Sets of Contexts (MISCs)}

We now consider assignments of probabilistic models to a contextuality scenario specified by an ontic state $\lambda\in\Lambda$
according to the response functions $\xi(m_i|M_i,\lambda)\in[0,1]$. 
A {\em deterministic context} is one where all the measurement outcomes are assigned $\{0,1\}$-valued probabilities, i.e., 
$\xi(m_i|M_i,\lambda)\in\{0,1\}$ for all $m_i$, $M_i$. An {\em indeterministic
context} is one which is not deterministic, i.e., it only allows probability assignments in $[0,1)$ to the measurement outcomes.
The max-probability for a deterministic context is $1$ while for an indeterministic context it is less than $1$.

\begin{framed}
{\bf Minimally Indeterministic Set of Contexts (MISC) of size $c$:}
A set of $c$ contexts such that {\it no more than} $c-1$ of them can be made deterministic by {\it any} (extremal) probabilistic model
on the (parent) contextuality scenario, i.e., $\beta(\Gamma,q)<1$ when $q$ is supported entirely on such a set of $c$ contexts. 
\end{framed}

Intuitively, a MISC is a subset of contexts that does not admit a KS-noncontextual assignment of outcomes arising from a restriction of any probabilistic model on the parent contextuality scenario (of which the MISC is a subset) to just the MISC. By a ``restriction" of a probabilistic model to a MISC, we mean the set of probabilities assigned to measurement events in a MISC by the probabilistic model on the parent contextuality scenario. Any proper subset of a MISC {\em might}, however, admit a KS-noncontextual assignment of outcomes arising in this way.\footnote{Mansfield and Barbosa \cite{mansbosa} have previously considered a notion of (partial) extendability of an empirical model (probabilistic model, in our terminology) on a measurement cover (i.e., a joint measurability structure \cite{KHF} which yields a contextuality scenario \cite{gondaetal}) to an empirical model on another measurement cover such that the down-closure of the first measurement cover is contained in the down-closure of the second. While this notion bears a superficial similarity to the relation between a proper subset of a MISC and the MISC itself, as we have described it, it may be an interesting avenue for further research to look into rigorously formalizing the counterpart of MISCs in the sheaf-theoretic approach \cite{AB}, possibly via Ref.~\cite{mansbosa}.}

\subsubsection{Noise-robust noncontextuality inequalities for any KS-uncolourable contextuality scenario} 
Simple noncontextuality inequalities can be obtained from a KS-uncolourable contextuality scenario by identifying the following type of MISCs:
\begin{framed}
For a KS-uncolourable contextuality scenario (with, say, $n$ contexts), every extremal probabilistic model will make some of the contexts
indeterministic. Let $k$ be the smallest number of such indeterministic contexts present in {\em any} extremal probabilistic model on the 
KS-uncolourable contextuality scenario.
Then any set of $n-k+1$ contexts (out of all the $n$) constitutes a MISC, i.e., $\beta(\Gamma,q)<1$ when $q$ is supported entirely over this set of contexts.
\end{framed}

From Theorem \ref{extremals}, for a KS-uncolourable contextuality scenario, every extremal probabilistic model is in one-to-one correspondence with
an induced subscenario admitting a unique probabilistic model. KS-uncolourability means that any induced subscenario with a unique probabilistic 
model would necessarily contain hyperedges that are non-singleton (i.e., containing more than one node) with their nodes assigned probabilities less than 1.
Ignoring the singleton hyperedges in such an induced subscenario (i.e., those containing exactly one node), all the remaining hyperedges are indeterministic.
We refer to the subscenario consisting of these remaining (indeterministic) hyperedges and the nodes they contain as an {\em induced 
indeterministic subscenario}. $k$ is then the number of contexts in the smallest (in terms of the number of contexts)
induced indeterministic subscenario obtained from an induced subscenario with a unique probabilistic model. 
Now note that the hypergraph with the least number of contexts (and containing no singleton contexts)
admitting a unique probabilistic model is a $3$-hypercycle. Hence, a $3$-hypercycle is the smallest induced indeterministic
subscenario possible and we have
\begin{framed}
{\bf Sufficient condition for a set of contexts to be a MISC: } For {\it any} KS-uncolourable contextuality scenario it will be the case that $k\geq 3$ and {\em any} set of $n-2$ contexts 
in the scenario will form a MISC, i.e., $\beta(\Gamma,q)<1$ when $q$ is supported on {\em any} set of $n-2$ contexts.
\end{framed}

For an example, see Fig.~\ref{inducedsubscenario}(c), second column, for an induced indeterministic subscenario of the 18 ray hypergraph \cite{Cabello18ray} and the third column for the induced subscenario of which the induced indeterministic subscenario is a part.

Given that $k$ is the size of the smallest induced indeterministic subscenario, we have a noncontextuality inequality whenever $q$ is supported on any set of $n-k+1$ contexts (which constitute a MISC):

\begin{eqnarray}
&&{\rm Corr}_q\nonumber\\
&\equiv&\sum_{i=1}^{n-k+1}q_{r_i}\sum_{x=1}^dp(m_{r_i}=x,s_{r_i}=x|M_{r_i},S_{r_i})\nonumber\\
&\leq& \max_{\lambda\in\Lambda}\sum_{i=1}^{n-k+1}q_{r_i}\zeta(M_{r_i},\lambda)\nonumber\\
&\equiv&\beta(\Gamma,q).
\end{eqnarray}
If we take $q_{r_i}=\frac{1}{n-k+1}$ for all $i\in\{1,2,\dots,n-k+1\}$, we have the following noncontextuality inequality for a MISC consisting of $n-k+1$ contexts:
\begin{eqnarray}
&&{\rm Corr}_q\nonumber\\
&\equiv&\frac{1}{n-k+1}\sum_{i=1}^{n-k+1}\sum_{x=1}^dp(m_{r_i}=x,s_{r_i}=x|M_{r_i},S_{r_i})\nonumber\\
&\leq& \max_{\lambda\in\Lambda}\frac{1}{n-k+1}\sum_{i=1}^{n-k+1}\zeta(M_{r_i},\lambda) \equiv \beta(\Gamma,q)\nonumber\\
&\leq&\frac{n-k}{n-k+1}+\frac{p_{\rm max}}{(n-k+1)}\nonumber\\
&=&1-\frac{1-p_{\rm max}}{n-k+1},
\end{eqnarray}
where $p_{\rm max}\in\big[\frac{1}{d},1\big]$ is the largest max-probability associated with any indeterministic context included in the MISC.
This max-probability corresponds to an extremal probabilistic model that makes all but one of the contexts in the MISC deterministic. In the case of the 18 ray scenario, for example, $k=3$ and any set of $n-k+1=9-3+1=7$ contexts forms a MISC and we have $p_{\max}=\frac{1}{2}$, so that the upper bound in the above inequality given by $\frac{13}{14}$. (See Fig.~\ref{inducedsubscenario}, third column: the six deterministic contexts together with any one of the three indeterministic contexts form such a seven-context MISC.)

\subsubsection{Sufficient condition for a set of contexts to be a MISC is not necessary}
While the sufficient condition outlined above for a set of contexts in a contextuality scenario to be a MISC works for any KS-uncolourable contextuality scenario and yields
noncontextuality inequalities, it is not a necessary condition. It is possible to identify smaller MISCs depending on the 
particular contextuality scenario and the probabilistic models on it. 

Consider, for example, all 
the scenarios of the type 2Reg$(G)$ that we have discussed. In these scenarios, each node appears in two contexts
and therefore deterministic contexts appear in pairs in any extremal probabilistic model on these scenarios: this is because 
the deterministic contexts in any extremal probabilistic model are determined by singleton hyperedges in the induced subscenario and the node 
in a singleton hyperedge (assigned probability 1) appears in two contexts in the full contextuality scenario (see Fig.~\ref{inducedsubscenario}, third column, for example). 
It then becomes possible 
to reduce the MISCs of size $n-k+1$ that we have identified above to MISCs of size $\frac{n-k}{2}+1$ simply by 
taking the given MISC and omitting one of each pair of deterministic contexts that share a node in the given MISC. For example, see Fig.~\ref{inducedsubscenario}(c), third column, where three deterministic contexts together with an indeterministic context form a four-context MISC.

Since a MISC may thus contain smaller MISCs, we define the notion of an ``irreducible MISC'':
\begin{framed}
{\bf Irreducible MISC (irrMISC):} A MISC which does not contain another MISC as a proper subset. 
\end{framed}

Therefore, an irrMISC is such that for its every proper subset there exists an extremal probabilistic model in which this proper subset is 
deterministic. As we noted, the MISCs of size $n-k+1$ we have identified above can be reduced to MISCs of size $\frac{n-k}{2}+1$ in 2Reg($\cdot$) scenarios.
Are these MISCs of size $\frac{n-k}{2}+1$ irreducible? Not necessarily. 

In general, it is possible to identify proper subsets of 
MISCs which are irreducible MISCs. Let us see how this plays out for some scenarios we will consider in detail here: 
2Reg$(K_{3,3})$, 2Reg$(K_{1,7})$, and the general case of  2Reg$(K_{1,n})$ (odd $n>1$). For concreteness, we will assume in the following subsections that $q$ is a uniform distribution over all the contexts in a MISC, although our identification of MISCs does not rely on this choice. 

After illustrating the underlying ideas via these explicit examples, we will conclude with a general theorem characterizing irrMISCs in contextuality scenarios of type $2{\rm Reg}(K_{m,n})$ (with odd $mn>1$).

\subsubsection{2Reg$(K_{3,3})$}
Denoting the edges of $K_{3,3}$ (and the corresponding hyperedges of 2Reg$(K_{3,3})$) by $$\{(1\bar{1}),(1\bar{2}),(1\bar{3}),(2\bar{1}),(2\bar{2}),
(2\bar{3}),(3\bar{1}),(3\bar{2}),(3\bar{3})\},$$ 
we can identify the following six 3-hypercycles in 2Reg$(K_{3,3})$ (See Fig.~\ref{3hypercycles}):
 \begin{eqnarray}
  &&(1\bar{1})-(1\bar{2})-(1\bar{3})-(1\bar{1})\nonumber\\
  &&(2\bar{1})-(2\bar{2})-(2\bar{3})-(2\bar{1})\nonumber\\
  &&(3\bar{1})-(3\bar{2})-(3\bar{3})-(3\bar{1})\nonumber\\
  &&(1\bar{1})-(2\bar{1})-(3\bar{1})-(1\bar{1})\nonumber\\
  &&(1\bar{2})-(2\bar{2})-(3\bar{2})-(1\bar{2})\nonumber\\
  &&(1\bar{3})-(2\bar{3})-(3\bar{3})-(1\bar{3}).
 \end{eqnarray}

It is easy to show that each of these 3-hypercycles forms a part of multiple induced subscenarios corresponding to extremal probabilistic models.
For example, see Fig.~\ref{3hypercycles} for the induced subscenarios (and corresponding extremal probabilistic models) where the 3-hypercycle 
$(3\bar{1})-(3\bar{2})-(3\bar{3})-(3\bar{1})$ appears.
Since a 3-hypercycle is the smallest hypergraph with a unique probabilistic model that isn't deterministic, we 
can build a 7-context MISC by taking any one of the edges in $(3\bar{1})-(3\bar{2})-(3\bar{3})-(3\bar{1})$ and the remaining six edges 
(out of nine). This gives three distinct MISCs for the 3-hypercycle $(3\bar{1})-(3\bar{2})-(3\bar{3})-(3\bar{1})$:

\begin{eqnarray}
&&{\rm MISC_1(7)}\equiv\{(3\bar{1}),(2\bar{1}),(2\bar{2}),(2\bar{3}),(1\bar{1}),(1\bar{2}),(1\bar{3})\},\nonumber\\
&&{\rm MISC_2(7)}\equiv\{(3\bar{2}),(2\bar{1}),(2\bar{2}),(2\bar{3}),(1\bar{1}),(1\bar{2}),(1\bar{3})\},\nonumber\\
&&{\rm MISC_3(7)}\equiv\{(3\bar{3}),(2\bar{1}),(2\bar{2}),(2\bar{3}),(1\bar{1}),(1\bar{2}),(1\bar{3})\}.\nonumber\\
\end{eqnarray}
The noncontextuality inequality corresponding to each 7-context MISC (${\rm MISC_j(7)},j=1,2,3$) is given by

\begin{eqnarray}
&&{\rm Corr}_{\rm MISC_j(7)}\nonumber\\
&\equiv&\frac{1}{7}\sum_{i\in {\rm MISC_j(7)}}\sum_{x=1}^4p(m_i=x,s_i=x|M_i,S_i)\nonumber\\
&\leq&\max_{\lambda\in\Lambda}\frac{1}{7}\sum_{i\in {\rm MISC_j(7)}}\zeta(M_i,\lambda)\equiv \beta(\Gamma,q)\nonumber\\
&=&\frac{1}{7}\left(6+\frac{1}{2}\right)=\frac{13}{14}.
\end{eqnarray}

More generally, from the fact that the contextuality scenario admits 3-hypercycles, we have that the average predictability is 
constrained for any set of $c\geq7$ contexts (out of $9$) since {\em at most} $6$ of them can be made deterministic
but not the remaining ones by any extremal probabilistic model. Then for any choice of $c$ contexts such that $c=7,8,9$, we have noncontextuality 
inequalities with ${\rm Corr}_{{\rm MISC}_j(c)}$constrained by $13/14$, $7/8$, and $5/6$ respectively.

The $7$-context MISCs can be further reduced to $4$-context MISCs by eliminating one of each pair of contexts from the remaining 
deterministic edges, $\{(2\bar{1}),(2\bar{2}),(2\bar{3}),(1\bar{1}),(1\bar{2}),(1\bar{3})\}$, in the MISC. 
For each ${\rm MISC_j(7)}$, for instance, these pairs of 
deterministic contexts can be: 
\begin{eqnarray}
&&\{(2\bar{1})-(2\bar{2}),(2\bar{3})-(1\bar{3}),(1\bar{1})-(1\bar{2})\},\nonumber\\
&&\{(2\bar{1})-(1\bar{1}),(2\bar{3})-(1\bar{3}),(2\bar{2})-(1\bar{2})\},\nonumber\\
&&\{(2\bar{1})-(1\bar{1}),(2\bar{3})-(2\bar{2}),(1\bar{3})-(1\bar{2})\},\nonumber\\
&&\{(2\bar{1})-(2\bar{3}),(2\bar{2})-(1\bar{2}),(1\bar{1})-(1\bar{3})\}. 
\end{eqnarray}

Picking a context from each of the $3$ pairs of deterministic contexts and 
a context from the 3-hypercycle $(3\bar{1})-(3\bar{2})-(3\bar{3})-(3\bar{1})$, we need to check if such a set of $4$ contexts forms a MISC:
by verifying that it does not appear as a subset of the deterministic set of 6 contexts fixed by any of the remaining 
$3$-hypercycles. An example of such a set of 4 contexts is $\{(3\bar{1}),(2\bar{1}),(1\bar{2}),(1\bar{3})\}$ which is a ${\rm MISC(4)}$.
It is not a subset of any of the deterministic sets of contexts induced by 3-hypercycle extremal probabilistic models, namely:

\begin{eqnarray}
&&\{(2\bar{1}),(2\bar{2}),(2\bar{3}),(3\bar{1}),(3\bar{2}),(3\bar{3})\}\nonumber\\
&&\text{induced by } (1\bar{1})-(1\bar{2})-(1\bar{3})-(1\bar{1}),\nonumber\\
&&\{(1\bar{1}),(1\bar{2}),(1\bar{3}),(3\bar{1}),(3\bar{2}),(3\bar{3})\}\nonumber\\
&&\text{induced by } (2\bar{1})-(2\bar{2})-(2\bar{3})-(2\bar{1}),\nonumber\\
&&\{(1\bar{1}),(1\bar{2}),(1\bar{3}),(2\bar{1}),(2\bar{2}),(2\bar{3})\}\nonumber\\
&&\text{induced by } (3\bar{1})-(3\bar{2})-(3\bar{3})-(3\bar{1}),\nonumber\\
&&\{(1\bar{2}),(2\bar{2}),(3\bar{2}),(1\bar{3}),(2\bar{3}),(3\bar{3})\}\nonumber\\
&&\text{induced by } (1\bar{1})-(2\bar{1})-(3\bar{1})-(1\bar{1}),\nonumber\\
&&\{(1\bar{1}),(2\bar{1}),(3\bar{1}),(1\bar{3}),(2\bar{3}),(3\bar{3})\}\nonumber\\
&&\text{induced by } (1\bar{2})-(2\bar{2})-(3\bar{2})-(1\bar{2}),\nonumber\\
&&\{(1\bar{1}),(2\bar{1}),(3\bar{1}),(1\bar{2}),(2\bar{2}),(3\bar{2})\}\nonumber\\
&&\text{induced by } (1\bar{3})-(2\bar{3})-(3\bar{3})-(1\bar{3}).
\end{eqnarray}

In all, there are $9$ such ${\rm MISC(4)}$ and they are irreducible, i.e., no proper subset of these $9$ MISCs forms a MISC.
This is easy to verify, for example, for the ${\rm MISC(4)}$ $\{(3\bar{1}),(2\bar{1}),(1\bar{2}),(1\bar{3})\}$: every proper subset of this ${\rm MISC(4)}$
appears in one of the six deterministic sets of contexts. These irrMISCs are depicted in Fig.~\ref{irrMISC4} and listed below:

\begin{eqnarray}
&&\{(1\bar{1}),(2\bar{1}),(3\bar{2}),(3\bar{3})\}\nonumber\\
&&\{(1\bar{1}),(3\bar{1}),(2\bar{2}),(2\bar{3})\}\nonumber\\
&&\{(1\bar{2}),(2\bar{2}),(3\bar{1}),(3\bar{3})\}\nonumber\\
&&\{(1\bar{2}),(3\bar{2}),(2\bar{1}),(2\bar{3})\}\nonumber\\
&&\{(1\bar{3}),(2\bar{3}),(3\bar{1}),(3\bar{2})\}\nonumber\\
&&\{(1\bar{3}),(3\bar{3}),(2\bar{1}),(2\bar{2})\}\nonumber\\
&&\{(2\bar{1}),(3\bar{1}),(1\bar{2}),(1\bar{3})\}\nonumber\\
&&\{(2\bar{2}),(3\bar{2}),(1\bar{1}),(1\bar{3})\}\nonumber\\
&&\{(2\bar{3}),(3\bar{3}),(1\bar{1}),(1\bar{2})\}.
\end{eqnarray}

\begin{figure}
\centering
 \includegraphics[scale=0.33]{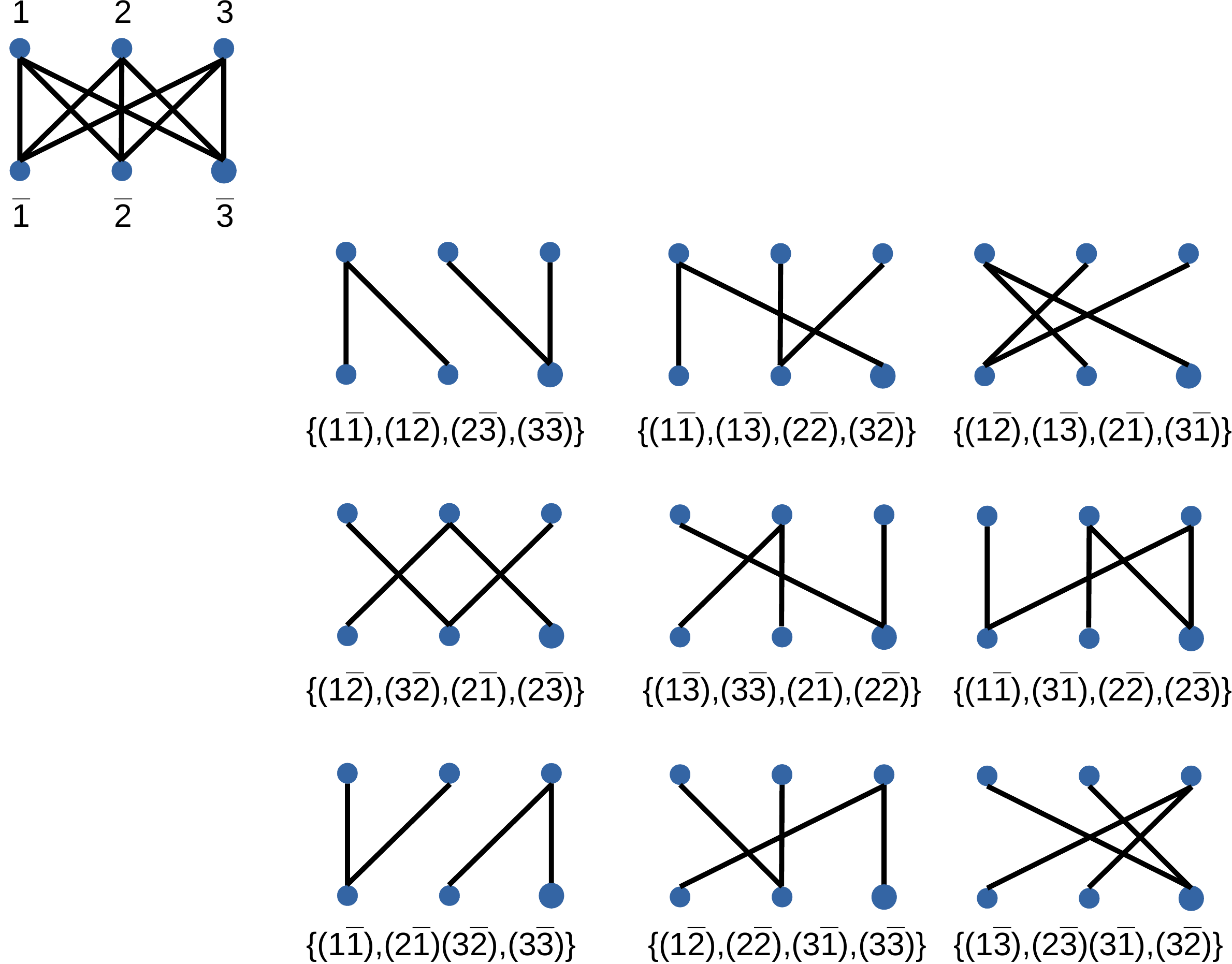}
 \caption{All irrMISC(4) for the 2Reg$(K_{3,3})$ scenario.}
 \label{irrMISC4}
\end{figure}
Each of these irrMISC(4)s (with uniform $q$) corresponds to a noncontextuality inequality:
\begin{eqnarray}\label{eq:k33irrMISC4}
&&{\rm Corr}_{\rm irrMISC(4)}\nonumber\\
&\equiv&\frac{1}{4}\sum_{i\in {\rm irrMISC(4)}}\sum_{x=1}^4p(m_i=x,s_i=x|M_i,S_i)\nonumber\\
&\leq&\max_{\lambda\in\Lambda}\frac{1}{4}\sum_{i\in {\rm irrMISC(4)}}\zeta(M_i,\lambda)\nonumber\\
&\leq&\frac{1}{4}\left(3+\frac{1}{2}\right)=\frac{7}{8}.
\end{eqnarray}

Are there any still smaller MISCs, say ${\rm MISC(3)}$, in this contextuality scenario?  
Indeed, such MISCs exist and they correspond precisely to the perfect matchings of the graph $K_{3,3}$. 
Each of the six vertices of $K_{3,3}$ is an origin of a 3-hypercycle (corresponding to 2Reg$(K_{1,3})$; see Fig.~\ref{3hypercycles}) in the contextuality 
scenario 2Reg$(K_{3,3})$. Hence, a perfect matching -- namely, a set of disjoint edges such that they cover all the six 
vertices of the graph -- ensures that the three hyperedges corresponding to these edges in the perfect matching cannot all be made deterministic by any 3-hypercycle 
extremal probabilistic model on 2Reg$(K_{3,3})$. This is because at least one of the three hyperedges, e.g. $\{(1\bar{1}),(2\bar{2}),(3\bar{3})\}$,
will be indeterministic (forming a part of a 3-hypercycle) in these extremal probabilistic models. Indeed, these three hyperedges can't be made deterministic 
by any extremal probabilistic model at all, 
since all extremal probabilistic models on 2Reg$(K_{3,3})$ are induced by odd hypercycles and the remaining extremal probabilistic models must therefore 
contain at least a 5-hypercycle. A 5-hypercycle extremal probabilistic model would make 4 contexts deterministic, but these 4 contexts will 
come in pairs that each share a deterministic node assigned probability 1. This means a maximum of 2 independent deterministic 
contexts in any other extremal probabilistic models besides those induced by 3-hypercycles: hence these extremal probabilistic models cannot make more than 
two of the three contexts in a perfect matching deterministic (since the three contexts share no nodes in 2Reg$(K_{3,3})$).
There are six perfect matchings of $K_{3,3}$, hence 6 instances of ${\rm MISC(3)}$, all of which are in fact irreducible:
\begin{eqnarray}
&&\{(1\bar{1}),(2\bar{2}),(3\bar{3})\}\nonumber\\
&&\{(1\bar{1}),(2\bar{3}),(3\bar{2})\}\nonumber\\
&&\{(2\bar{2}),(1\bar{3}),(3\bar{1})\}\nonumber\\
&&\{(3\bar{3}),(1\bar{2}),(2\bar{1})\}\nonumber\\
&&\{(1\bar{3}),(2\bar{1}),(3\bar{2})\}\nonumber\\
&&\{(3\bar{1}),(1\bar{2}),(2\bar{3})\}.
\end{eqnarray}
See Fig.~\ref{irrMISC3}. Each of these irrMISC(3)s yields a noncontextuality inequality (again, assuming uniform $q$ here):
\begin{eqnarray}\label{eq:k33irrMISC3}
&&{\rm Corr}_{\rm irrMISC(3)}\nonumber\\
&\equiv&\frac{1}{3}\sum_{i\in {\rm irrMISC(3)}}\sum_{x=1}^4p(m_i=x,s_i=x|M_i,S_i)\nonumber\\
&\leq&\max_{\lambda\in\Lambda}\frac{1}{3}\sum_{i\in {\rm irrMISC(3)}}\zeta(M_i,\lambda)\nonumber\\
&\leq&\frac{1}{3}\left(2+\frac{1}{2}\right)=\frac{5}{6}.
\end{eqnarray}

\begin{figure}
\centering
 \includegraphics[scale=0.33]{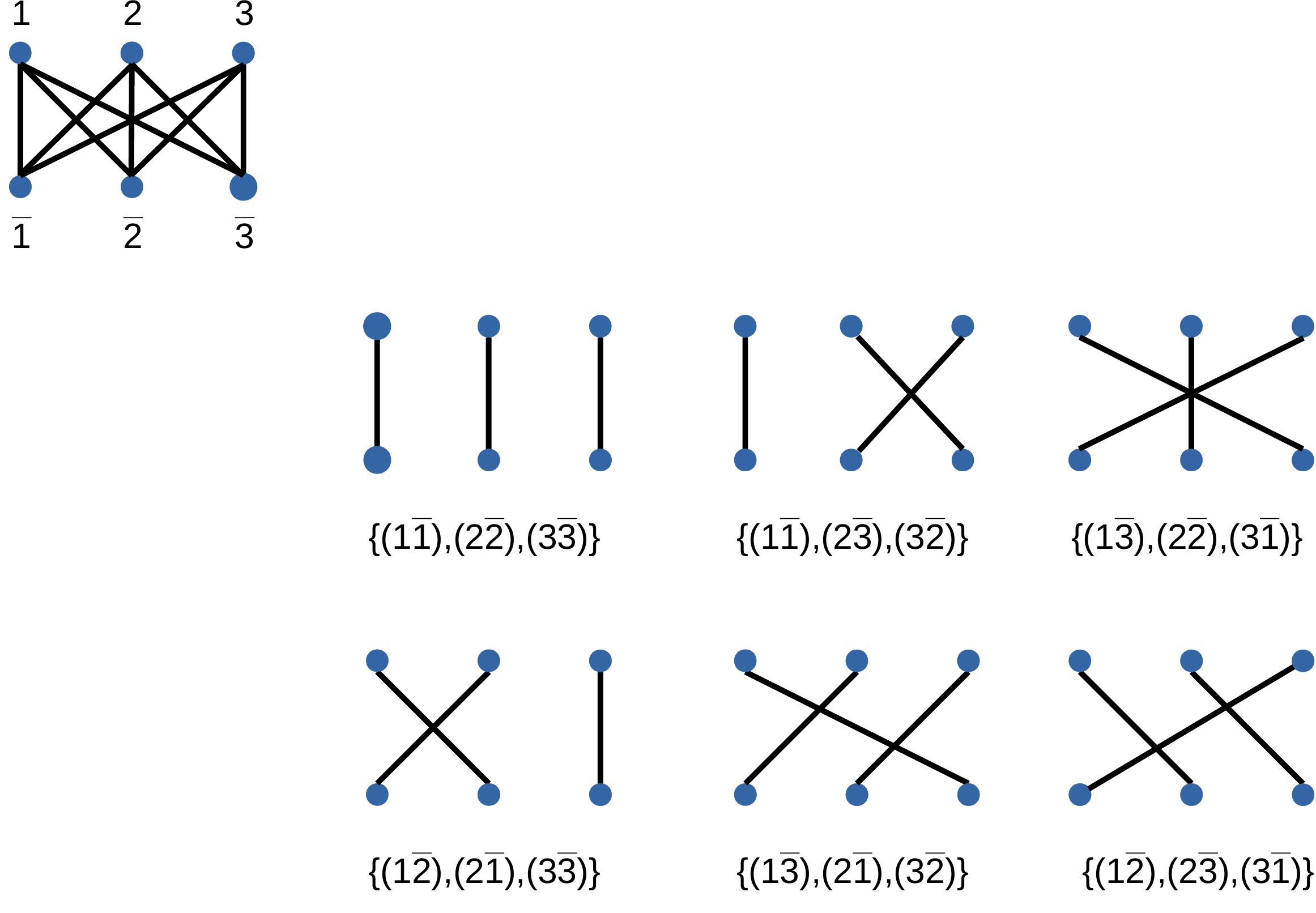}
 \caption{All irrMISC(3) for the 2Reg$(K_{3,3})$ scenario.}
 \label{irrMISC3}
\end{figure}

The noncontextuality inequality of Ref.~\cite{KunjSpek} can then be obtained by coarse-graining these irrMISC(3) inequalities, say the 
ones corresponding to irrMISCs $\{(1\bar{1}),(2\bar{3}),(3\bar{2})\}$, $\{(2\bar{2}),(1\bar{3}),(3\bar{1})\}$ and 
$\{(3\bar{3}),(1\bar{2}),(2\bar{1})\}$ or the ones corresponding to
irrMISCs $\{(1\bar{1}),(2\bar{2}),(3\bar{3})\}$, $\{(1\bar{3}),(2\bar{1}),(3\bar{2})\}$, and $\{(3\bar{1}),(1\bar{2}),(2\bar{3})\}$, 
to yield:

\begin{eqnarray}
&&A\equiv\frac{1}{9}\sum_{i,j=1}^3\sum_{x=1}^4p(m_{(i\bar{j})}=x,s_{(i\bar{j})}=x|M_{(i\bar{j})},S_{(i\bar{j})})\nonumber\\
&\leq&\max_{\lambda\in\Lambda}\frac{1}{9}\sum_{i,j=1}^3\zeta(M_{(i\bar{j})},\lambda)\nonumber\\
&\leq&\frac{1}{9}\left(6+3.\frac{1}{2}\right)=\frac{5}{6}.
\end{eqnarray}

Note that the upper bounds on the average correlation corresponding to irrMISC(3)s and irrMISC(4)s were first obtained in Ref.~\cite{anithesis} via 
an implementation of Fourier-Motzkin elimination.\footnote{See \cite{anithesis} for details of that numerical approach. 
It consists of writing down all the positivity and normalization constraints on the average correlation for each context and then 
eliminating the ontological variables via Fourier-Motzkin elimination to eventually yield operational noise-robust noncontextuality inequalities.}

On the other hand, our derivation hinges on a conceptual insight -- based on the mapping 2Reg($\cdot$) and Theorem \ref{extremals} --
that clarifies why we expect the average source-measurement correlation of particular sets of contexts (rather than arbitrary sets of contexts) in these 
noncontextuality inequalities to be bounded away from 1. It boils down to identifying MISCs and irrMISCs in a contextuality scenario.
Indeed, as we now show, our understanding lets us obtain previously undiscovered 
noncontextuality inequalities in other KS-uncolourable contextuality scenarios.

\subsubsection{2Reg$(K_{1,7})$}

We denote the edges of $K_{1,7}$ (and the corresponding contexts in 2Reg$(K_{1,7})$) by 
$$\{(1\bar{1}),(1\bar{2}),(1\bar{3}),(1\bar{4}),(1\bar{5}),(1\bar{6}),(1\bar{7})\}.$$

Since every edge is connected to every other edge in $K_{1,7}$ and vertex $1$ is the origin of all hypercycles,
we have that each choice of a set of 3 contexts in 
2Reg$(K_{1,7})$ will form a 3-hypercycle. Extremal probabilistic models induced by subscenarios containing these 3-hypercycles 
can make at most all the remaining 4 contexts deterministic. Indeed, taking out 3 edges from $K_{1,7}$ yields 
$K_{1,4}$ as a remnant and 2Reg$(K_{1,4})$ admits only deterministic extremal probabilistic models.

Each MISC of size $c$ would require a set of $c$ contexts such that no more than $c-1$ of them 
can be made deterministic in {\em any} extremal probabilistic model on 2Reg$(K_{1,7})$. 
We know that every choice of a set of 
4 contexts in 2Reg$(K_{1,7})$ can be made deterministic by some extremal probabilistic model since every such choice is in one-to-one 
correspondence with a choice of a 3-hypercycle (consisting of the remaining 3 contexts) 
inducing such an extremal probabilistic model on 2Reg$(K_{1,7})$: we have $^7C_4 = ^7C_3=35$ such choices. 
Hence a set of contexts of size $\leq 4$ can never form a MISC: there will always exist an extremal probabilistic model
which will make all of the contexts in the set deterministic. All irrMISCs are therefore 
of size $c=5$ in 2Reg$(K_{1,7})$ and every set of 5 contexts ($n-k+1=7-3+1=5$) forms an irrMISC(5). See Fig.~\ref{irrMISC5}.

\begin{figure}
\centering
 \includegraphics[scale=0.33]{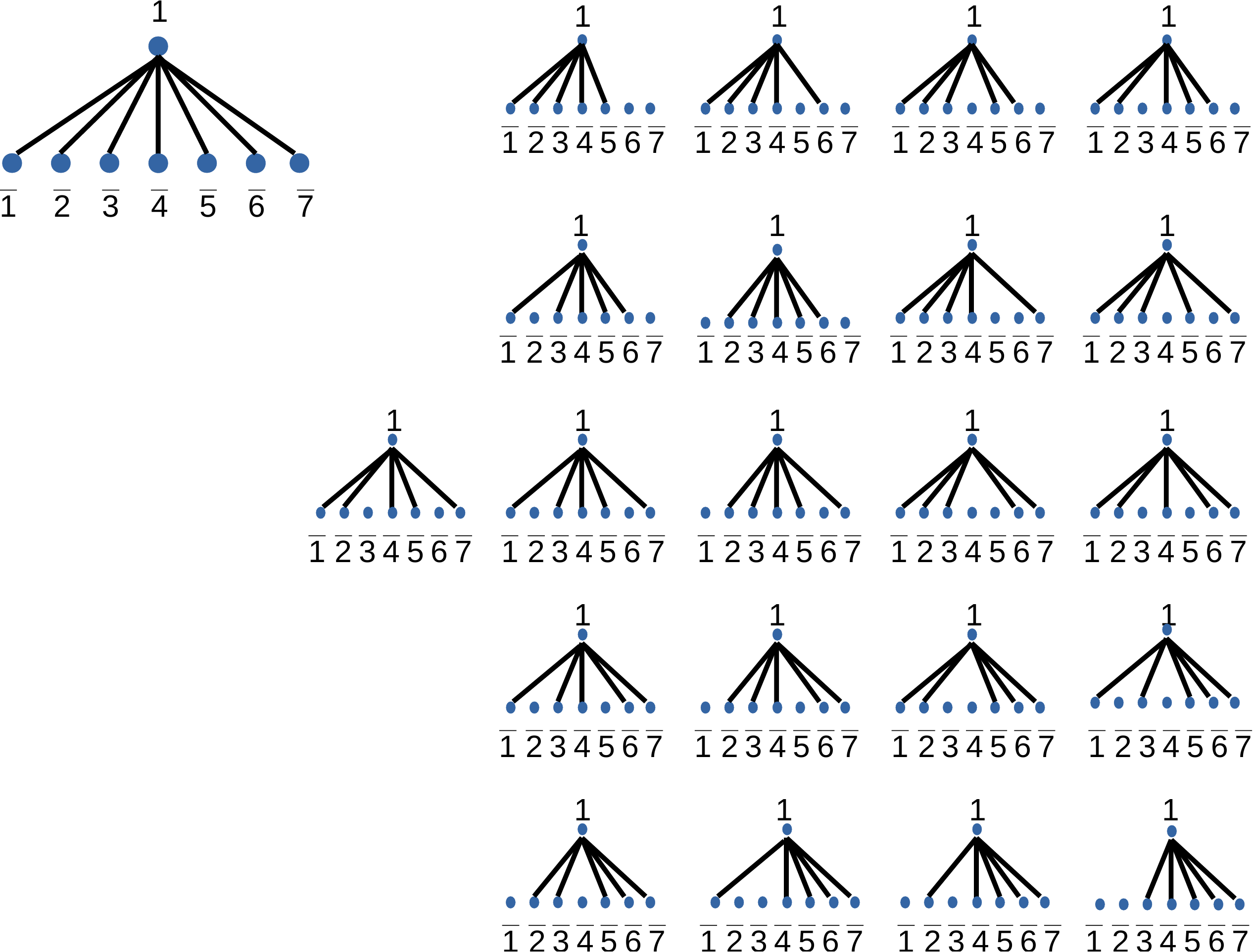}
 \caption{All irrMISC(5) for the 2Reg$(K_{1,7})$ scenario.}
 \label{irrMISC5}
\end{figure}

The noise-robust noncontextuality inequalities for 2Reg$(K_{1,7})$ then correspond to average source-measurement correlation (assuming uniform $q$) for:
\begin{enumerate}
 \item irrMISC(5)s:
 \begin{eqnarray}\label{eq:k17irrMISC5}
  &&{\rm Corr}_{\rm irrMISC(5)}\nonumber\\
  &\equiv&\frac{1}{5}\sum_{j\in {\rm irrMISC(5)}}\sum_{x=1}^6p(m_j=x,s_j=x|M_j,S_j)\nonumber\\
  &\leq& \max_{\lambda\in\Lambda}\frac{1}{5}\sum_{j\in {\rm irrMISC(5)}}\zeta(M_j,\lambda)\nonumber\\
  &=&\frac{1}{5}\left(4+\frac{1}{2}\right)\nonumber\\
  &=&\frac{9}{10}.
 \end{eqnarray}
There are $^7C_5=21$ such inequalities.

\item Each irrMISC(5) and 1 indeterministic context:
\begin{eqnarray}
 &&{\rm Corr}_{\rm MISC(6)}\nonumber\\
 &\equiv&\frac{1}{6}\sum_{j=1}^6\sum_{x=1}^6\zeta(m_{r_j}=x,s_{r_j}=x|M_{r_j},S_{r_j})\nonumber\\
 &\leq& \max_{\lambda\in\Lambda}\frac{1}{6}\sum_{j=1}^6\zeta(M_{r_j},\lambda)\nonumber\\
 &=&\frac{1}{6}\left(4+2.\frac{1}{2}\right)\nonumber\\
 &=&\frac{5}{6}.
\end{eqnarray}
for every subset of 6 distinct contexts $\{r_1,r_2,r_3,r_4,r_5,r_6\}\subset\{(1\bar{1}),(1\bar{2}),\dots,(1\bar{7})\}$.
There are $^7C_6=7$ such inequalities.
 
 \item All the contexts (or each irrMISC(5) and 2 indeterministic contexts):
\begin{eqnarray}
 &&{\rm Corr}_{\rm MISC(7)}\nonumber\\
 &\equiv&\frac{1}{7}\sum_{i=1}^7\sum_{x=1}^6p(m_i=x,s_i=x|M_i,S_i)\nonumber\\
 &\leq& \max_{\lambda\in\Lambda}\frac{1}{7}\sum_{i=1}^7\zeta(M_i,\lambda)\nonumber\\
 &=&\frac{1}{7}\left(4+3.\frac{1}{2}\right)\nonumber\\
 &=&\frac{11}{14}.
\end{eqnarray}
 There is one such inequality.
 \end{enumerate}

\subsubsection{2Reg$(K_{1,n})$ with $n (>1)$ odd}
We denote the edges of $K_{1,n}$ (and the corresponding contexts in 2Reg$(K_{1,n})$) by 
$$\{(1\bar{1}),(1\bar{2}),\dots,(1\bar{n})\}.$$

Since every edge is connected to every other edge in $K_{1,n}$, each triple of contexts in 
2Reg$(K_{1,n})$ will form a 3-hypercycle. Extremal probabilistic models induced by subscenarios containing these 3-hypercycles 
can {\em at most} make the remaining $n-3$ contexts deterministic. Indeed, taking out 3 edges from $K_{1,n}$ yields 
$K_{1,n-3}$ as a remnant and 2Reg$(K_{1,n-3})$ does admit deterministic extremal probabilistic models (from Theorem \ref{thmunc}, 
since $n-3$ is even for any odd $n>1$.)

Each MISC of size $c$ would require a set of $c$ contexts such that no more than $c-1$ of them 
can be made deterministic in {\em any} extremal probabilistic model on 2Reg$(K_{1,n})$. 
We know that every choice of a set of $n-3$ contexts in 2Reg$(K_{1,n})$ can be made deterministic by some 
extremal probabilistic model since every such choice is in one-to-one correspondence with a choice of a 3-hypercycle 
(consisting of the remaining 3 contexts) inducing such an extremal probabilistic model on 2Reg$(K_{1,n})$: $^nC_{n-3}= {^n}C_3=\frac{n!}{3!(n-3)!}$. 
Hence a set of contexts of size $\leq n-3$ can never form a MISC: there will always exist an extremal probabilistic model
which will make all of the contexts in the set deterministic. All irrMISCs are therefore 
of size $n-2$ in 2Reg$(K_{1,n})$ and every set of $n-2$ contexts forms an irrMISC$(n-2)$. Clearly, the sufficient condition for a set of contexts to be a MISC that we identified in Sec.~6.2.1 is also necessary for contextuality scenarios of the type 2Reg$(K_{1,n})$ for odd $n\geq3$.

The noncontextuality inequalities for 2Reg$(K_{1,n})$ then correspond to average source-measurement correlation for
\begin{enumerate}
 \item irrMISC$(n-2)$s:
\begin{eqnarray}\label{eq:k1nirrMISCn-2}
&&{\rm Corr}_{\rm irrMISC(n-2)}\nonumber\\
&\equiv&\frac{1}{n-2}\sum_{j\in {\rm irrMISC}(n-2)}\sum_{x=1}^{n-1}p(m_j=x,s_j=x|M_j,S_j)\nonumber\\
&\leq& \max_{\lambda\in\Lambda}\frac{1}{n-2}\sum_{j\in {\rm irrMISC}(n-2)}\zeta(M_j,\lambda)\nonumber\\
  &=&\frac{1}{n-2}\left(n-3+\frac{1}{2}\right)\nonumber\\
  &=&1-\frac{1}{2(n-2)}.
 \end{eqnarray}
There are $^nC_{n-2}=\frac{n!}{2!(n-2)!}=\frac{n(n-1)}{2}$ such inequalities.

\item Each irrMISC$(n-2)$ and 1 indeterministic context:
\begin{eqnarray}
&&{\rm Corr}_{\rm MISC(n-1)}\nonumber\\
&\equiv&\frac{1}{n-1}\sum_{j=1}^{n-1}\sum_{x=1}^{n-1}p(m_{r_j}=x,s_{r_j}=x|M_{r_j},S_{r_j})\nonumber\\
 &\leq& \max_{\lambda\in\Lambda}\frac{1}{n-1}\sum_{j=1}^{n-1}\zeta(M_{r_j},\lambda)\nonumber\\
 &=&\frac{1}{n-1}\left(n-3+2.\frac{1}{2}\right)\nonumber\\
 &=&1-\frac{1}{n-1}.
\end{eqnarray}
for every subset of $n-1$ distinct contexts $\{r_1,r_2,\dots,r_{n-1}\}\subset\{(1\bar{1}),(1\bar{2}),\dots,(1\bar{n})\}$.
There are $^nC_{n-1}=n$ such inequalities.
 
 \item All the contexts (or each irrMISC$(n-2)$ and 2 indeterministic contexts):
\begin{eqnarray}
 &&{\rm Corr}_{\rm MISC(n)}\nonumber\\
 &\equiv&\frac{1}{n}\sum_{i=1}^n\sum_{x=1}^{n-1}p(m_i=x,s_i=x|M_i,S_i)\nonumber\\
 &\leq& \max_{\lambda\in\Lambda}\frac{1}{n}\sum_{i=1}^n\zeta(M_i,\lambda)\nonumber\\
 &=&\frac{1}{n}\left(n-3+3.\frac{1}{2}\right)\nonumber\\
 &=&1-\frac{3}{2n}.
\end{eqnarray}
 There is one such inequality.\footnote{Note that the contextuality scenario 2Reg$(K_{1,5})$ appeared in Ref.~\cite{CabelloK15}, where a subnormalized assignment of 
quantum projectors to this scenario in $\mathbb{C}^6$ was presented. This set of projectors, however, is not a KS set because of the 
subnormalization.}
\end{enumerate}

\subsubsection{2Reg($K_{m,n}$), for odd $mn>1$}
We now extend the derivation of irrMISC noncontextuality inequalities above to the case of all  KS-uncolourable 2-regular scenarios, 2Reg($K_{m,n}$), obtained from arbitrary complete bipartite graphs $K_{m,n}$. These are just those $K_{m,n}$ with odd $mn>1$ (from Theorem \ref{thmunc}): $K_{3,3}$ and $K_{1,n}$ (odd $n>1$) 
are special cases of these, so the recipe for noncontextuality inequalities obtained here will recover the noncontextuality inequalities we have already obtained.

Obtaining these noncontextuality inequalities entails two things: identifying all the irrMISCs in the contextuality scenario and calculating their upper bounds due to noncontextuality, i.e.,  $\beta(\Gamma,q)$. Since these are 2-regular scenarios, calculating the upper bounds is easy (due to Theorem \ref{halfinteger}).

Before we proceed with the general result, we need the following definitions:

{\em Edge cover:} An edge cover of a graph is a set of its edges such that every vertex of the graph belongs to at least one of the edges in this set. For example, $K_{3,3}$ has an edge cover $\{\{1,\bar{1}\},\{1,\bar{2}\},\{2,\bar{2}\},\{2,\bar{3}\},\{3,\bar{3}\}\}$.

{\em Minimum edge cover:} An edge cover of the smallest possible size for a graph is called its minimum edge cover. Size of an edge cover is given by the number of edges it contains. For example, $K_{3,3}$ has a minimum edge cover $\{\{1,\bar{1}\},\{2,\bar{2}\},\{3,\bar{3}\}\}$.

{\em Minimal edge cover:} An edge cover such that no proper subset of it is an edge cover is called a minimal edge cover. Every minimum edge cover is minimal, but not conversely. For example, $K_{3,3}$ has a minimal edge cover $\{\{1,\bar{1}\},\{1,\bar{2}\},\{2,\bar{3}\},\{3,\bar{3}\}\}$. 

Below, we prove some properties of minimal edge covers of $K_{m,n}$ before moving on to a characterization of irrMISCs in $K_{m,n}$.

\begin{theorem}\label{generalKmnfacts}
Any minimal edge cover of $K_{m,n}$ partitions the vertices of $K_{m,n}$ into a disjoint union of $\kappa$ connected subgraphs, where $\kappa\in\{1,2,\dots,\min\{m,n\}\}$, and we have for the number of edges, $N$, in the minimal edge cover, $N=m+n-\kappa$. Hence, we have 
$$\max\{m,n\}\leq N\leq m+n-1.$$
When either $m=1$ or $n=1$, we have that $K_{m,n}$ is its own only minimal (hence also minimum) edge cover so that $\kappa=1$. For  $m,n\geq2$, the total number of minimum edge covers of $K_{m,n}$ is 
$$\frac{\max\{m,n\}!}{|m-n|!}(\min\{m,n\})^{|m-n|},$$
and the total number of minimal (not necessarily minimum) edge covers of $K_{m,n}$ ($m,n\geq2$) is
\begin{equation}
\sum_{N=\max\{m,n\}}^{m+n-2} (\textrm{number of minimal edge covers of size } N).
\end{equation}
\end{theorem}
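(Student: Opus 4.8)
\textit{Proof proposal.} The plan is to reduce everything to one structural fact: \emph{every minimal edge cover of $K_{m,n}$ is a vertex-disjoint union of stars} $K_{1,t}$ with $t\geq1$, and these stars are precisely the $\kappa$ connected components of the cover. I would prove this first. A minimal edge cover cannot contain three edges forming a path $ab,bc,cd$: deleting $bc$ leaves $a,b$ covered by $ab$ and $c,d$ covered by $cd$, so the smaller set is still an edge cover, contradicting minimality. Nor can it contain a cycle: $K_{m,n}$ is bipartite, so every cycle has length at least $4$ and therefore contains a three-edge path, which is already excluded. Each component of the cover is connected, contains at least one edge (else its vertex would be uncovered), and contains neither a cycle nor a three-edge path; hence it is a tree of diameter at most $2$, i.e.\ a star $K_{1,t}$ with $t\geq1$.

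Granting this, write the components as $K_{1,t_1},\dots,K_{1,t_\kappa}$. They partition the $m+n$ vertices, so $\sum_{i=1}^{\kappa}(t_i+1)=m+n$, and since $K_{1,t_i}$ has $t_i$ edges the total number of edges is $N=\sum_i t_i=m+n-\kappa$. For the range of $\kappa$: assign to each star a ``centre side'' (for a lone edge, either side will do). If $p$ of the stars are centred on the $m$-side and $\kappa-p$ on the $n$-side, counting $m$-side vertices gives $m=p+\sum_{n\text{-centred}}t_i\geq p+(\kappa-p)=\kappa$, since each $t_i\geq1$; symmetrically $\kappa\leq n$, and with $\kappa\geq1$ this gives $1\leq\kappa\leq\min\{m,n\}$, hence $\max\{m,n\}\leq N\leq m+n-1$ from $N=m+n-\kappa$. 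The case $m=1$ is immediate: in $K_{1,n}$ each of the $n$ leaves has a unique incident edge that must belong to any cover, so $K_{1,n}$ itself is the only edge cover, in particular the unique minimal (and minimum) one, with $\kappa=1$. When $m,n\geq2$ a single star cannot meet both sides (it has only one vertex on one side), so $\kappa\geq2$ and $N\leq m+n-2$; this is exactly why the final displayed sum runs over $N\in\{\max\{m,n\},\dots,m+n-2\}$, and that identity is then just the observation that the set of minimal edge covers splits, disjointly, according to size.

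It remains to count the \emph{minimum} edge covers for $m,n\geq2$; take $m\leq n$ without loss of generality. Minimality of $N$ is maximality of $\kappa$, so $\kappa=m$, and then the star decomposition forces each of the $m$ small-side vertices to be the unique small-side vertex of its own star. I would then set up a correspondence between such configurations and pairs consisting of a maximum matching of $K_{m,n}$ (an injection of the $m$-side into the $n$-side, $n!/(n-m)!$ of them) together with an assignment of each of the $n-m$ unmatched $n$-side vertices to one of the $m$ small-side vertices ($m^{\,n-m}$ choices), which yields $\frac{\max\{m,n\}!}{|m-n|!}(\min\{m,n\})^{|m-n|}$. \textbf{The main obstacle is making this correspondence a genuine bijection rather than an overcount}: a star $K_{1,t}$ with $t\geq2$ is recovered from $t$ distinct (matching, assignment) pairs, since any of its $t$ leaves could have played the role of the matched vertex. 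The delicate step is therefore to argue this redundancy away; I would do so by enumerating directly from the star picture instead (each small-side vertex receives a nonempty block of $n$-side vertices, the blocks partitioning the $n$-side) and reconciling that direct enumeration with the stated closed form — the step at which any hidden overcounting must be resolved.
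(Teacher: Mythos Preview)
Your structural argument is correct and parallels the paper's: the paper observes that no edge in a minimal cover can join two vertices both of degree $\geq 2$ (equivalent to your ``no three-edge path''), whence each component is a star, and $N=m+n-\kappa$ with $1\leq\kappa\leq\min\{m,n\}$ follows just as you describe.

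Your suspicion about the counting step is not merely an obstacle to be overcome --- it exposes an actual error. The paper's proof uses exactly the (matching, assignment) construction you set up and does \emph{not} address the redundancy. A minimum edge cover whose stars have leaf-counts $t_1,\dots,t_m$ (taking $m\leq n$, so each small-side vertex is a star centre) arises from exactly $\prod_i t_i$ distinct (matching, assignment) pairs, so the correspondence is a bijection only when every $t_i=1$, i.e.\ only when $m=n$ and the covers are perfect matchings. For $m\neq n$ the stated formula overcounts: $K_{2,3}$ gives $\tfrac{3!}{1!}\cdot 2=12$ against the correct $6$; $K_{3,5}$ gives $\tfrac{5!}{2!}\cdot 3^{2}=540$ against the correct $150$. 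Your direct star-picture enumeration is the right approach and yields the number of surjections from the $\max\{m,n\}$-side onto the $\min\{m,n\}$-side, namely $\sum_{k=0}^{\min\{m,n\}}(-1)^{k}\binom{\min\{m,n\}}{k}(\min\{m,n\}-k)^{\max\{m,n\}}$; this does not reconcile with the stated closed form. The fault lies in the theorem statement and the paper's own proof, not in your reasoning.
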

\begin{proof}
	For a given $K_{m,n}$, let's call the set of $m$ vertices $S_m$ and the set of $n$ vertices $S_n$.	The size, $N$, of an edge cover of $K_{m,n}$ must satisfy 
	\begin{equation}
	N=\sum_{v\in S_m}{\rm deg}(v)=\sum_{v\in S_n}{\rm deg}(v),
	\end{equation}
	where $v$ denotes a vertex of $K_{m,n}$ and ${\rm deg}(v)$ denotes the degree of the vertex, i.e., the number of edges in which it appears. 
	
	Note that two vertices connected by an edge in a minimal edge cover cannot both have degree $>1$: if an edge cover has a pair of degree 2 vertices connected by an edge, then the edge cover cannot be minimal since the said connecting edge can be dropped while maintaining the edge cover property. (See Fig.~\ref{degreefig}.) This means that a minimal edge cover of $K_{m,n}$ is such that any vertex of degree 2 or more is only connected to degree 1 vertices, hence the minimal edge cover is a disjoint union of connected bipartite subgraphs of type $K_{1,b}$ or $K_{a,1}$, where $a\leq m, b\leq n$. Denoting the set of vertices of each subgraph by $V_i$, we have that number of edges in such a subgraph is $|V_i|-1$. Thus, the total number of edges in a minimal edge cover, $N=\sum_{i=1}^{\kappa}|V_i|-\kappa=m+n-\kappa$, where $\kappa$ is the number of disjoint subgraphs whose union yields the minimal edge cover. Clearly, $1\leq\kappa\leq \min\{m,n\}$, where $\kappa=1$ corresponds to the case of any $K_{m,n}$ graph with $m=1$ or $n=1$ since it is its own minimal edge cover and we have $N=m+n-1$. For any other $K_{m,n}$ (with $m,n\geq2$), we have that $\kappa\geq2$ and the maximum size of a minimal edge cover is $m+n-2$: this is achieved when a vertex $v_{\min}\in S_{\min\{m,n\}}$ is connected to all but one (say, $v_{\max}$) of the vertices in $S_{\max\{m,n\}}$. The remaining vertex $v_{\max}\in S_{\max\{m,n\}}$ is then connected to all vertices of $S_{\min\{m,n\}}$ except $v_{\min}\in S_{\min\{m,n\}}$. We then have $N=m+n-2$.
	
	The total number of minimum edge covers can be computed as follows: every vertex in $S_{\min\{m,n\}}$ is connected one-to-one via an edge to a vertex in $S_{\max\{m,n\}}$ and there are $\frac{\max\{m,n\}!}{|m-n|!}$ possible ways to do this. For each such way, each of the remaining $|m-n|$ vertices in $S_{\max\{m,n\}}$ can be connected via an edge to one of the $\min\{m,n\}$ vertices of $S_{\min\{m,n\}}$, so there are $(\min\{m,n\})^{|m-n|}$ possible configurations for the remaining edges. This yields a total of $\frac{\max\{m,n\}!}{|m-n|!}(\min\{m,n\})^{|m-n|}$ minimum edge covers for $K_{m,n}$.
\end{proof}

We leave the general case as an open question:

{\em What is the number of minimal-but-not-minimum edge covers for an arbitrary complete bipartite graph $K_{m,n}$, where $m,n\geq 2$?}

\begin{figure}\centering
	\includegraphics[scale=0.4]{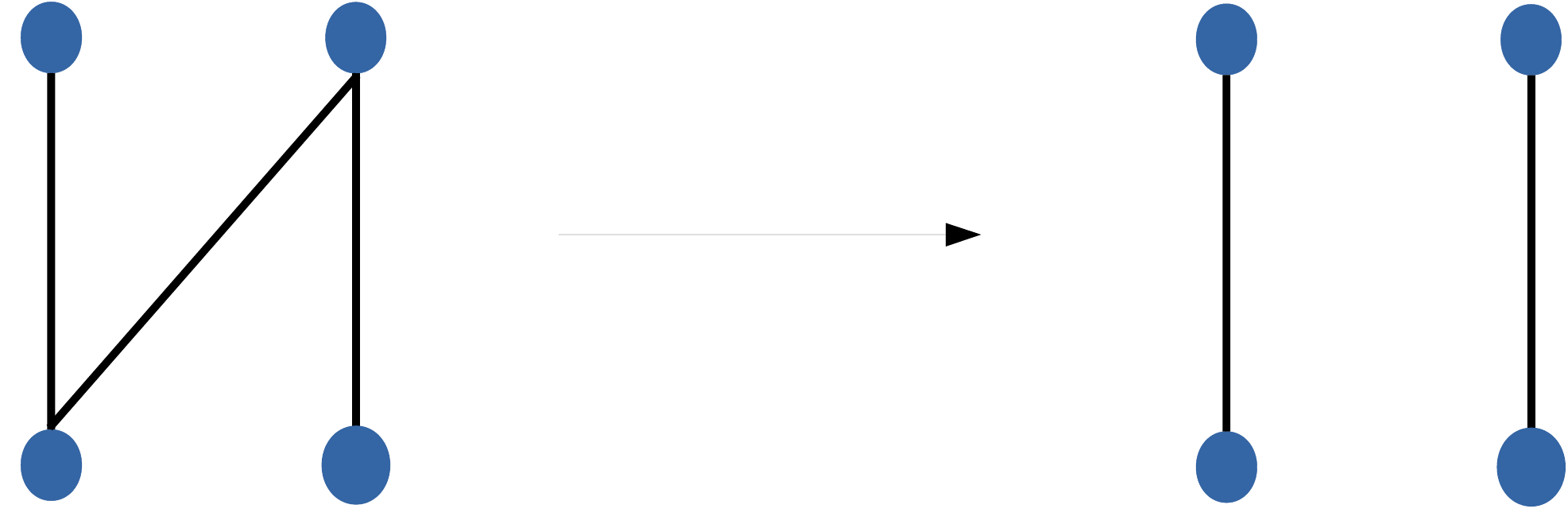}
	\caption{Why two vertices connected by an edge in a minimal edge cover of a bipartite graph cannot both have degree greater than 1.}
	\label{degreefig}
\end{figure}

\begin{theorem}\label{MISCKmn}
	
	A set of contexts in 2Reg$(K_{m,n})$, odd $mn>1$, forms a MISC if and only if 
	\begin{enumerate}
		\item for $m=1$ or $n=1$:
		the number of contexts in the set is {\em at least} $mn-2$.
		\item for $m\geq 3$ and $n\geq3$: the corresponding set of edges in $K_{m,n}$ is an edge cover of $K_{m,n}$.
	\end{enumerate}
\end{theorem}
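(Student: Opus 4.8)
The plan is to convert the statement ``$T$ is a MISC'' into a purely combinatorial condition on the corresponding edge set of $K_{m,n}$, using the classification of extremal probabilistic models from Theorem \ref{halfinteger}. Contexts of $2\mathrm{Reg}(K_{m,n})$ biject with edges of $K_{m,n}$, and $T$ is a MISC iff no extremal model $p$ renders every context of $T$ deterministic, i.e.\ iff $T\not\subseteq D(p)$ for every extremal $p$, where $D(p)$ denotes the set of contexts $p$ makes deterministic. By Theorem \ref{halfinteger} the complementary set $I(p)=E(K_{m,n})\setminus D(p)$ is, as a set of edges, a disjoint union of odd hypercycles, while the probability-$1$ nodes pair up the edges of $D(p)$ into edge-disjoint two-edge paths; since $mn$ is odd this also forces $|D(p)|$ even and $|I(p)|$ odd. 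Conversely, combining Theorem \ref{khypercyclemodels} with Theorem \ref{thmunc} (deleting an odd number of edges leaves an even number, hence a KS-colourable remainder), any prescribed $3$-hypercycle of $K_{m,n}$ is the indeterministic part of an extremal model whose deterministic part is everything else. So the task reduces to comparing $T$ with the edge sets $I(p)$.

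The case $m=1$ or $n=1$ is immediate: in $K_{1,N}$ with $N=mn$ odd, any three edges share the centre and form a $3$-hypercycle, and deleting three edges leaves $K_{1,N-3}$, which is KS-colourable; hence if $|T|\le N-3$ one picks three edges outside $T$, realises them as $I(p)$, and gets $T\subseteq D(p)$, so $T$ is not a MISC, whereas the sufficient condition of Sec.~6.2.1 makes every set of $N-2$ contexts a MISC. For $m,n\ge 3$ (both odd), the forward direction of the theorem is the contrapositive ``$T$ not an edge cover $\Rightarrow$ $T$ not a MISC'': if $v$ is uncovered by $T$ then, since $\deg(v)\ge 3$, any three edges at $v$ form a $3$-hypercycle, and its realisation as an extremal model (as above) makes all remaining contexts — in particular all of $T$, which avoids $v$ — deterministic.

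For the converse, ``$T$ an edge cover $\Rightarrow$ $T$ a MISC,'' suppose some extremal $p$ had $T\subseteq D(p)$; I would establish the structural lemma that for every extremal $p$ on $2\mathrm{Reg}(K_{m,n})$ the indeterministic set $I(p)$ contains all edges incident to some vertex of $K_{m,n}$ — equivalently, $D(p)\subseteq E(K_{m,n}-v)$ for some $v$, so $D(p)$ is never an edge cover — whence $T\subseteq D(p)$ contradicts $T$ being an edge cover. This lemma is where bipartiteness and parity bookkeeping must enter: Theorem \ref{2regclawtriangle} identifies the minimal odd hypercycles of the bipartite graph $K_{m,n}$ as the $3$-stars, and one has to show that an arbitrary disjoint union of odd hypercycles whose complement decomposes into two-edge paths (as forced by extremality via Theorem \ref{halfinteger}) cannot be incident to every vertex; the identities of Theorem \ref{generalKmnfacts} — a connected graph admits a two-edge-path decomposition iff it has an even number of edges, and a minimal edge cover of $K_{m,n}$ has $m+n-\kappa$ edges — are the natural tools, with the argument organised by the number of two-edge paths comprising $D(p)$. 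I expect this structural lemma to be the main obstacle: it is straightforward for $K_{3,3}$, where every vertex has degree $3$ so a $3$-star already is a full vertex-star and a short finite check over the possible sizes of $D(p)$ suffices, but the general case requires a uniform bound on the vertex set that the deterministic part of an extremal model can touch. Granting it, the same circle of ideas pins down the irreducible MISCs — a MISC $T$ with no proper subset a MISC should, via Theorem \ref{generalKmnfacts}, be exactly a minimal edge cover of $K_{m,n}$ — recovering the enumerations already found for $2\mathrm{Reg}(K_{3,3})$ and $2\mathrm{Reg}(K_{1,n})$.
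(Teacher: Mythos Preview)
Your treatment of Case 1 and of the forward implication in Case 2 (not an edge cover $\Rightarrow$ not a MISC, via an uncovered vertex supporting a $3$-hypercycle) matches the paper's argument. The gap is in your converse for Case 2. The structural lemma you propose --- that for every extremal $p$ on $2\mathrm{Reg}(K_{m,n})$ the set $I(p)$ contains \emph{all} edges incident to some vertex --- is false once $\min\{m,n\}\ge 5$. In $K_{5,5}$, take the claw $\{(1\bar 2),(1\bar 3),(1\bar 4)\}$ at vertex $1$: it is a $3$-hypercycle in $2\mathrm{Reg}(K_{5,5})$, and the remaining $22$ edges form a connected graph with even edge-count, hence admit a decomposition into two-edge paths (supplying the singleton part of the induced subscenario per Theorem~\ref{halfinteger} and Theorem~\ref{khypercyclemodels}). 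The resulting extremal model has $D(p)$ covering every vertex --- vertex $1$ through $(1\bar 1)$ and $(1\bar 5)$, every other vertex trivially --- so $I(p)$ is not a full vertex-star and $D(p)$ is itself an edge cover containing, e.g., the perfect matching $\{(i\bar i)\}_{i=1}^{5}$. Your lemma, and with it your route to the converse, collapses here; parity bookkeeping and Theorem~\ref{generalKmnfacts} cannot rescue it.

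The paper does not go through any such lemma. Its converse argument is that if $T$ is an edge cover then no vertex of $K_{m,n}$ can support a claw disjoint from $T$, so no $3$-hypercycle extremal model renders all of $T$ deterministic; a footnote then restricts attention to $3$-hypercycle models on the grounds that they maximize the size of the deterministic part. You should observe, though, that the step ``$T$ edge cover $\Rightarrow$ no disjoint claw'' implicitly uses that each vertex has degree exactly $3$ (so that being covered by $T$ leaves at most two incident edges outside $T$); the same $K_{5,5}$ example exhibits a claw disjoint from the perfect-matching edge cover. Whichever route you follow, the converse for general odd $m,n\ge 3$ beyond $K_{3,3}$ is precisely where the argument is delicate, and neither your structural lemma nor the paper's claw argument extends there without modification.
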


\begin{proof}
	
	{\bf Case 1}, i.e., $m=1$ or $n=1$ (and odd $mn>1$):
	
	In this case, we have that every choice of 3 edges in $K_{m,n}$ is a claw and therefore induces a 3-hypercycle extremal probabilistic model on 2Reg($K_{m,n}$) that renders all the remaining ($mn-3$) contexts in 2Reg($K_{m,n}$) deterministic. To form a MISC, then, requires at least $mn-3+1=mn-2$ contexts. (See Fig.~\ref{irrMISC5} for a $K_{1,7}$ example.)
	
	{\bf Case 2}, i.e., $m\geq3$ and $n\geq3$ (and odd $mn>1$):
	
	{\em Every MISC of 2Reg$(K_{m,n})$ corresponds to an edge cover of $K_{m,n}$:} We show this by proving the contrapositive. If a set of edges is not an edge cover of $K_{m,n}$, then there exists a vertex in $K_{m,n}$ (not covered by the set of edges) that can support a claw (subgraph $K_{1,3}$ of $K_{m,n}$) which corresponds to a 3-hypercycle in 2Reg$(K_{m,n})$, odd $mn>1$. 
	All the contexts in the corresponding set of contexts can then be made deterministic relative to an extremal probabilistic model induced by this 3-hypercycle (cf.~Theorems \ref{halfinteger} and \ref{khypercyclemodels}), hence the set cannot be a MISC. (See Fig.~\ref{inducedsubscenario}.)

	{\em Every edge cover of $K_{m,n}$ corresponds to a MISC of 2Reg($K_{m,n}$):} If a set of edges forms an edge cover of $K_{m,n}$, then there does not exist any vertex in $K_{m,n}$ that can support a claw disjoint from the set of edges. Hence, it is not possible to find a 3-hypercycle extremal probabilistic model on 2Reg($K_{m,n}$) that makes all the contexts in the set deterministic: at least one of the contexts must belong to a 3-hypercycle in any extremal probabilistic model induced by such a hypercycle. The set of contexts must therefore be a MISC.\footnote{Recall that we need to restrict ourselves to 3-hypercycle extremal probabilistic models to identify MISCs in 2Reg($K_{m,n}$) scenarios: any bigger odd hypercycles would
    make even fewer contexts deterministic than a 3-hypercycle extremal probabilistic model and lead us to an artificially lower bound on the noncontextuality inequality; we have to give a noncontextual model as much leeway as mathematically possible to reproduce perfect predictability and thus find an upper bound that cannot be exceeded by {\em any} noncontextual ontological model, not merely those using extremal probabilistic models induced by 5 or higher odd hypercycles.} (See Fig.~\ref{3hypercycles})

\end{proof}

\begin{figure}\centering
	\includegraphics[scale=0.35]{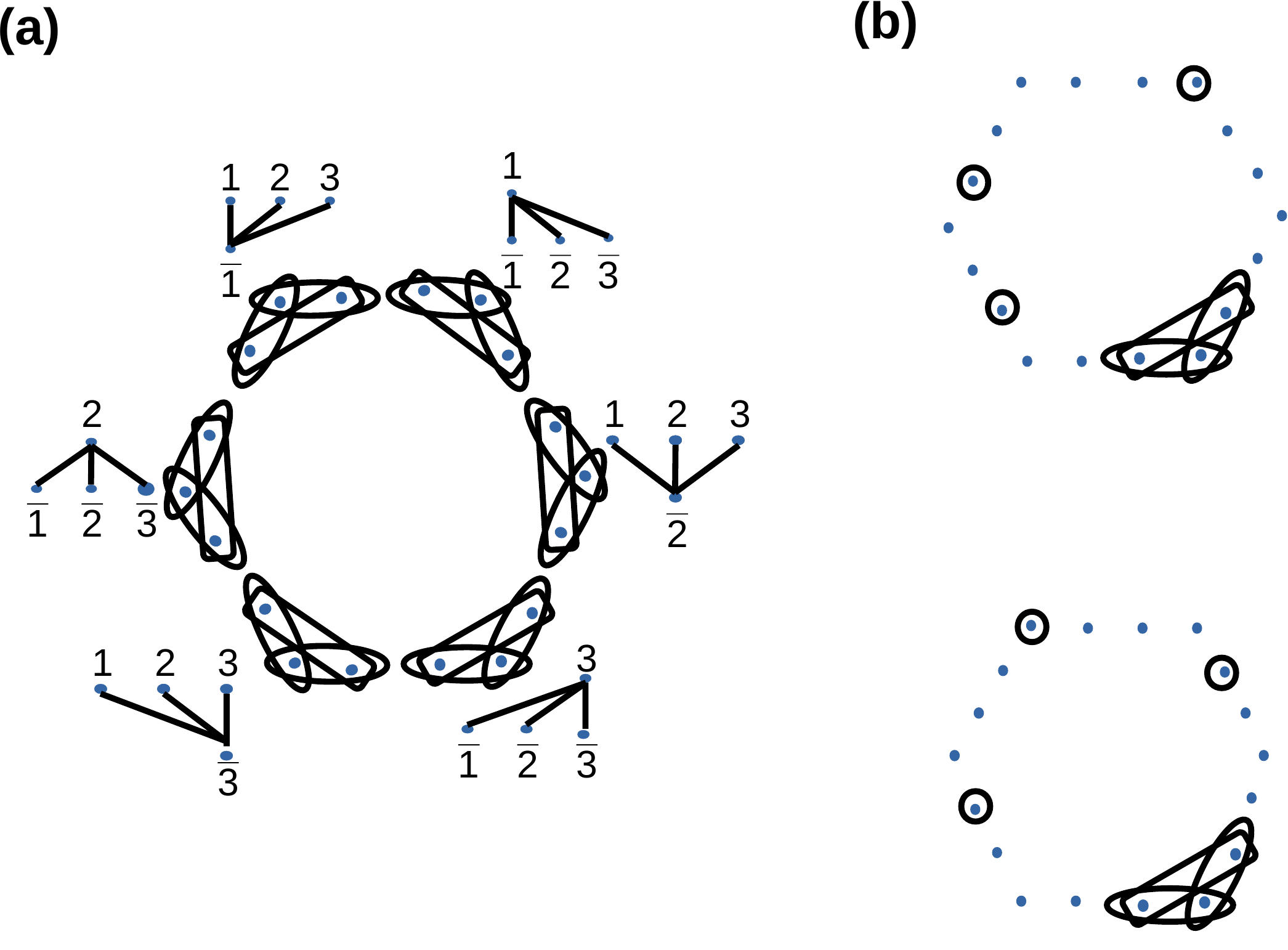}
	\caption{(a) All the six 3-hypercycles in the scenario 2Reg($K_{3,3}$), (b) Examples of extremal probabilistic models corresponding to a particular 3-hypercycle.}
	\label{3hypercycles}
\end{figure}

\begin{theorem}\label{irrMISCKmn}
A set of contexts is an irrMISC for 2Reg$(K_{m,n})$, odd $mn>1$, if and only if 
\begin{enumerate}
	\item for $m=1$ or $n=1$:
	the number of contexts in the set is {\em exactly} $mn-2$.
	\item for $m\geq 3$ and $n\geq3$: the corresponding set of edges in $K_{m,n}$ is a {\em minimal} edge cover of $K_{m,n}$, i.e., an edge cover such that none of its proper subsets is an edge cover.
\end{enumerate}
\end{theorem}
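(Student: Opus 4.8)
\emph{Proof proposal.} The plan is to obtain this characterization directly from Theorem \ref{MISCKmn} and the definition of an irrMISC, exploiting the one-to-one correspondence between contexts of 2Reg$(K_{m,n})$ and edges of $K_{m,n}$: a set $X$ of contexts corresponds to a set $E(X)$ of edges, and $X'\subsetneq X$ if and only if $E(X')\subsetneq E(X)$. So the only inputs beyond Theorem \ref{MISCKmn} are bookkeeping and the definitions of irrMISC and of minimal edge cover.

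First I would treat the case $m=1$ or $n=1$ (with $mn>1$ odd). Recall from Theorem \ref{MISCKmn} that here the MISCs are exactly the sets containing at least $mn-2$ contexts. Hence, given a set of exactly $mn-2$ contexts, every proper subset of it has strictly fewer than $mn-2$ contexts and is therefore (again by Theorem \ref{MISCKmn}) not a MISC; so every set of exactly $mn-2$ contexts is an irrMISC. Conversely, any MISC with more than $mn-2$ contexts contains a subset of exactly $mn-2$ contexts, which is itself a MISC, so such a set is not irreducible. This shows the irrMISCs in this case are precisely the sets of exactly $mn-2$ contexts, establishing item 1.

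Next I would treat $m\geq 3$ and $n\geq 3$ (with $mn>1$ odd), where I use the edge-cover characterization of MISCs from Theorem \ref{MISCKmn}. By definition, a set of contexts $X$ is an irrMISC iff $X$ is a MISC and no proper subset of $X$ is a MISC. Translating through the contexts-to-edges correspondence and applying Theorem \ref{MISCKmn} to $X$ and to each of its subsets, this is equivalent to saying that $E(X)$ is an edge cover of $K_{m,n}$ while no proper subset of $E(X)$ is an edge cover — which is precisely the definition of a minimal edge cover. This proves item 2, and (via Theorem \ref{generalKmnfacts}) it also pins down the size range and counts of these irrMISCs; the already-worked examples $K_{3,3}$ (minimal edge covers of sizes $3$ and $4$, giving the irrMISC(3)s and irrMISC(4)s) and $K_{1,n}$ (its own unique minimal edge cover, of size $n=mn-2+2$, but recall that in the $m=1$ case the relevant count comes from Case 1, not Case 2) reappear as special cases. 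The only real subtlety — more a bookkeeping matter than a genuine obstacle — is keeping the two regimes apart: for $m=1$ or $n=1$ a single claw trivializes any $\leq mn-3$ contexts, so MISC-ness is a cardinality condition and minimality of an irrMISC becomes a statement about sizes, whereas for $m,n\geq 3$ the entire content is the equivalence ``irreducible $\Leftrightarrow$ minimal edge cover''.
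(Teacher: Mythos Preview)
Your proposal is correct and follows essentially the same approach as the paper: the paper's proof is a one-liner stating that the result follows directly from the definition of an irrMISC together with Theorem \ref{MISCKmn}, and you have simply spelled out that deduction explicitly in each of the two regimes. Your extra remarks about size ranges and examples are accurate but go beyond what is needed for the proof itself.
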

\begin{proof}
	This just follows from noting the definition of an irrMISC and Theorem \ref{MISCKmn}: an irrMISC is a MISC that does not contain another MISC as a proper subset. 
		
\end{proof}
\begin{figure}\centering
	\includegraphics[scale=0.35]{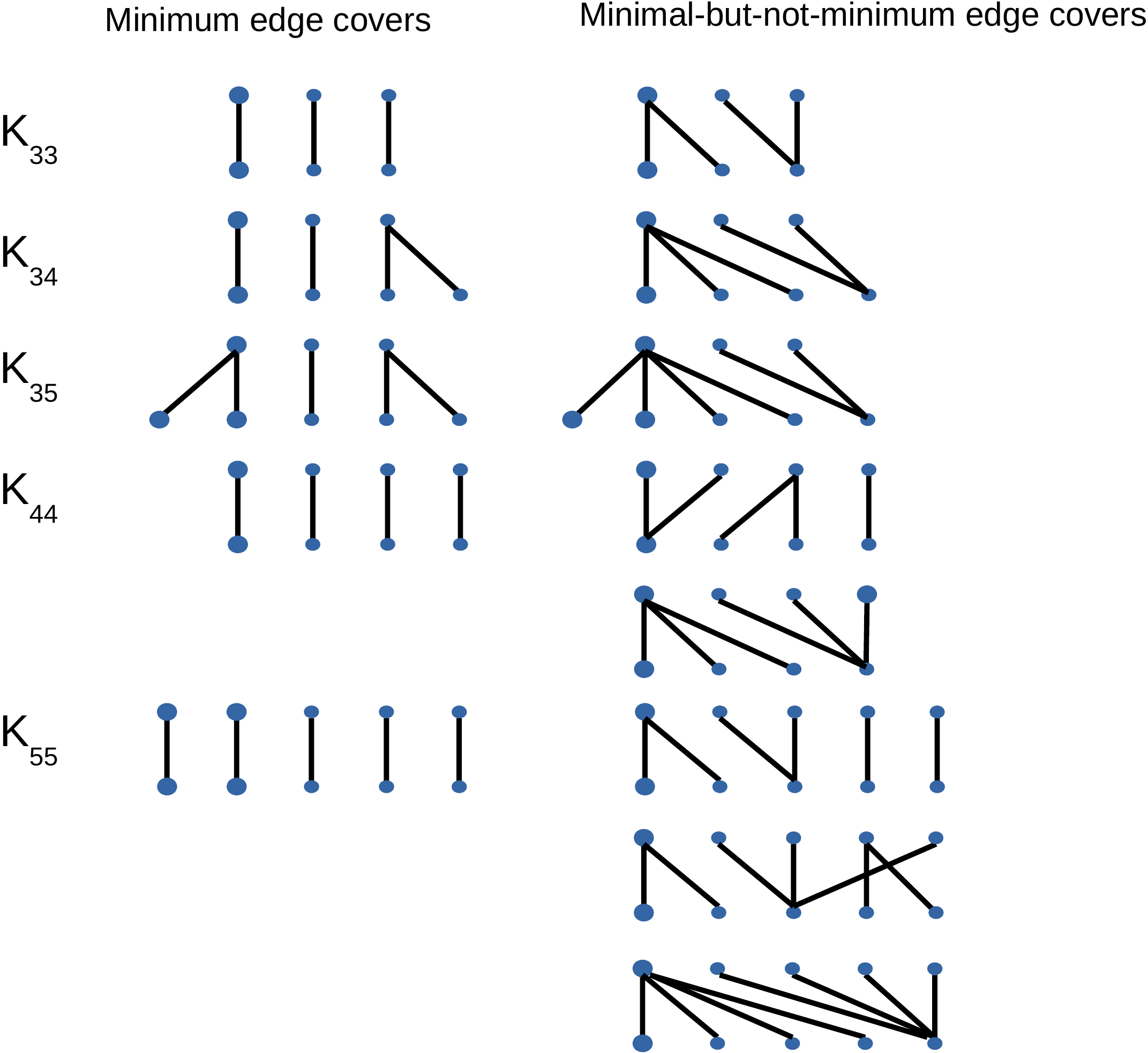}
	\caption{Representative minimal edge covers of various complete bipartite graphs, both minimum and non-minimum.}
	\label{edgecovers}

\end{figure}

Hence, a minimum edge cover always corresponds to an irrMISC, but not conversely. The 3-context irrMISCs of 2Reg($K_{3,3}$) form minimum edge covers (of size 3) but the 4-context irrMISCs of 2Reg($K_{3,3}$) don't form minimum edge covers. Thus, the smallest irrMISCs correspond to minimum edge covers, which for $K_{m,n}$ (odd $mn>1$) are of size $\max\{m,n\}$ (cf.~Theorem \ref{generalKmnfacts}). See Figs.~\ref{irrMISC4} and \ref{irrMISC3}.

When $m=n$, the smallest irrMISCs correspond to the perfect matchings\footnote{A {\em perfect matching} of a graph is a set of mutually disjoint edges that cover all vertices of the graph.} of $K_{m,m}$ and thus there are $m!$ such irrMISCs. Recall that for $K_{3,3}$, there are 6 such irrMISCs. The other irrMISCs are minimal edge covers that are not minimum. The number of these minimal-but-not-minimum edge covers is 9 for $K_{3,3}$: every minimal-but-not-minimum edge cover of $K_{3,3}$ should contain at least one vertex of degree 2 because no vertex can be degree 3 if the edge cover is minimal and the edge cover would be a minimum edge cover if all vertices are degree 1; there are three possible choices for a degree 2 vertex among $\{1,2,3\}$ and for each such choice there are three possible choices of pairs of vertices connected to it given by $\{\{\bar{1},\bar{2}\},\{\bar{2},\bar{3}\},\{\bar{1},\bar{3}\}$; choosing these fixes a minimal-but-not-minimum edge cover and we therefore have $3\times3=9$ irrMISCs arising from these edge covers. Note that no larger irrMISCs exist for $K_{3,3}$ since $N\leq m+n-2=4$. In Fig.~\ref{edgecovers}, we illustrate minimum and minimal-but-not-minimum edge covers for some classes of complete bipartite graphs.

In Table \ref{tabirrMISC}, we summarize facts about irrMISCs in various contextuality scenarios that we have considered in this section.

\begin{table*}
		\begin{tabular}{|l|c|r|}
			\toprule
			Scenario & Number of irrMISCs & Bounds on ${\rm Corr}$\\
			& & for uniform $q$\\
			\hline
			2Reg($K_{3,3}$) & $15$: $6$ $3$-context and $9$ $4$-context irrMISCs (Figs.~\ref{irrMISC4},\ref{irrMISC3})& Eqs.~\eqref{eq:k33irrMISC4}, \eqref{eq:k33irrMISC3}\\
			\hline
			2Reg($K_{1,7}$) & $21$: All $5$-context (Fig.~\ref{irrMISC5}) & Eq.~\eqref{eq:k17irrMISC5}\\
			\hline
			2Reg($K_{1,n}$) (with odd $n>1$) & $\frac{n(n-1)}{2}$: All $(n-2)$-context & Eq.~\eqref{eq:k1nirrMISCn-2}\\
			\hline
		\end{tabular}
		\caption{Summary of irrMISCs for a familar of $2$-regular contextuality scenarios. A general characterization of irrMISCs for contextuality scenarios of type 2Reg($K_{m,n}$) (with odd $mn>1$), of which all the examples above are instances, can be found in  Theorem \ref{irrMISCKmn}.}
\label{tabirrMISC}
\end{table*}

\section{Discussion and future work}
To summarize, we have presented a framework for noise-robust noncontextuality inequalities that are inspired by logical proofs of the Kochen-Specker theorem. We have identified special sets of these inequalities, corresponding to irreducible minimally indeterministic sets of contexts (or irrMISCs), that are independent of each other and can generate any other noise-robust noncontextuality inequality corresponding to a minimally indeterministic set of contexts (MISC) or even any other (non-MISC) set of contexts. The basic building blocks of any noise-robust noncontextuality inequality obtained in this framework are the noise-robust noncontextuality inequalities for irrMISCs. Along the way, we also obtained a parameterization of contextuality scenarios and identified ways to detect their KS-uncolourability.

Note that the contextuality scenarios we have considered within this framework are all required to be KS-uncolourable and this is the {\em only} restriction on them.\footnote{The case of KS-colourable contextuality scenarios that fit within a generalization of the CSW framework \cite{CSW} was considered in Ref.~\cite{robustcsw}. These KS-colourable scenarios satisfy the property that the set of probabilistic models satisfying consistent exclusivity on them coincides with the set of general probabilistic models on them. The case of KS-colourable contextuality scenarios where this property fails -- and which are therefore outside the purview of Refs.~\cite{CSW, robustcsw} -- will be taken up in future work.} In particular, we {\em do not} insist that they admit a realization with {\em KS sets} of projectors in quantum theory. When they do admit such a realization, we have that quantum theory can violate any noise-robust noncontextuality inequality by achieving ${\rm Corr}_q=1$ in the ideal noiseless limit. When they don't admit realization with KS sets then it remains an open question whether noise-robust noncontextuality inequalities of type ${\rm Corr}_q\leq \beta(\Gamma,q)$ might still admit a quantum violation using nonprojective positive operator-valued measures (POVMs).

We conclude with some directions for future work building on this framework.
\begin{enumerate}
	
	\item Sufficiency of the noise-robust noncontextuality inequalities from irrMISCs: 
	
	A crucial open question is whether the satisfaction of noncontextuality inequalities \`a la Eq.~\eqref{ncineqschematic} for the set of all irrMISCs of a given KS-uncolourable contextuality scenario is sufficient to guarantee the existence of a noncontextual ontological model in this scenario. In principle, one could check this in a brute-force manner for some scenario by using the algorithmic approach of Ref.~\cite{schmidWS} but, ideally, one would like an analytical proof of sufficiency (or otherwise) that applies to arbitrary KS-uncolourable contextuality scenarios.
	
	A weaker open question is whether it's possible to saturate any noncontextuality inequality of the form in Eq.~\eqref{ncineqschematic} in a noncontextual ontological model. We do not have an answer to this but we provide a possible approach to constructing such models in Appendix \ref{NCmodels}.
	
	\item POVM realizations when no KS sets exist: 
	
	Do there exist KS-uncolourable contextuality scenarios that do not admit KS sets but allow for a realization with POVMs in such a way that a noise-robust noncontextuality inequality can be violated? If so, construct some examples and determine the optimal POVM realization for them.
	
	Note that since POVMs can realize arbitrary joint measurability structures in quantum theory \cite{KHF}, it is conceivable that some such realizations could work for violating noise-robust noncontextuality inequalities of the type obtained in this paper even when the KS-uncolourable contextuality scenario concerned does not admit a realization with KS sets. 
	
	\item Lower-dimensional POVM realizations when KS sets exist:
	 
	For KS-uncolourable contextuality scenarios that do admit KS sets, find a realization with POVMs that works on a Hilbert space of lower dimension than the KS set realization and still violates a noise-robust noncontextuality inequality. Is it possible, for example, to realize the 18 ray scenario \cite{Cabello18ray} (Fig.~\ref{fig2}) with four-outcome qubit POVMs such that an irrMISC noncontextuality inequality is violated?
	
	\item Other KS constructions and extremal probabilistic models on their contextuality scenarios:
	
	We have only considered KS-uncolourable contextuality scenarios of type 2Reg($K_{m,n}$) in detail in this paper. It would be interesting to analyze the original Kochen-Specker construction \cite{KS67} and others that do not fall within the family of contextuality scenarios we have considered here. In particular, the parameterization of KS-uncolourable scenarios in Section 4 should be useful in attempting such analyses. Note that our analysis in this paper relied heavily on the characterization of extremal probabilistic models on 2Reg($K_{m,n}$) contextuality scenarios given by Theorem \ref{halfinteger}. More generally, it relied on the characterization of extremal probabilistic models on arbitrary contextuality scenarios given by AFLS \cite{AFLS} (see Theorem \ref{extremals}). The characterization of Theorem \ref{extremals} will be useful in attempting to study contextuality scenarios that do not fall under the purview of Theorem \ref{halfinteger}. Indeed, it would be worthwhile to obtain a characterization of contextuality scenarios that admit unique probabilistic models since these are the ones that induce extremal probabilistic models on any contextuality scenario following Theorem \ref{extremals}.
	
	\item Applications to quantum information protocols: 
	
	Given that the framework we have proposed leverages KS-uncolourability to provide noise-robust operational signatures of contextuality, there is good reason to expect that it might be relevant for quantum information tasks. In particular, it could be used to study the question of how logical proofs of the KS theorem can be leveraged to provide advantages in (possibly some variant of) state discrimination \`a la Ref.~\cite{SchmidSpek} (which does not use contextuality arising from Kochen-Specker proofs). In particular, the task of maximizing the average source-measurement correlation (the quantity ${\rm Corr}_q$) could possibly be related to minimum error state discrimination as follows: 
		for any $d$-uniform contextuality scenario with $n$ contexts, we consider $n$ ensembles of states (each denoted by source setting $S_i$) such that the average preparation procedures associated with them ($[\top|S_i]$) are all operationally equivalent and we apply the assumption of preparation noncontextuality relative to this operational equivalence. Our task then is to discriminate between elements of each ensemble under an additional constraint on the set of allowed measurements, namely, that they satisfy the operational equivalences required for KS-uncolourability and we apply measurement noncontextuality relative to these operational equivalences. Maximizing ${\rm Corr}_q$ then corresponds to maximizing the average success probability of discrimination given by ${\rm Corr}_q$ for the set of ensembles in the support of probability distribution $q$ (defined over the set of $n$ contexts). If the value of ${\rm Corr}_q$ exceeds the bound from our noise-robust noncontextuality inequality, we then have that the operational theory allows a greater success probability for minimum error state discrimination than a theory which admits a noncontextual ontological model.
	
	Note also that the possibility of realizing violations of our noise-robust noncontextuality inequalities using POVMs on lower-dimensional systems (than the ones on which KS sets exist) also makes it interesting to study this problem from such a perspective.
	
\end{enumerate}

\section*{Acknowledgement}
I would like to thank Anirudh Krishna, Rob Spekkens, Tom\'a\v s Gonda, Tobias Fritz, Andreas Winter, Giulio Chiribella, Matt Pusey, and Nuriya Nurgalieva for discussions and feedback at various stages during the writing of this paper. Thanks are also due to two anonymous reviewers for a critical reading of the paper, in particular Reviewer 1 for pointing out possible connections with the sheaf-theoretic literature on contextuality. This research was supported by
Perimeter Institute for Theoretical Physics. Research at Perimeter Institute is supported by the
Government of Canada through the Department of Innovation, Science, and Economic Development
Canada and by the Province of Ontario through the Ministry of Research, Innovation and Science.
\appendix 
\appendixpage
\addappheadtotoc
	\section{Note on matching scenarios}\label{matchingscenarios}
	Here we elaborate on the connection between the contextuality scenario 2Reg$(G)$ obtained from a graph $G$ and the matching scenario of ${\rm L}(G)$, the line graph of $G$, cf.~Ref.~\cite{AFLS}.
	
	The line graph ${\rm L}(G)$ of a graph $G$ is obtained by representing each edge of $G$ as a vertex of $L(G)$ and for each pair of edges in $G$ that have a non-empty intersection we connect the corresponding vertices in $L(G)$ by an edge.
	
	Following Ref.~\cite{AFLS}, we then have:
	\begin{theorem}
		2Reg$(G)={\rm Mat}({\rm L}(G))$, i.e., the construction of a contextuality scenario 
		from a graph $G$ under the mapping 2Reg($\cdot$) is equivalent to the construction of a matching scenario 
		obtained from the line graph of $G$, namely, ${\rm L}(G)$. 
		Also, ${\rm Dual}(2Reg(G))={\rm L}(G)$, that is, the dual graph of 2Reg$(G)$ is the line graph of $G$.
	\end{theorem}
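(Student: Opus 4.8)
The plan is to prove both assertions purely at the level of hypergraphs, by unwinding the relevant definitions and exhibiting canonical bijections between the combinatorial data on the two sides; no probabilistic models enter the argument. Recall three ingredients. First, $\mathrm{L}(G)$ has vertex set $V(\mathrm{L}(G))=E(G)$, with $e_i$ and $e_j$ joined by an edge exactly when $e_i\cap e_j\neq\varnothing$. Second, the matching scenario $\mathrm{Mat}(\Gamma)$ of a graph $\Gamma$ in the sense of Ref.~\cite{AFLS} has node set $W(\mathrm{Mat}(\Gamma))=E(\Gamma)$ and one hyperedge per vertex $v\in V(\Gamma)$, namely the set $\delta_\Gamma(v)$ of edges of $\Gamma$ incident to $v$. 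Third, the dual $\mathrm{Dual}(H)$ of a hypergraph $H=(W,F)$ interchanges nodes and hyperedges: its nodes are the hyperedges of $H$, and for each $w\in W$ it carries a hyperedge $\hat w\equiv\{f\in F:w\in f\}$.

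For $2\mathrm{Reg}(G)=\mathrm{Mat}(\mathrm{L}(G))$: by construction each node of $2\mathrm{Reg}(G)$ is an unordered pair $\{e_i,e_j\}$ of distinct adjacent edges of $G$, and by definition of the line graph this is precisely an edge of $\mathrm{L}(G)$; hence $W(2\mathrm{Reg}(G))=E(\mathrm{L}(G))=W(\mathrm{Mat}(\mathrm{L}(G)))$. Under this identification, the hyperedge $f_i$ of $2\mathrm{Reg}(G)$ indexed by $e_i\in E(G)$ consists of all nodes $\{e_i,e_j\}$ with $e_j$ adjacent to $e_i$, i.e., of all edges of $\mathrm{L}(G)$ incident to the vertex $e_i$ of $\mathrm{L}(G)$; this is exactly the hyperedge $\delta_{\mathrm{L}(G)}(e_i)$ of $\mathrm{Mat}(\mathrm{L}(G))$. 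So nodes and hyperedges match under the same bijection and the two hypergraphs are isomorphic. (Both are genuine contextuality scenarios only when $G$ has no isolated edge, equivalently $\mathrm{L}(G)$ has no isolated vertex, which is the standing assumption from the footnote on the definition of $2\mathrm{Reg}(\cdot)$.)

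For $\mathrm{Dual}(2\mathrm{Reg}(G))=\mathrm{L}(G)$ one can argue directly. The nodes of $\mathrm{Dual}(2\mathrm{Reg}(G))$ are the hyperedges of $2\mathrm{Reg}(G)$, in bijection with $E(G)=V(\mathrm{L}(G))$; its hyperedges are in bijection with the nodes of $2\mathrm{Reg}(G)$, i.e., with the adjacent edge-pairs $\{e_i,e_j\}$ of $G$. Here the key input is the $2$-regularity of $2\mathrm{Reg}(G)$ established right after the definition of the mapping: every node $\{e_i,e_j\}$ lies in exactly the two hyperedges $f_i$ and $f_j$. Hence in the dual the hyperedge $\hat w$ attached to $w=\{e_i,e_j\}$ is the two-element set $\{f_i,f_j\}$, so every hyperedge of $\mathrm{Dual}(2\mathrm{Reg}(G))$ has cardinality $2$ --- the dual is a graph --- and it contains an edge between $e_i$ and $e_j$ iff $e_i\cap e_j\neq\varnothing$ in $G$, which is exactly $\mathrm{L}(G)$. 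Equivalently, one may combine the first part with the general identity $\mathrm{Dual}(\mathrm{Mat}(\Gamma))=\Gamma$ valid for any graph $\Gamma$, since $\mathrm{Mat}$ sends vertices of $\Gamma$ to hyperedges and edges of $\Gamma$ to nodes, while $\mathrm{Dual}$ precisely undoes this exchange.

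I do not anticipate any real obstacle: the whole argument is a matter of matching incidence relations. The only points that deserve explicit mention are the well-definedness caveat ($G$ with no isolated edge) and the remark that it is the $2$-regularity of $2\mathrm{Reg}(G)$ --- already proven in the main text --- that guarantees the dual is an ordinary graph rather than a nontrivial hypergraph.
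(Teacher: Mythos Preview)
Your proof is correct and proceeds exactly as one would expect: unwind the definitions of $\mathrm{L}(\cdot)$, $\mathrm{Mat}(\cdot)$, and $\mathrm{Dual}(\cdot)$ and match the incidence data on both sides. The paper itself does not supply a proof of this theorem---it explicitly leaves the verification that $\mathrm{Mat}\circ\mathrm{L}(G)=2\mathrm{Reg}(G)$ as an exercise for the reader and does not address the dual statement beyond asserting it---so your write-up is strictly more complete than what appears in the appendix. Your caveats about isolated edges and the role of $2$-regularity in forcing the dual to be a graph are apt and line up with remarks made elsewhere in the paper.
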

	
	A matching scenario ${\rm Mat}({\rm L}(G))$ is constructed as follows: every edge of ${\rm L}(G)$ is represented by a vertex of ${\rm Mat}({\rm L}(G))$ and each hyperedge of ${\rm Mat}({\rm L}(G))$ contains those vertices of ${\rm Mat}({\rm L}(G))$ which represent edges of ${\rm L}(G)$ that have a non-empty intersection. We leave it as an exercise for the reader to verify that the composition of the two mappings ${\rm L}(.)$ and ${\rm Mat}(.)$ applied to $G$ in that order as ${\rm Mat}\circ {\rm L}(G)$ is the same as the mapping 2Reg($G$) for any $G$.
	
	Hence, 2Reg$(G)$ is a matching scenario of ${\rm L}(G)$. Matching scenarios were discussed in \cite{AFLS}. 
	Indeed, the following lemma 
	shows how matching scenarios obtained corresponding to complete graphs discussed in Ref.~\cite{AFLS}
	arise from bipartite graphs under the mapping 2Reg($\cdot$).
	\begin{lemma}
		${\rm L}(K_{1,n})$ is the complete graph $K_n$. 2Reg$(K_{1,n})={\rm Mat}({\rm L}(K_{1,n}))$ is therefore the matching scenario ${\rm Mat}_n$
		of AFLS \cite{AFLS}.
	\end{lemma}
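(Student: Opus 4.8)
The plan is to compute the line graph ${\rm L}(K_{1,n})$ explicitly and then feed the result into the already-established identity $2{\rm Reg}(G)={\rm Mat}({\rm L}(G))$. First I would recall that $K_{1,n}$ is the star graph: a single central vertex $v$ joined to $n$ leaves $t_1,\dots,t_n$, with edge set $\{e_j\equiv\{v,t_j\}\mid j\in\{1,\dots,n\}\}$. By the definition of the line graph, ${\rm L}(K_{1,n})$ has one vertex for each edge $e_j$, and the vertices corresponding to $e_i$ and $e_j$ (with $i\neq j$) are adjacent precisely when $e_i\cap e_j\neq\varnothing$ in $K_{1,n}$. Since every edge $e_j$ contains the central vertex $v$, we have $e_i\cap e_j=\{v\}\neq\varnothing$ for all $i\neq j$, so every pair of distinct vertices of ${\rm L}(K_{1,n})$ is adjacent; that is, ${\rm L}(K_{1,n})$ is the complete graph $K_n$ on $n$ vertices.

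Next I would apply the theorem $2{\rm Reg}(G)={\rm Mat}({\rm L}(G))$ (stated immediately above) with $G=K_{1,n}$ to obtain $2{\rm Reg}(K_{1,n})={\rm Mat}({\rm L}(K_{1,n}))={\rm Mat}(K_n)$. It then remains only to recognise ${\rm Mat}(K_n)$ as the matching scenario ${\rm Mat}_n$ of AFLS \cite{AFLS}. This is a matter of unwinding both definitions: ${\rm Mat}(K_n)$ has one node per edge of $K_n$ (hence $\binom{n}{2}$ nodes), one hyperedge per vertex of $K_n$ (hence $n$ hyperedges), and the node labelled by the edge $\{i,j\}$ lies in exactly the two hyperedges labelled $i$ and $j$ --- which is verbatim the construction of ${\rm Mat}_n$ in Ref.~\cite{AFLS}. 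As a consistency check, these are precisely the parameters of $2{\rm Reg}(K_{1,n})$ recorded earlier in the paper: $n$ hyperedges, $(m-1)+(n-1)=n-1$ nodes in each hyperedge, $\tfrac{mn((m-1)+(n-1))}{2}=\tfrac{n(n-1)}{2}$ nodes in all, and $2$-regular.

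There is essentially no substantive obstacle here. The only point needing a little care is the bookkeeping in the final identification --- confirming that the convention adopted for ${\rm Mat}(\cdot)$ in this paper (nodes $\leftrightarrow$ edges of the graph, hyperedges $\leftrightarrow$ vertices of the graph, with incidence inherited from the graph) matches the convention AFLS use to define ${\rm Mat}_n$. Modulo that purely definitional check, the lemma follows from the line-graph computation together with a single application of the preceding theorem.
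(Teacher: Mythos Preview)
Your proof is correct and complete. The paper itself does not give a proof of this lemma; it is stated in Appendix~\ref{matchingscenarios} immediately after the theorem $2{\rm Reg}(G)={\rm Mat}({\rm L}(G))$ as an evident consequence, with the reader referred to Sec.~9.4 of Ref.~\cite{AFLS} for details on matching scenarios. Your argument --- computing ${\rm L}(K_{1,n})=K_n$ directly from the fact that all edges of the star share the central vertex, then invoking the preceding theorem and unwinding the definition of ${\rm Mat}(K_n)$ --- is exactly the verification the paper leaves implicit, and your consistency check against the node/hyperedge counts of $2{\rm Reg}(K_{1,n})$ is a nice touch.
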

	We refer the reader to Sec.~9.4 of Ref.~\cite{AFLS} for further details on matching scenarios. We mention the connection to matching scenarios of L$(G)$ here only for completeness and, as such, they are not discussed in much detail in the main text.
	We infer properties of 2Reg$(G)$
	directly from $G$ instead of considering an intermediate graph ${\rm L}(G)$,
	since $G$ has a much more compact representation than $L(G)$ and the edge-hyperedge
	correspondence between $G$ and 2Reg$(G)$ can be exploited to understand the structure of 2Reg$(G)$ and possible probabilistic models on it. This in turn let us obtain our noncontextuality inequalities for such scenarios.

\section{Can the noise-robust noncontextuality inequalities of Eq.~\eqref{ncineqschematic} be saturated?}\label{NCmodels}
It remains an open question whether all noise-robust noncontextuality inequalities of the form in Eq.~\eqref{ncineqschematic} can be saturated by a noncontextual ontological model. We will, however, describe below a procedure for constructing a noncontextual ontological model saturating these inequalities {\em when certain conditions are satisfied.} (Specifically, Eq.~\eqref{phenconstraint} below.)\footnote{We do not know of concrete examples where these conditions might be satisfied but we consider it worthwhile to mention how a noncontextual model could be constructed, should the reader be interested in investigating the feasibility of our proposal for particular cases. Of course, a better solution would be to come up with a different recipe for a construction that works for any inequality of the form in Eq.~\eqref{ncineqschematic}.}

For each noise-robust noncontextuality inequality obtained from a KS-uncolourable contextuality scenario, given by
\begin{eqnarray}
&&{\rm Corr}_q \equiv  \sum_{i=1}^nq_i\sum_{x=1}^dp(m_i=x,s_i=x|M_i,S_i)\nonumber\\
&\leq& \beta(\Gamma,q),
\end{eqnarray}
we can sometimes construct a straightforward noncontextual ontological model that saturates it as long as some conditions are satisfied. If it exists, such a model would obviously also satisfy (if not saturate) any other noise-robust noncontextuality inequality. To see how this straightforward construction works, we first write ${\rm Corr}_q$ in terms of an ontological model as 

\begin{align}
&{\rm Corr}_q\nonumber\\
=&\sum_{i=1}^nq_i\sum_{x=1}^d\sum_{\lambda\in\Lambda}\xi(m_i=x|M_i,\lambda)\mu(s_i=x|S_i,\lambda)\mu(\lambda|S_i),\nonumber\\
=&\sum_{i=1}^nq_i\sum_{x=1}^d\sum_{\lambda\in\Lambda}\xi(m_i=x|M_i,\lambda)\mu(s_i=x|S_i,\lambda)\nu(\lambda),\nonumber\\
(&\textrm{using preparation noncontextuality, }\nonumber\\ &\mu(\lambda|S_i)=\nu(\lambda)\quad\forall\lambda, i).
\end{align}

We note the following constraint (independent of noncontextuality) on the ontological representation of preparations that needs to hold in the model:
	\begin{align}
	\mu(s_i|S_i,\lambda)\nu(\lambda)=\mu(\lambda|S_i,s_i)p(s_i|S_i).
	\end{align}
	This constraint implies the following phenomenological constraint on the ontological model:
	\begin{align}\label{phenconstraint}
	\forall [s_i|S_i]: \sum_{\lambda\in\Lambda}\mu(s_i|S_i,\lambda)\nu(\lambda)=p(s_i|S_i).
	\end{align}

The first step in constructing the model is to choose $\mu(s_i|S_i,\lambda)=\delta_{s_i,y}$, where $y$ is such that $$\max_{m_i}\xi(m_i|M_i,\lambda)=\xi(m_i=y|M_i,\lambda)\equiv\zeta(M_i,\lambda).$$

We do this for every $S_i, i\in\{1,2,\dots,n\}$, so that we now have 
\begin{equation}
{\rm Corr}_q=\sum_{\lambda\in\Lambda}\left(\sum_{i=1}^nq_i\zeta(M_i,\lambda)\right)\nu(\lambda).
\end{equation}
Note that in all this, the response functions satisfy the assumption of measurement noncontextuality.

Let us denote by $\Lambda_{\rm detP}$ the following set of $\lambda\in\Lambda$:
	\begin{equation}
	\Lambda_{\rm detP}\equiv\{\lambda\in\Lambda\big|\mu(s_i|S_i,\lambda)=1 \textrm{ for some } [s_i|S_i]\},
	\end{equation}
	where $s_i\in [d], i\in[n]$ for any $[s_i|S_i]$.

Next, we choose $\nu(\lambda)$ in such a way that the inequality is saturated: in particular, we take $\nu(\lambda)$ to be supported entirely over the set $$\Lambda_{\max}\equiv\{\lambda\in\Lambda\big| \sum_{i=1}^nq_i\zeta(M_i,\lambda)=\beta(\Gamma,q)\},$$
so that $\nu(\lambda)=0$ for all $\lambda\in\Lambda\backslash\Lambda_{\max}$. Recalling the constraint of Eq.~\eqref{phenconstraint}, that
	\begin{align}
	\forall [s_i|S_i]: \sum_{\lambda\in\Lambda}\mu(s_i|S_i,\lambda)\nu(\lambda)=p(s_i|S_i),
	\end{align}
	we have that our choice of $\mu(s_i|S_i,\lambda)$ and $\nu(\lambda)$ above should satisfy the property that
	\begin{equation}\label{weakcondn}
	\Lambda_{\rm detP}\cap \Lambda_{\max}\neq \varnothing,
	\end{equation}
	since otherwise we cannot satisfy the constraint of Eq.~\eqref{phenconstraint}. If there exist no $\lambda$ that lie in $\Lambda_{\rm detP}\cap\Lambda_{\max}$, then we cannot proceed with our construction of the noncontextual ontological model. If Eq.~\eqref{weakcondn} is satisfied, then our choices of $\mu(s_i|S_i,\lambda)$ and $\nu(\lambda)$ must further satisfy the constraint of Eq.~\eqref{phenconstraint}. Again, if this cannot be done, our construction cannot proceed.

Assuming we have satisfied the constraint of Eq.~\eqref{phenconstraint} while choosing $\mu(s_i|S_i,\lambda)$ and $\nu(\lambda)$ as prescribed above, we then have
\begin{equation}
{\rm Corr}_q=\beta(\Gamma,q)
\end{equation}
and the noise-robust noncontextuality inequality is thus saturated.\footnote{What we haven't shown above is a concrete example of a scenario where a noncontextual ontological model following our prescription can actually be constructed and, to that extent, we haven't contributed much to proving saturation. We have merely suggested a possible route towards it.}


\begin{thebibliography}{5}
	
	\bibitem{KS67} S.\ Kochen and E.\ P.\ Specker, ``The Problem of
	Hidden Variables in Quantum Mechanics", \href{https://doi.org/10.1512/iumj.1968.17.17004}{J. Math. Mech. {\bf 17},
		59 (1967)}. Also available at \href{https://www.jstor.org/stable/24902153}{JSTOR}.
	
	\bibitem{meyer99} D.~A.~Meyer, ``Finite Precision Measurement Nullifies the Kochen-Specker Theorem", \href{https://doi.org/10.1103/PhysRevLett.83.3751}{Phys. Rev. Lett. {\bf 83}, 3751 (1999)}.
	
	\bibitem{kent99} A.~Kent, ``Noncontextual Hidden Variables and Physical Measurements", \href{https://doi.org/10.1103/PhysRevLett.83.3755}{Phys. Rev. Lett. {\bf 83}, 3755 (1999)}.
	
	\bibitem{cliftonkent00}R.~Clifton and A.~Kent, ``Simulating quantum mechanics by non-contextual hidden variables", \href{https://doi.org/10.1098/rspa.2000.0604}{Proc. R. Soc. Lond. A: Vol.~456, 2101-2114 (2000)}.
	
	\bibitem{barrettkent04}J.~Barrett and A.~Kent, ``Non-contextuality, finite precision measurement and the Kochen-Specker theorem", \href{https://doi.org/10.1016/j.shpsb.2003.10.003}{Stud.~Hist.~Philos.~Mod.~Phys.~{\bf 35}, 151 (2004)}.
	
	\bibitem{KunjSpek}R.\ Kunjwal and R.\ W.\ Spekkens,
	``From the Kochen-Specker Theorem to Noncontextuality Inequalities without Assuming Determinism", \href{https://doi.org/10.1103/PhysRevLett.115.110403}{
		Phys. Rev. Lett. {\bf 115}, 110403 (2015)}.
	
	\bibitem{exptlpaper} M. D. Mazurek, M. F. Pusey, R. Kunjwal, K. J. Resch, R. W. Spekkens, ``An experimental test of noncontextuality
	without unphysical idealizations", \href{https://doi.org/10.1038/ncomms11780}{Nat. Commun. {\bf 7}, 11780 (2016)}.
	
	\bibitem{pusey} M.~F.~Pusey, ``Robust preparation noncontextuality inequalities in the simplest scenario", \href{https://doi.org/10.1103/PhysRevA.98.022112}{Phys. Rev. A 98, 022112 (2018)}.
	
	\bibitem{peresmerminksw} A.~Krishna, R.~W.~Spekkens, and E.~Wolfe,
	``Deriving robust noncontextuality inequalities from algebraic proofs of the Kochen-Specker theorem: the Peres-Mermin square'',
	\href{https://doi.org/10.1088/1367-2630/aa9168}{New J.~Phys {\bf 19}, 123031 (2017)}.
	
	\bibitem{KunjSpek17} R.~Kunjwal and R.~W.~Spekkens, ``From statistical proofs of the Kochen-Specker theorem to noise-robust noncontextuality inequalities", \href{https://doi.org/10.1103/PhysRevA.97.052110}{Phys.~Rev.~A {\bf 97}, 052110 (2018)}.
	
	\bibitem{schmidWS} D.~Schmid, R.~W.~Spekkens, and E.~Wolfe, ``All the noncontextuality inequalities for arbitrary prepare-and-measure experiments with respect to any fixed set of operational equivalences", \href{https://doi.org/10.1103/PhysRevA.97.062103}{Phys.~Rev.~A {\bf 97}, 062103 (2018)}.
	
	\bibitem{kunjwalthesis} R.~Kunjwal, ``Contextuality beyond the Kochen-Specker theorem", \href{https://arxiv.org/abs/1612.07250}{arXiv:1612.07250 [quant-ph] (2016)}.
	
		\bibitem{Spe05} R.\ W.\ Spekkens, ``Contextuality for preparations, transformations, and unsharp measurements'',
		\href{https://doi.org/10.1103/PhysRevA.71.052108}{Phys. Rev. A {\bf 71}, 052108 (2005)}.
	
	
		\bibitem{robustcsw} R.~Kunjwal, ``Beyond the Cabello-Severini-Winter framework: Making sense of contextuality without sharpness of measurements", 	\href{https://doi.org/10.22331/q-2019-09-09-184}{Quantum 3, 184 (2019)}.
	
	\bibitem{odum} R.~W.~Spekkens, ``The Status of Determinism in Proofs of the Impossibility of a Noncontextual Model of Quantum Theory", \href{https://doi.org/10.1007/s10701-014-9833-x}{Found. Phys. {\bf 44}, 1125-1155 (2014)}.
	
	\bibitem{finegen} R. Kunjwal, ``Fine's theorem, noncontextuality, and correlations in Specker's scenario", \href{https://doi.org/10.1103/PhysRevA.91.022108}{Phys. Rev. A {\bf 91}, 022108 (2015)}.
		
	\bibitem{fine} A. Fine, ``Hidden Variables, Joint Probability, and the Bell
	Inequalities", \href{https://doi.org/10.1103/PhysRevLett.48.291}{Phys. Rev. Lett. {\bf 48}, 291 (1982)}.
	
	\bibitem{Spekkenstalk} R.~W.~Spekkens, ``Leibniz's principle of the identity of indiscernibles as a foundational principle for quantum theory", Information-Theoretic Interpretations of Quantum Mechanics (2016), University of Western Ontario, available at \url{https://www.youtube.com/watch?v=HWOkjisIxc4}.
	
	\bibitem{leibnizpaper} R.~W.~Spekkens, ``The ontological identity of empirical indiscernibles: Leibniz's methodological principle and its significance in the work of Einstein", \href{https://arxiv.org/abs/1909.04628}{arXiv:1909.04628 [physics.hist-ph] (2019).}
	
	\bibitem{Cabello18ray} A.\ Cabello, Adan, J.\ Estebaranz, and G.\ Garcia-Alcaine,
	``Bell-Kochen-Specker theorem: A proof with 18 vectors,'' \href{https://doi.org/10.1016/0375-9601(96)00134-X}{Phys.~Lett.~A {\bf 212}, 183  (1996)}.
	
	\bibitem{peres}A.~Peres, ``Two simple proofs of the Kochen-Specker theorem", \href{https://doi.org/10.1088/0305-4470/24/4/003}{J.~Phys.~A {\bf 24}, L175 (1991)}.
	
	\bibitem{mermin}N.~D.~Mermin, ``Hidden variables and the two theorems of John Bell", \href{https://doi.org/10.1103/RevModPhys.65.803}{Rev.~Mod.~Phys.~65, 803 (1993)}.
	
	\bibitem{KCBS}A.~A.~Klyachko, M.~A.~Can, S.~Binicio\u{g}lu, and A.~S.~Shumovsky, ``Simple Test for Hidden Variables in Spin-1 Systems", \href{https://doi.org/10.1103/PhysRevLett.101.020403}{Phys. Rev. Lett. {\bf 101}, 020403 (2008)}.
	
	\bibitem{CSW} A.~Cabello, S.~Severini, and A.~Winter, ``Graph-Theoretic Approach to Quantum Correlations", \href{https://doi.org/10.1103/PhysRevLett.112.040401}{Phys. Rev. Lett. {\bf 112}, 040401 (2014)}.
	
	\bibitem{AFLS} A.\ Ac\'in, T.\ Fritz, A.\ Leverrier, and A.\ B.\ Sainz, A Combinatorial Approach to Nonlocality and Contextuality, \href{https://doi.org/10.1007/s00220-014-2260-1}{Comm. Math. Phys. 334(2), 533-628 (2015)}.
	
	\bibitem{AB} S.~Abramsky and A.~Brandenburger, ``The sheaf-theoretic structure of non-locality and contextuality", \href{
		https://doi.org/10.1088/1367-2630/13/11/113036}{New J.~Phys. {\bf 13}, 113036 (2011)}.
	
	\bibitem{nspolytopes} S.~Abramsky, R.~S.~Barbosa, K.~Kishida, R.~Lal, and S.~Mansfield, ``Possibilities determine the combinatorial structure of probability polytopes", \href{https://doi.org/10.1016/j.jmp.2016.03.006}{Journal of Mathematical Psychology, Volume 74, Pages 58-65 (2016)}.
	
	
	\bibitem{mansbosa} S.~Mansfield and R.~S.~Barbosa, ``Extendability in the sheaf-theoretic approach: Construction of Bell models from Kochen-Specker models", \href{https://arxiv.org/abs/1402.4827}{arXiv:1402.4827 [quant-ph] (2014)}.
	
	\bibitem{KHF} R.~Kunjwal, C.~Heunen, and T.~Fritz, ``Quantum realization of arbitrary joint measurability structures", \href{https://doi.org/10.1103/PhysRevA.89.052126}{Phys. Rev. A {\bf 89}, 052126 (2014)}.
	
	\bibitem{gondaetal} T.~Gonda, R.~Kunjwal, D.~Schmid, E.~Wolfe, and A.~B.~Sainz, ``Almost Quantum Correlations are Inconsistent with Specker's Principle", \href{https://doi.org/10.22331/q-2018-08-27-87}{Quantum {\bf 2}, 87 (2018)}.
	
	\bibitem{CabelloK15} A.~Cabello, ``Twin inequality for fully contextual quantum correlations", \href{https://doi.org/10.1103/PhysRevA.87.010104}{Phys. Rev. A 87, 010104(R) (2013)}.
	
	\bibitem{CabelloK17} P.~Lison\v{e}k, P.~Badziag, J.~R.~Portillo, and A.~Cabello, ``Kochen-Specker set with seven contexts", \href{https://doi.org/10.1103/PhysRevA.89.042101}{Phys. Rev. A 89, 042101 (2014)}.
	
	\bibitem{MMP} M.~Pavicic, J-P.~Merlet, B.~McKay, and N.~D.~Megill, ``Kochen-Specker vectors", \href{
		https://doi.org/10.1088/0305-4470/38/7/013}{
	J. Phys. A: Math. Gen. {\bf 38} 1577 (2005)}.
	
	\bibitem{mansthesis} S.~Mansfield, ``The Mathematical Structure of Non-locality and Contextuality", \href{https://ora.ox.ac.uk/objects/uuid:394bb375-db3f-4a12-bdd8-cd1ab5809573}{PhD thesis, University of Oxford (2013)}.
	
	\bibitem{caru} G.~Car\`u, ``Towards a complete cohomology invariant for non-locality and contextuality", \href{https://arxiv.org/abs/1807.04203}{arXiv:1807.04203 [quant-ph] (2018)}.
	
	\bibitem{anithesis} A.~Krishna, ``Experimentally Testable Noncontextuality Inequalities Via Fourier-Motzkin Elimination", \href{http://hdl.handle.net/10012/9752}{M.Sc. Thesis, University of Waterloo (2015)}.
	
	\bibitem{SchmidSpek} D.~Schmid and R.~W.~Spekkens, ``Contextual Advantage for State Discrimination", \href{https://doi.org/10.1103/PhysRevX.8.011015}{Phys. Rev. X {\bf 8}, 011015 (2018)}.

\end{thebibliography}
\end{document}